\theoremstyle{definition}
\newtheorem{lemma}{Lemma}
\newcommand{\acro}[1]{\textsc{#1}\xspace }
\newcommand{\LASSO}{\acro{\smaller LASSO}}
\newcommand{\GLASSO}{\acro{\smaller GLASSO}}
\newcommand{\GOLAZO}{\acro{\smaller GOLAZO}}
\newcommand{\BIC}{\acro{\smaller BIC}}
\newcommand{\EBIC}{\acro{\smaller EBIC}}
\newcommand{\MCMC}{\acro{\smaller MCMC}}
\newcommand{\COVID}{\acro{\smaller COVID-19}}
\newcommand{\siGGM}{\acro{\smaller siGGM}} 
\newcommand{\ESS}{\acro{\smaller ESS}}
\newcommand{\R}{\textit{R}}
\def\*#1{\bm{#1}} %\def\*#1{#1}
\DeclareMathOperator*{\argmax}{arg\,max} % Jan Hlavacek
\DeclareMathOperator*{\argmin}{arg\,min}
\def\@fnsymbol#1{\ensuremath{\ifcase#1\or *\or \dagger\or \ddagger\or
   \mathsection\or \mathparagraph\or \|\or **\or \dagger\dagger
   \or \ddagger\ddagger \else\@ctrerr\fi}}
\newcommand{\ssymbol}[1]{^{\@fnsymbol{#1}}}
\let\OLDthebibliography\thebibliography
\renewcommand\thebibliography[1]{
  \OLDthebibliography{#1}
  \setlength{\parskip}{0pt}
  \setlength{\itemsep}{3pt plus 0.3ex}
}
\def\*#1{\bm{#1}} %\def\*#1{#1}
\title{\bf  {Graphical model inference with external network data}}
\author[1,2,*]{Jack Jewson}
\author[3]{Li Li}
\author[2,4]{Laura Battaglia}
\author[5]{Stephen Hansen}
\author[1,2]{David Rossell}
\author[1,2,6]{Piotr Zwiernik}
\affil[1]{Department of Business and Economics, Universitat Pompeu Fabra, Barcelona, Spain}
\affil[2]{Data Science Center, Barcelona School of Economics, Spain}
\affil[3]{School of Economics, Sichuan University, China}
\affil[4]{Department of Statistics, University of Oxford, UK}
\affil[5]{Department of Economics, University College London, UK}
\affil[6]{Department of Statistical Sciences, University of Toronto, Canada}
\affil[*]{Correspondence address jack.jewson@upf.edu}
\date{November 2023}
\begin{document}
 
%\doparttoc % Tell to minitoc to generate a toc for the parts
%\faketableofcontents % Run a fake tableofcontents command for the partocs
 
%\part{} % Start the document part
%\parttoc % Insert the document TOC
 
%\bibliographystyle{natbib}
 
\def\spacingset#1{\renewcommand{\baselinestretch}%
{#1}\small\normalsize} \spacingset{1}

\setcounter{Maxaffil}{0}
\renewcommand\Affilfont{\itshape\small}

\spacingset{1.42} % DON'T change the spacing!

\maketitle
\begin{abstract}
We consider two applications where we study how dependence structure between many  variables is linked to external network data.
We first study the interplay between social media connectedness and the co-evolution of the \COVID pandemic across USA counties.
We next study study how the dependence between stock market returns across firms relates to similarities in economic and policy indicators from text regulatory filings.
Both applications are modelled via Gaussian graphical models where one has external network data.
We develop spike-and-slab and graphical \LASSO frameworks to integrate the network data, both facilitating the interpretation of the graphical model and improving inference.
The goal is to detect when the network data relates to the graphical model and, if so, explain how.
We found that counties strongly connected on Facebook are more likely to have similar \COVID evolution (positive partial correlations), accounting for various factors driving the mean.
We also found that the association in stock market returns depends in a stronger fashion on economic than on policy indicators.
The examples show that data integration can improve interpretation, statistical accuracy, and out-of-sample prediction, in some instances using significantly sparser graphical models.
\end{abstract}

\noindent % 
{\it Keywords:}  Graphical model, Network data, Spike-and-slab, COVID-19, Stock market, Social media
%\noindent%
%{\it Keywords:}  3 to 6 keywords, that do not appear in the title
%\vfill
 
\section{Introduction} \label{sec:intro}
 
We consider two motivating applications where one seeks to learn the dependence structure (partial correlations) across many variables, and is specifically interested in assessing whether said dependence is associated to multiple external network datasets.
First, we study the dependence between \COVID infection rates across USA counties, and whether said dependence is linked to network data measuring Facebook connections between counties.
This is an important question because individuals who are connected in social networks tend to have similar backgrounds and to be exposed to similar information. Such a shared background may lead to similar attitudes towards health prevention, and hence similar infection risks. 
For example, \cite{allcott2020polarization} found that political beliefs were strongly tied to behaviour during the \COVID pandemic, more specifically that Republicans practised less social distancing.
It is hence important to study the association between social media and health outcomes.
As described in more detail below, a study by \cite{kuchler:2021} found a link between {\it marginal correlations} in infection rates between counties and the Facebook index.
We propose a probability model that can describe whether and how {\it partial correlations} depend on said index, as well as two other networks related to geographical distance and flights passenger traffic. The latter two are meant to help disentangle the effect of two counties being connected on Facebook and their being geographically close or their being major travel between them, i.e. more direct contacts.
As a preview of our findings, Figure \ref{fig:glasso_covid} (top) shows estimated (residual) partial correlations between each county pair vs. their geographical closeness and the Facebook index, Figure B.6 contains the corresponding plot for the flights  network.
Counties that are highly connected on Facebook have a higher proportion of positive partial correlations, whereas for those lowly connected most non-zero partial correlations are negative. Geographically close counties also tend to have positive partial correlations while there does not appear to be a strong relationship between the estimated partial correlations and the flight passenger network.
The bottom panels show our spike-and-slab model relating the network data to the probability that a partial correlation is non-zero, and to the mean and variance of the non-zero partial correlations.

As a second application, we study the dependence of stock market excess returns across companies, incorporating external data on similarities between companies in their exposure to economic and policy risks (as defined by \cite{bakerPolicyNewsStock2019}). Said risks were extracted from text data in mandatory regulatory filings where companies must disclose potential risks, and the idea is that if two companies disclosed similar economy- or policy-related risks then it may be more likely that they have similar stock market returns. 
The applied relevance of the problem arises from the longstanding insight in finance  that the dependence among assets informs optimal portfolios \citep{markowitzPortfolioSelection1952}.  In particular, the precision matrix determines the weights across assets that minimise a portfolio's standard deviation.  Bringing optimal portfolio theory to data requires estimating high-dimensional covariance/precision matrices, which is an important barrier to its practical application \citep{eltonEstimatingDependenceStructure1973}.  A variety of approaches have been used to tackle the problem including, recently, \GLASSO \citep{gotoImprovingMeanVariance2015},
see also \citet{senneretCovariancePrecisionMatrix2016} for an empirical review.
%Moreover, \citet{senneretCovariancePrecisionMatrix2016} compare several approaches for the empirical implementation of the theory and find \GLASSO to be competitive with other approaches in the financial econometrics literature.  
A critical observation is that we seek not only to estimate the partial correlations featuring in the precision matrix, but also to portray how they may depend on the text-based economic and policy risks, to shed light onto the joint behavior of stock market returns.
%To the best of our knowledge, such a study has never been conducted.

\begin{figure}[!ht]
\begin{center}
\includegraphics[width =0.49\linewidth]{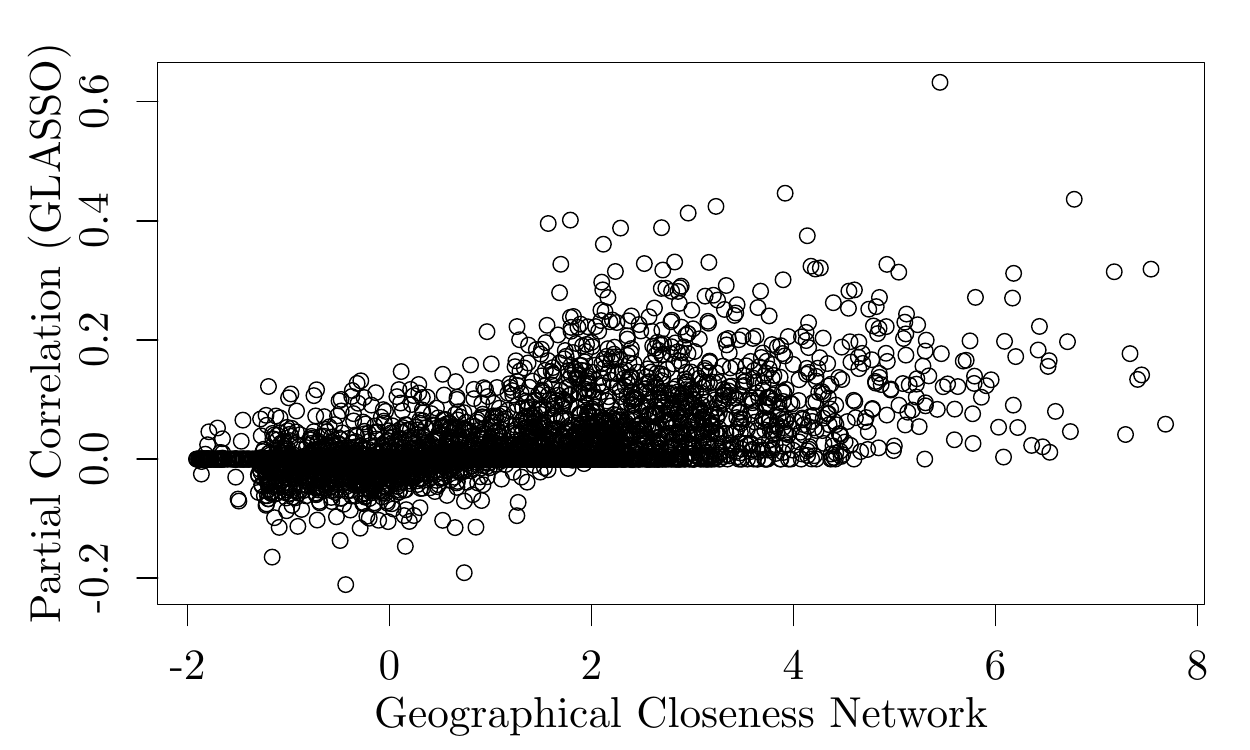}
\includegraphics[width =0.49\linewidth]{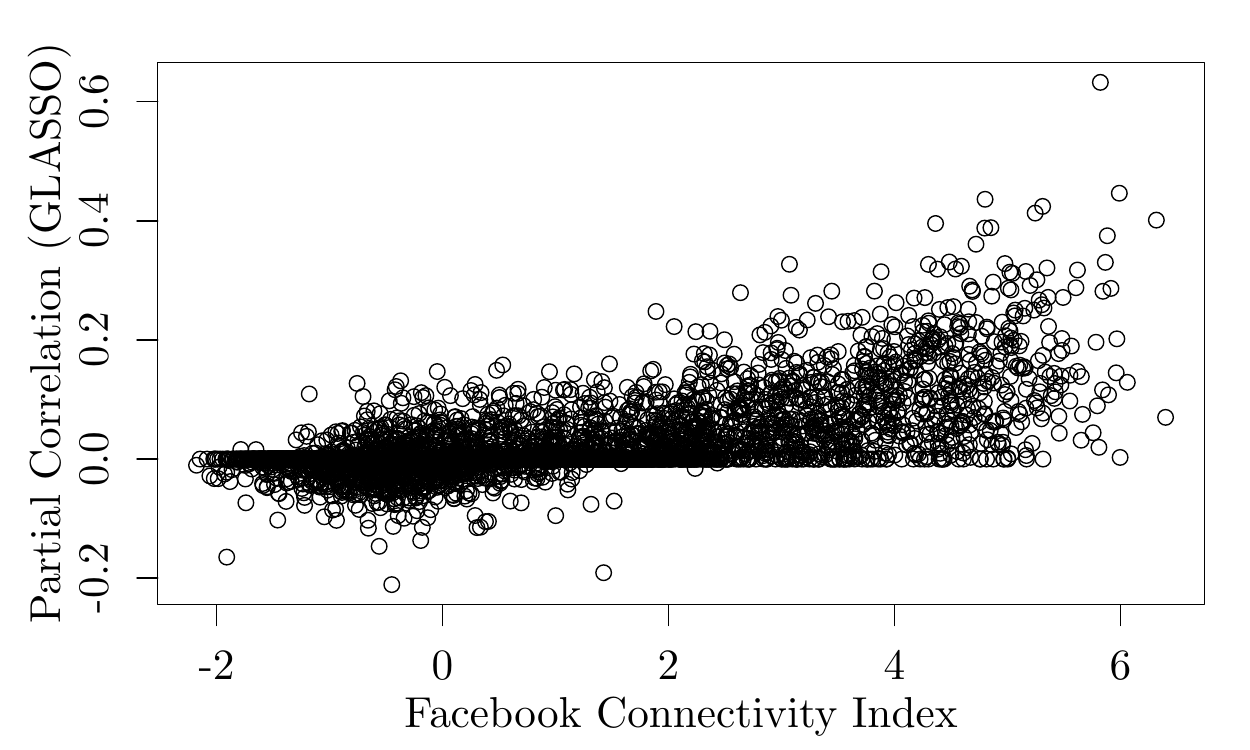}\\
\includegraphics[width =0.49\linewidth]{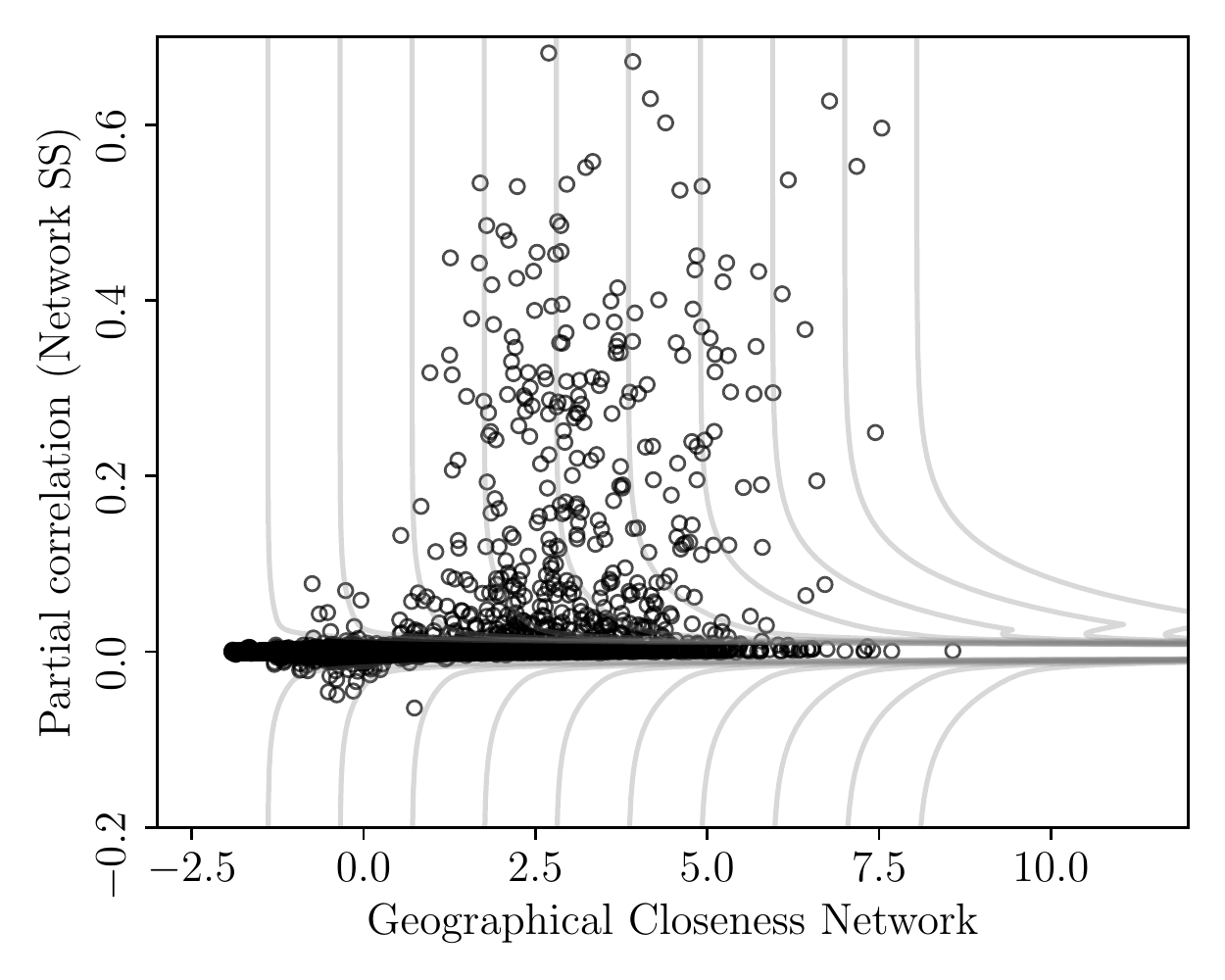}
\includegraphics[width =0.49\linewidth]{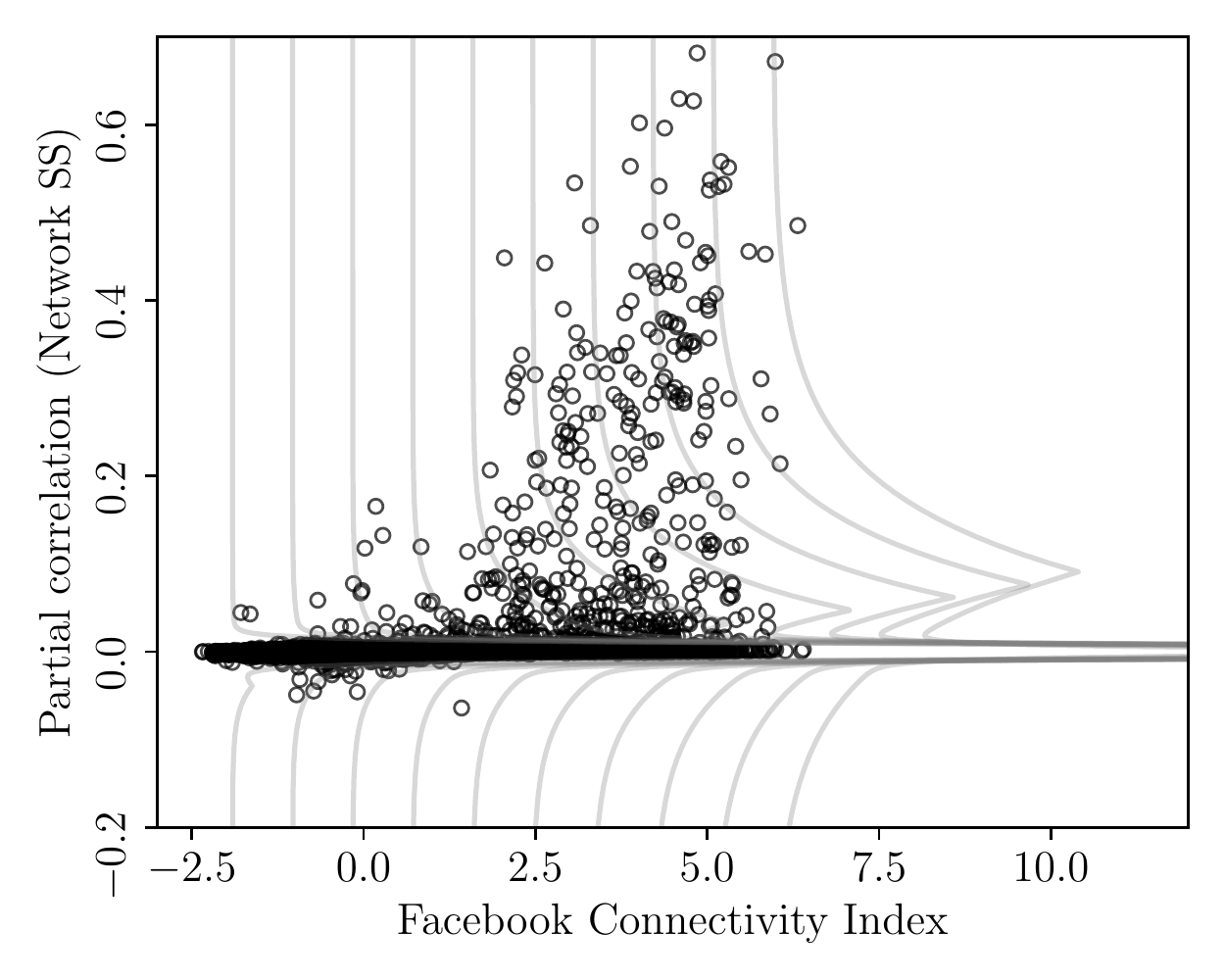}\\
%\includegraphics[width =0.49\linewidth]{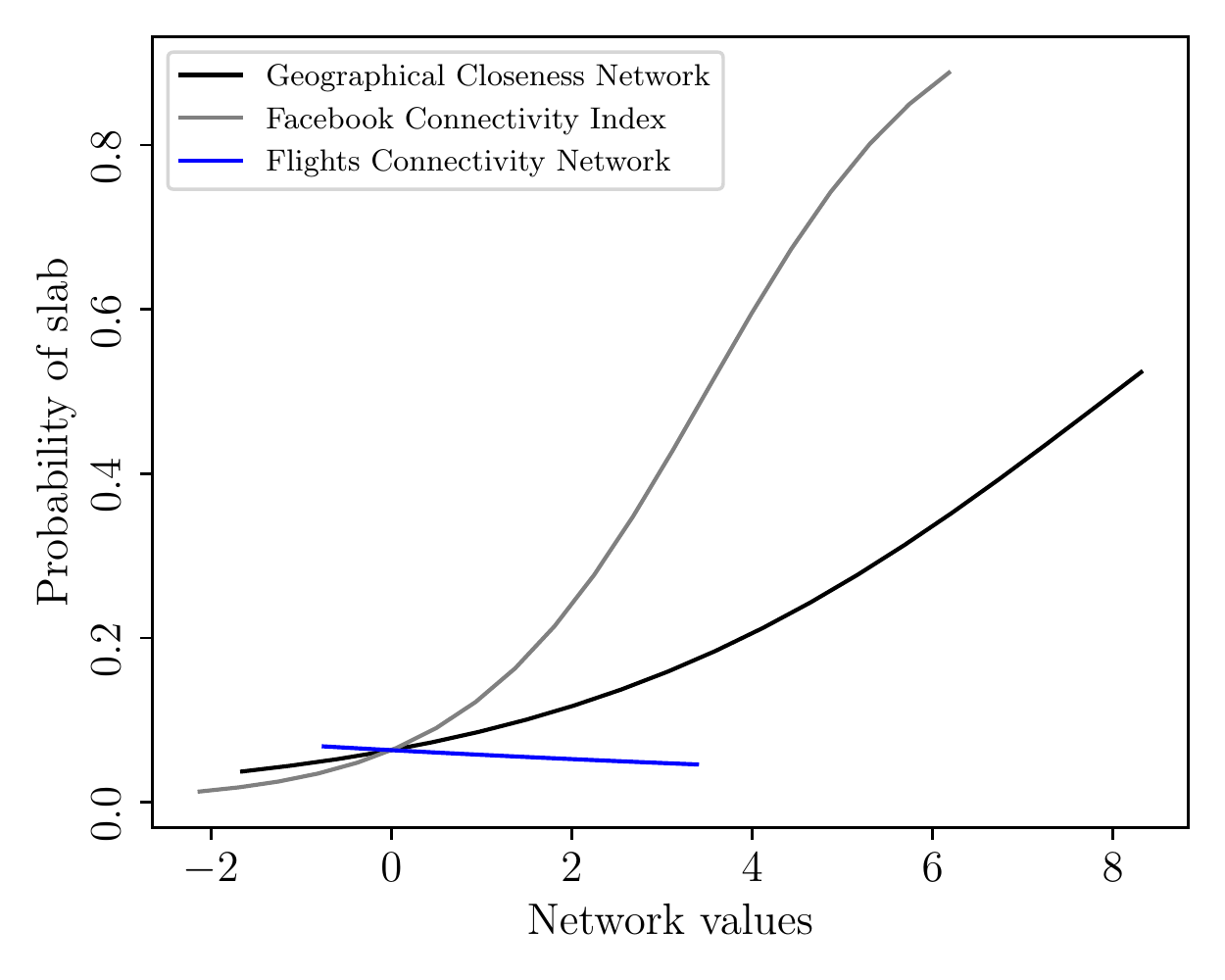}
%This plot went in the appendix 
%\includegraphics[width =0.5\linewidth]{plot/COVID_new/partial corr vs facebook.pdf} 
%\includegraphics[width =0.5\linewidth]{plot/COVID_new/partial corr vs geodist.pdf}
%\includegraphics[width =0.49\linewidth]{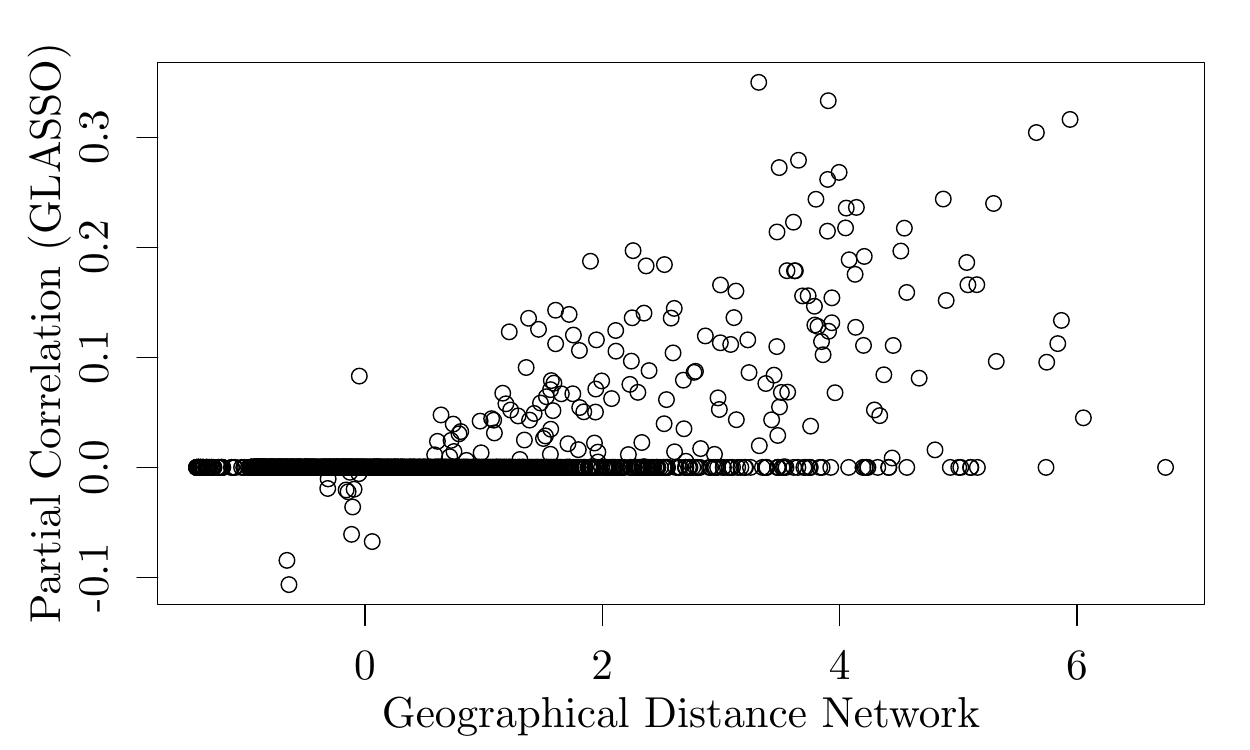}
%\includegraphics[width =0.49\linewidth]{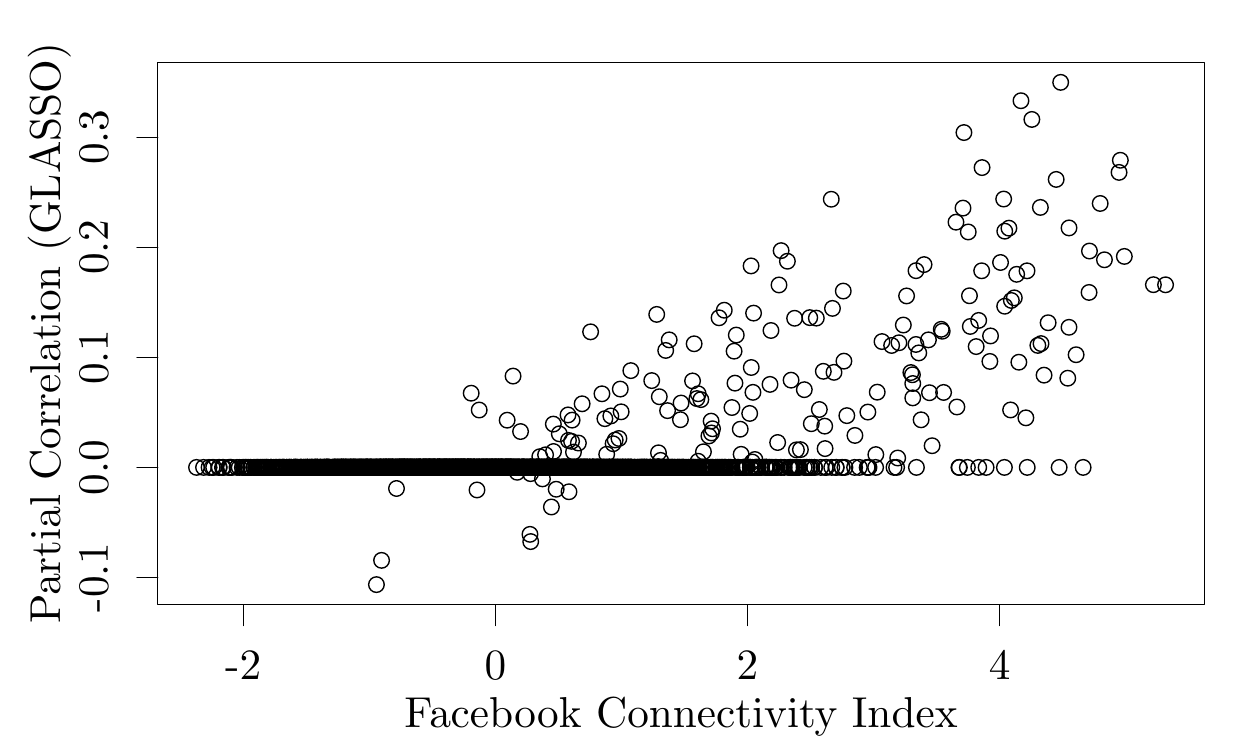}
\caption{Residual partial correlations in \COVID infections (adjusted for covariates) across counties vs Geographical Closeness Network defined as $ 1/\log(Geodistance)$ (left) and log-Facebook Connectivity Index (right). Top panel: partial correlations estimated with graphical LASSO, with penalization parameter set via \BIC. % \EBIC $\gamma_{\EBIC} = 0.5$. 
Bottom panel: fitted spike-and-slab distributions and fitted partial correlations estimated with network graphical spike-and-slab LASSO. %Bottom panel: prior slab probability as a function of both networks.
}
\label{fig:glasso_covid}
\end{center}
\end{figure}
 
We use Gaussian graphical models (GGMs) and extensions discussed later as a convenient framework that describes the dependence among random variables in an interpretable manner, providing a suitable basis for our applications.
There are however certain challenges that led us to develop a methodological framework that is another main contribution of this paper, and can be applied to numerous applications other than those considered here.

A first applied challenge is that the ease with which one can interpret the output of a graphical model deteriorates as the number of variables $p$ gets large, i.e. there are simply too many edges to read them one by one.
Our proposed model provides a way to regress the probability of an edge being present, as well as the mean and variance of the associated (non-zero) partial correlation, on external network data. Said regression helps understand when one can expect an edge to be present, and to have a certain sign and magnitude, as illustrated in Figure \ref{fig:glasso_covid}.
A second challenge is that in our applications the sample size $n$ is moderate relative to the $p(p+1)/2$ covariance parameters.
By integrating external network data one hopes to improve the accuracy of the inference, provided said data carries useful information regarding the graphical model. Our framework provides natural novel strategies to assess whether the network data is indeed useful.

%Despite numerous theoretical and methodological advances, an important practical limitation is that they require the estimation of an inherently large number of parameters, which can be challenging unless the sample size is large enough. The main idea behind our work is that there are numerous applications where external data provides valuable information to help guide the graphical model selection and estimation, and hence improve their accuracy. 
%We propose a frequentist and a Bayesian framework to exploit this complementary information.
%This proposes a new paradigm in high-dimensional statistics based on exploiting as much as possible the complementary information. %DAVID: this is not a new paradigm, strictly speaking, people from the data integration literature have used similar ideas in regression for example (discussed in our intro)
 
%Of particular interest to us are situations where the external data comes in the form of a network between the variables being studied.
%\stephen{I wonder if this gets into our specific application too soon, maybe mention a few relevant examples and point out the big challenge is how to incorporate network data into estimation?  And that we provide a means for doing so.  The COVID-19 thing is just to illustrate the method, not the point of the paper.}
 
To our knowledge, there are no model-based methods to incorporate multiple network-valued external data in undirected graphical models. %\jack{Is this now not true?}
There has been, however, active research on incorporating external data in regression.
For example, \cite{stingo:2010} proposed a multivariate regression of gene expression on micro-RNA, where the prior probabilities that micro-RNAs have a non-zero coefficient depend on an external biological and structural similarity score.
%\cite{ni:2019} proposed a Bayesian framework each individual is assigned a different DAG, and the DAG structure depends on the covariates of that individual, performing both edge and covariate selection.
%\cite{hoff:2012} consider a covariance regression model where each individual has a different covariance matrix, and said covariance depends on the covariates of that individual. Extending such a model to the precision matrix does not seem trivial, further our interest is in a common covariance for all individuals.
Similarly, 
\cite{stingo:2011} incorporated pathway information into regression models for gene expression,
\cite{quintana_ma:2013} proposed a Bayesian variable selection framework where prior inclusion probabilities depend on meta-covariates, %with applications in genomics, 
\cite{cassese:2014} a multivariate regression of gene expression versus copy number variations that incorporates their physical distance in the genome,
\cite{peterson2016joint} a regression framework using a network for covariate penalisation,
and \cite{chiang:2017} a brain activity vector auto-regression that incorporates external brain information. 
%\cite{guha:2020} proposed a Bayesian shrinkage prior to regress the mean of a univariate outcome on network-valued covariates, encouraging similar regression coefficients for covariates that are connected in the network.
\cite{chen_tinghuei:2021} predicted disease outcomes given single nucleotide polymorphisms, where the \LASSO regularisation parameter depends on functional annotation categories.

There has also been work incorporating network data in graphical models, primarily in neuroscience.
\cite{ng2012novel, pineda2014guiding, higgins2018integrative} considered penalised likelihood GGMs to understand co-activation across brain regions, where one has strong grounds to believe that external network data extracted from known brain structure provides useful information.
In a similar vein, \cite{bu2021integrating} use distances between brain regions to drive the regularisation of a GGM that is fit via multiple univariate regressions, and provide theoretical conditions for asymptotic learning of the GGM's structure.
The main applied difference with our setting is that we wish to assess whether the network data are informative and, if so, depict how. Another difference is that we consider multiple network datasets (e.g. Facebook, distance, flights), rather than only one. In simulations we illustrate how assessing whether the network data are useful or not can lead to significant practical improvements.
As discussed, the main methodological difference is that we develop a probabilistic spike-and-slab model to regress the GGM on the network data that helps interpret the presence of edges and the sign and magnitude of partial correlations.
%The main novelty in our work is that we incorporate external network data to model the (inverse) covariance rather than the mean, i.e. we use graphical models to study the dependence structure. 
%Also, we allow the external data to inform not only prior inclusion probabilities or overall regularisation but also the location and variance of non-zero parameters. 
This is important in our applications, e.g. the bottom panels in Figure \ref{fig:glasso_covid} depict that large Facebook connectivity is associated with positive partial correlations.

We develop two frameworks to integrate network data into GGM selection and parameter estimation. 
The first framework is a hierarchical extension of the graphical \LASSO (\GLASSO) (\cite{friedman:2008,yuan2007model}, see also \cite{wang_hao:2012} for a discussion of Bayesian counterparts).
The framework largely follows that in \cite{ng2012novel}, except that we learn critical hyper-parameters from data and assess whether each network data is actually useful or not.
%The construction resembles the \GLASSO priors of \cite{khondker:2013}, where each precision matrix entry has a different regularisation parameter, with the important difference that we allow the latter to depend on external network data.
We also develop tailored optimisation algorithms that build on the \GOLAZO algorithm of \cite{lauritzen:2020} so that the computational cost is similar to a standard \GLASSO problem, and we apply Bayesopt algorithms to speed up the search over hyper-parameter values.
%A limitation however is that, as mentioned, in our applications the external data appears to be not only informative about whether a parameter is zero but also about its sign.
Our second framework is the main contribution and uses a spike-and-slab prior, with the novel feature that the slab's probability, location and variance are regressed on the network data. 
%Said feature allows one to effectively regress the GGM onto the network data, aiding the interpretation of the GGM as discussed above (when to expect an edge, or positive/negative partial correlations).
To ensure its practical applicability we developed a software implementation in the probabilistic programming languages \texttt{Stan} \citep{carpenter:2017} and \texttt{NumPyro} \citep{bingham:2019,phan:2019}. The latter capitalises on efficient automatic differentiation and GPUs to help boost the computational speed. Similarly, the first framework is implemented in \textit{R}. %\david{We should put the R functions into an R package and upload it at GitHub. Similarly Laura's numpyro implementation, and perhaps Jack's STAN implementation.}
 
The paper proceeds as follows. 
Section \ref{sec:applications} discusses our motivating applications in more detail.
Section~\ref{sec:model} reviews the \GLASSO,
%and partial correlation (\PCGLASSO) \citep{carter2021partial} 
introduces our network-adjusted extension and its Bayesian analogue. Section~\ref{sec:computation} discusses our computational strategy for learning the graphical model and hyper-parameters that depict its association with the external network data. Section~\ref{sec:simulations} uses simulations to shed light on a natural practical question: what if the network data are useless, i.e. uninformative regarding the graphical model we seek to learn? We illustrate that one should assess whether the network data have useful information about the GGM and, if not, discard them to avoid deteriorating inference.
Section~\ref{sec:results} shows our main results for the \COVID and stock market applications, and Section \ref{sec:discussion} concludes.
%Both examples demonstrate that the network data is informative, the Facebook network being particularly informative about the structure of \COVID cases and the Economic risks network about stock-market dependence.
%that while similarity in Policy risks appears more predictive of two companies not being conditionally independent, similarity in Economics risks appears more predictive of the magnitude of non-zero partial correlations \jack{is this too strong?}. 
Code to implement all of our experiments and data pre-processing is available at \url{https://github.com/llaurabat91/graphical-models-external-networks}.

%\cite{wang_hao:2015} for a spike-and-slab prior strategy for edge selection %in undirected graphical models.

\section{Motivating applications}
\label{sec:applications}

\subsection{Dependence in \COVID infections versus Facebook, geographical and flight networks}
 
Studying the evolution of pandemics such as \COVID is of great importance for health, economic and societal reasons. There are many studies to forecast infections or to understand how they are related to various factors (e.g. health measures, temperature).
We consider a further important aspect that received less attention: understanding how the disease co-evolves across (possibly distant) geographical units, and what factors are associated to such co-evolution. 
For example, if several counties were expected to simultaneously exhibit higher-than-expected infection rates, health authorities might need to plan resources accordingly.
Further, identifying factors that are related to the co-evolution (e.g. the Facebook index) may suggest strategies to limit such coordinated growth (e.g. targeted information campaigns).
 
To study \COVID co-evolution across USA counties, we downloaded weekly infection rates from \cite{COVID:2020} %\url{https://github.com/CSSEGISandData/COVID-19/blob/master/csse_covid_19_data/csse_covid_19_time_series/time_series_covid19_confirmed_US.csv} 
for the period 22 January 2020 to 30 November 2021 (97 weeks total) for all USA counties ($>3,000$ in total).
We then iteratively clustered neighbouring counties with small population until all aggregated counties had at least 500,000 inhabitants, obtaining 332 aggregated counties in total. Full details of our clustering procedure are presented in Section B.3.  For simplicity onwards we refer to aggregated counties simply as counties.
The reason for clustering counties was two-fold. First, the weekly infection rates for smaller counties are subject to high variance, and hence less reliable than when grouping counties. Second, working with $>3,000$ counties results in a GGM with $>4,500,000$ parameters, which imposes serious computational bottlenecks.

We also obtained data on covariates that are thought to be associated with the disease's evolution, such as temperature, population density, vaccination rates and an index measuring the stringency of pandemic measures (\cite{Temperature:2020}; \cite{Population:2020}; \cite{Vaccination:2020}; \cite{Policy:2020}).
We defined the outcome of interest as the county log-infection rates, i.e. log infections relative to the county's population.
Our interest is in studying the disease co-evolution {\it after} accounting for factors driving the mean structure. To this end, we fitted a linear regression model that included temperature, vaccination rates, the stringency of pandemic measures, a weekly fixed effect term estimating the mean infections across all counties in that particular week, and a first-order auto-regressive term measuring the infection rate in the previous week. 
See Section B and the supplementary code for the data collection,  pre-processing, and residual checks assessing the linearity and normality assumptions, and that higher-order auto-regressive terms are not needed.
%\david{Li, can you please prepare an R notebook showing the code, plots, and commenting on what we're doing /seeing at each step in the code?}
 
Although the mean model explained most of the variance in infection rates (adjusted $R^2$ coefficient 0.942), certain county pairs were systematically both above or below the model predictions.
Specifically, we estimated partial correlations in the regression residuals for each county pair via graphical LASSO, and obtained numerous non-zero estimates (Figure \ref{fig:glasso_covid}, top).
%Motivated by \cite{kuchler:2021}, who found that {\it marginal} correlations between county infections rates were related to their Facebook connectivity index, we wish to study how {\it partial} correlations depend on the Facebook index. 
Said partial correlations indicate that certain county pairs tend to behave better or worse than expected (given the week's overall pandemic status and other covariates)  in a coordinated fashion.
Our primary goal is to assess whether this coordinated behavior occurs more frequently across counties that are strongly connected via social media, given by the Facebook index.
Said index defines a network of counties, measuring the strength of the connection between every pair of counties. 
We also consider two further networks, one based on geographical closeness (see Section \ref{ssec:covid_results}) and a second measuring flow of passengers between two counties by plane (see Section B).
%\color{blue}[DR. Describe where the flight data was obtained from.] \color{black}

We see partial correlations as an appealing measure of disease co-evolution. For example, suppose that infections in County A drive those of County B, which in turn drive those of County C, then all three counties would have non-zero marginal correlation. In contrast, the partial correlation between counties A and C would be zero, suggesting there is no direct link between them.
%As discussed, GGMs are a natural strategy to estimate partial correlations. Here we develop a framework to fit Gaussian graphical models that can incorporate the external information provided by the Facebook and geographical networks, and neglect said information when not needed.
%The Facebook, geographical and flight network datasets should help improve the graphical model estimation, e.g. regularise to a lesser degree the partial correlations for county pairs that are highly-connected in Facebook and \textit{vice versa}. 
%This desideratum led us to develop a network-regularised graphical LASSO framework, see Section \ref{sec:model}.
An important observation stemming from Figure \ref{fig:glasso_covid} is that counties that are highly connected on Facebook have a higher proportion of non-zero, and positive, partial correlations. A similar observation applies to geographical distances.
Hence one wishes not only to regularise to a lesser extent county pairs with a strong Facebook connection but also to describe how the average non-zero partial correlation depends on Facebook (or geographical, or flight) connectivity. This desideratum led us to develop a network-regularised spike-and-slab framework, where the slab's mean, variance and probability are regressed on the network, see Section \ref{sec:model}.

\subsection{Dependence in stock market returns versus text data}

Our goal is to study whether and how covariation in stock market excess returns (i.e. returns above/below those that were expected, see below) across firms is associated with firms' sharing similar risks.
To measure to what extent they do so, we downloaded text of the \textit{Risk Factors} (\textit{RF}) section of publicly traded firms' annual 10-K filings to the USA Securities and Exchange Commission. For each firm, we combine all filings made between 2015 and 2019, inclusive.  Said filings describe exhaustively future earnings risks faced by the firms, and there is an incentive for full disclosure because investors can take legal action when firms withhold information that if disclosed would have prevented financial losses. Firms that face similar risks may have more dependent stock returns, e.g. two firms mentioning risks to oil price rises may co-move when oil prices change. Indeed, \citet{hanleyDynamicInterpretationEmerging2019} regressed the covariance of excess returns between pairs of financial firms on a measure of \textit{RF} text overlap and showed a positive relationship in the lead-up to the global financial crisis in 2008.  More recently, \citet{davisFirmLevelRiskExposures2020} shows that firms with similar \textit{RF} texts reacted similarly to the arrival of \COVID.  Our analysis goes beyond these studies by modelling partial rather than marginal correlations.  We also allow distance in \textit{RF}-text space to influence both the probability of a connection between firms and the mean (and variance) of the partial correlations on the network.

We consider $p=366$ firms traded on US markets that satisfy the following conditions: i) membership in S\&P500 at the end of 2019; ii) closing stock price adjusted for stock splits and dividends available in the COMPUSTAT database for every trading day between 2 January 2019 to 31 December 2019 (252 trading days in total); iii) at least one 10-K filing available in 2014-2019.
For each trading day in 2019 we construct daily excess returns using the Fama-French three-factor model.  Specifically, we individually regress each firm's daily log-returns on the variables contained in the daily, three Fama/French factors file downloaded from Kenneth French's Data Library website.  The residual is the excess return.

To measure textual similarity between companies, we first construct a bag-of-words representation of each firm's 10-K filings during 2014-2019. We follow \citet{bakerPolicyNewsStock2019}, and compute firms' exposure to 16 separate \textit{economic} risks and 20 separate \textit{policy} risks.  For each risk $r$, \citet{bakerPolicyNewsStock2019} define a term set $T_r$ containing terms that reflect the exposure.  For example, the policy risk `food and drug policy' is captured by the term set $\{$prescription drug act, drug policy, food and drug administration, fda$\}$.\footnote{In common with the text-as-data literature, we refer here to terms even when a `term' is a multi-word expression.  See Appendix B of \cite{bakerPolicyNewsStock2019} for a complete description of the term sets associated with each risk.}  \citet{bakerPolicyNewsStock2019} show that intertemporal variation in economic and policy risk terms in newspaper articles closely tracks aggregate market volatility.  This motivates the idea of using variation in these terms across individual firms to better measure their co-movement across trading days. 

Let $x_{i,v}$ be the count of term $v$ in firm $i$'s 10-K filings during 2014-2019 and let $C_i \equiv \sum_v x_{i,v}$ be the total number of terms. We measure each firm's exposure to risk $r$ as $\log\left(1 + \sum_{v \in T_r} x_{i,v} / C_i\right)$, i.e. logarithm of 1 plus the proportion of words referring to risk $r$ out of the total $C_i$ words. We use the logarithm to account for the fact that a risk term not being mentioned at all versus being mentioned once is likely to be more informative than being mentioned many times compared with slightly more times. For each pair of firms, we then measure its similarity in exposure to economic risks by computing the correlation between the vector of economy-related risks. This defines a network between companies such that the network connection between companies $(j,k)$ is given by said correlation. We proceeded analogously to define a policy risk network by computing correlations between policy-related terms.
%\color{blue}[DR. I found the latter half of the previous paragraph very hard to follow. Steve, can we explain this in a simpler/summarised way? Perhaps then give a fuller description, showing some examples, in an appendix/supplement/markdown document at github.\color{black}

In summary, our data processing produced two networks between firms that measure their similarity in risk exposures based on a particular 
representation of \textit{RF} texts. We remark that one could use alternative text analysis tools, however our goal is to establish that text-based relational data can be useful to estimate dependence in stock returns. The optimal representation of text for this task is left as an open question.  Still, as we show below, separately controlling for economic and policy risks yields important insights regarding whether government policy generates return co-movement above and beyond that generated by firm fundamentals.
 
See Section C and the supplementary code for the data collection,  pre-processing, linear model fit, and residual checks assessing our model assumptions.
 
\section{Model} \label{sec:model}
 
We describe two model-fitting strategies to regress an undirected GGM on $p$ variables onto multiple external network datasets.
Section \ref{ssec:network_penalisation} discusses network \GLASSO, which we mainly use as a computationally-convenient framework to assess whether one should add/remove each network dataset. We also discuss a Bayesian interpretation useful to check that the assumed model fits the observed data.
%which as discussed builds on \cite{ng2012novel}, except that we learn critical hyper-parameters from data and assess whether each network data is actually useful or not
Section \ref{ssec:spikeslab_nglasso} is our main contribution, a spike-and-slab model to regress partial correlations on network data.
Section \ref{sec:nongaussian} discusses how to extend our framework beyond Gaussian data, as needed for the stock market application.

We set notation. Let $y_i \in \mathbb{R}^p$ be the outcome vector for individuals $i=1,\ldots,n$ (e.g. log-infection rates in $p$ counties at week $i$, or stock excess returns for $p$ companies at day $i$) and $x_i \in \mathbb{R}^d$ covariates (week indicator, temperature, percentage of fully vaccinated individuals in week $i$, etc.).
We assume that $y_i \sim \mathcal{N}_p\left(B x_i, \Theta^{-1}\right)$ independently across $i=1,\ldots,n$, where $B$ is a $p \times d$ regression coefficients matrix and $\Theta$ a $p \times p$ positive-definite precision (or inverse covariance) matrix.
To ensure that the independence assumption across $i$ is tenable, we include lagged versions of $y_i$ into the covariates $x_i$, as described in Section \ref{sec:applications} and B.
For simplicity, in our applications we start by subtracting the estimated mean $\hat{B} x_i$ from $y_i$, where $\hat{B}$ is the least-squares estimator,
and subsequently assume the outcomes to have zero mean, i.e. $y_i \sim \mathcal{N}_p(0, \Theta^{-1})$.

A convenient property of modelling $y_i \sim \mathcal{N}_p\left(0, \Theta^{-1}\right)$ is that conditional independence statements can be drawn from the graph defined by the non-zero elements of $\Theta$. Specifically, $(y_{ij},y_{ik})$ are independent given the remaining elements in $y_i$ if and only if $\Theta_{jk}=0$.
%Hence, by determining what elements in $\Theta$ are zero one learns about conditional independence. %\begin{equation}
%    \Theta_{jk} = 0 \quad\Leftrightarrow\quad Y_j \independent Y_k | Y_{\{1, \ldots, p\}\setminus\{j, k\}},\textrm{ for } j, k = 1,\ldots, p.
%\end{equation}
%As a result it is of interest when estimating $\Theta$ from data to ascertain which elements of $\Theta$ are 0.
As argued earlier, in our applications we use partial correlations as a measure of association.
%In the Gaussian model, conditional independence is equivalent to zero partial correlation.
We denote partial correlations by
\begin{align}
    \rho_{jk} := \textrm{corr}(y_{ij} , y_{ik} \mid  y_{i\{1, \ldots, p\}\setminus\{j, k\}})= -\frac{\Theta_{jk}}{\sqrt{\Theta_{jj}\Theta_{kk}}}.\label{Equ:partial_corr}
\end{align}
%We denote by $\Rho= (\rho_{jk})$ the $p \times p$ partial correlations matrix. 
 
Importantly, in our framework, one also observes external data in the form of $Q \geq 1$ networks between variables. These are $p \times p$ symmetric matrices $A^{(1)},\ldots,A^{(Q)}$, where
$a^{(q)}_{jk}$ measures strength of the connection between variables $(j,k)$.
In the \COVID application $a^{(1)}_{jk}$ is the geographical closeness between counties $(j,k)$, $a^{(2)}_{jk}$ their Facebook connection index, and $a^{(3)}_{jk}$ their flight connectivity.
In the stock application, $a^{(1)}_{jk}$ is the similarity between firms $(j,k)$ in their exposure to economic risks, and analogously $a^{(2)}_{jk}$ for policy risks.

\subsection{Network graphical LASSO}\label{ssec:network_penalisation}
%\subsection{Gaussian graphical models}{\label{sec:gaussian_graphical_model}}

Network graphical LASSO is a penalised likelihood framework to estimate $\Theta\in \mathcal{S}^p_{+}$ by maximising a Gaussian log-likelihood plus a graphical \LASSO (\GLASSO) penalty \citep{friedman:2008,yuan2007model}, where the magnitude of said penalty is regressed onto the network datasets.
Specifically, we consider
\begin{align}
\hat{\Theta}= \argmax_{\Theta\in \mathcal{S}^p_{+}}\;\;\; \log \det(\Theta) - \textrm{tr}(S\Theta)
-  \sum_{j\neq k} \lambda_{jk}|\Theta_{jk}|,
 \label{equ:NGLASSO_objective}
\end{align}  
where $\mathcal{S}^p_{+}$ is the set of non-negative definite matrices, $\textrm{tr}(\cdot)$ the matrix trace, $S$ the empirical covariance matrix of $(y_1, \ldots, y_n)$, 
\begin{equation}
    \lambda_{jk}= \lambda_{jk}(A^{(1)},\ldots,A^{(Q)})= \exp \left\{ \beta_0 + \sum_{q=1}^Q \beta_q a_{jk}^{(q)} \right\}
    \label{equ:network_regression_multiple}
\end{equation}
are regularisation parameters, and $\beta=(\beta_0,\ldots,\beta_Q) \in \mathbb{R}^{Q+1}$ are regularisation hyperparameters that play a critical role in determining the level of sparsity in $\hat{\Theta}$.
That is, each $\Theta_{jk}$ gets a potentially different penalty parameter $\lambda_{jk}$, which is a function of the network data $A^{(1)},\ldots,A^{(Q)}$. To simplify notation, we omit the dependence on $A^{(1)},\ldots,A^{(Q)}$ and simply use $\lambda_{jk}$, and let $A=(A^{(1)},\ldots,A^{(Q)})$.
For convenience we parameterise the penalties in terms of a scaled version of $A^{(q)}$ that is centered to have zero sample mean and unit sample variance, and which we denote by $\bar{A}^{(q)}$.
\GLASSO is the particular case where $\lambda_{jk}$ are constant across $(j,k$).
%A popular approach to estimate $\Theta$, and hence the partial correlations, is the Graphical \LASSO (\GLASSO) \citep{friedman:2008,yuan2007model}. \GLASSO produces an estimate of $\Theta$ that contains zeroes by maximising the Gaussian log-likelihood with a \LASSO penalty. Specifically,

\cite{ng2012novel} proposed the penalty in \eqref{equ:NGLASSO_objective}-\eqref{equ:network_regression_multiple}, the main difference being that we consider multiple networks ($Q>1$) and that we learn hyper-parameters $\beta$ from data, including the exclusion of some networks.
Two popular strategies to set hyper-parameters are using cross-validation \citep{friedman:2008} and information criteria such as the Bayesian information criterion (\BIC) \citep{schwarz:1978}. % and the Extended \BIC (\EBIC) \citep{chen:2008}.
The former is more suitable for predictive tasks than when seeking models that help explain the data-generating truth, e.g. cross-validation does not lead to consistent model selection even in simpler linear regression
where the \BIC and related information criteria are consistent, see \cite{foygel2010extended,zhang_yiyun:2010,wang_tao:2011,fan_yingying:2013}. 
We hence use the \BIC to learn $\beta$. Specifically, viewing $\hat{\Theta}(\beta)$ as a function of $\beta$, we choose $\beta$ minimising 
\begin{align}
%    \BIC(\lambda) &\;=\; -2\ell_n(\hat{\Theta}(\lambda)) + \big|\mathbf{E}(\hat{\Theta}(\lambda))\big|\cdot\log n, %+ 4|\mathbf{E}(\hat{\Theta}(\lambda))|\gamma \log p
    \hat{\beta}_{\BIC} := \argmin_{\beta\in\mathbb{R}^{Q+1}}
    \BIC(\beta) &\;=\; -2\ell_n(\hat{\Theta}(\beta)) + \big|\mathbf{E}(\hat{\Theta}(\beta))\big|\cdot\log n, %+ 4|\mathbf{E}(\hat{\Theta}(\lambda))|\gamma \log p
    \label{eq:bic_optim}
\end{align}
where $\ell_n(\hat{\Theta})$ is the Gaussian log-likelihood function and $|\mathbf{E}(\hat{\Theta}(\beta))|$ counts the number of edges in the graph associated with $\hat{\Theta}(\beta)$. 
Importantly, note that when $\beta_q=0$ then the $q^{th}$ network dataset is effectively excluded. The idea is that if a network dataset does not provide useful information about $\Theta$, then one may set $\beta_q=0$ to avoid adding unnecessary noise to $\hat{\Theta}$, see Section \ref{sec:simulations} for an illustration.
An alternative to the \BIC is the Extended \BIC (\EBIC) \citep{chen:2008}. % which contains an additional penalty to the \BIC for model complexity. 
%(see \eqref{eq:ebic_optim}), allowing it to maintain consistency in certain high-dimensional regression settings. 
As a sensitivity check, we provide results using the \EBIC to select $\beta$ in Sections A.5, B.8 and C.6. In our examples the \EBIC was overly conservative in selecting edges, which resulted in high mean-squared-error.
Finally, we note that there are alternative approaches to choosing $\beta$, see \cite{kuismin}, but they require more extensive computations that become prohibitive in our setting. 
We also note that alternatives to  \eqref{equ:NGLASSO_objective} include the adaptive graphical LASSO, SCAD and MCP \citep{fan:2009,wang_lingxiao:2016}, which were proposed to reduce bias in the estimation of large entries in $\Theta$. We focus on \eqref{equ:NGLASSO_objective} however due to its practical appeal of defining a concave problem for which one may establish efficient optimisation methods. 
 
One could of course consider alternative parameterisations to \eqref{equ:network_regression_multiple}, e.g. let $\lambda_{jk}$ depend non-parametrically on the network data. 
However, \eqref{equ:network_regression_multiple} requires fewer hyper-parameters than a non-parametric treatment and is easy to interpret: the log-regularisation depends linearly on the networks.
Further, a model-checking exercise suggested that \eqref{equ:network_regression_multiple} is a reasonable parameterisation for our two motivating applications.
Said model-checking is best understood by adopting a Bayesian interpretation.
The penalised estimator associated to \eqref{equ:network_regression_multiple} is equivalent to the posterior mode under independent Laplace priors \citep{wang_hao:2012} with scale parameter $1/\lambda_{jk}$, that is
%A Bayesian interpretation can be given to the \GLASSO/\PCGLASSO by considering the absolute value penalisation term as the log-density of a double exponential prior \citep{wang_hao:2012} with location parameter set to 0 and scale parameter $\sigma = 1/\lambda$
%\footnote{To be precise, the maximum a posteriori estimation matches that of \GLASSO when $\sigma = \frac{1}{n\lambda}$ \jack{not sure why it is not $\frac{n}{2}$ but experimentally it isn't}, where the $n$ captures the fact that the \GLASSO objective (Eq. \eqref{equ:GLASSO_objective}) takes as argument the sample covariance matrix, which is already averaged over $n$ data points, while the Bayesian likelihood sums over the data. \jack{Note: this is not true for \PCGLASSO - for Bayes, the likelihood is still $\Theta$ but for the Frequentist we use the empirical correlation}}. The density of $\rho_{ij} \sim \textrm{DE}(\mu, \sigma)$ where $\textrm{DE}(\mu, \sigma)$ is the double exponential distribution (a.k.a. Laplace distribution) with location parameter $\mu$ and scale parameter $\sigma$, is 
\begin{equation}
   \pi(\Theta \mid A, \beta) \;\propto \;
   \prod_{j > k} \frac{\lambda_{jk}}{2}\exp\left\{-\lambda_{jk}|\theta_{jk}|\right\} \mbox{I}(\Theta \succ 0),
   %\frac{1}{2\sigma}\exp\left\{-\frac{|\rho_{jk} - \mu|}{\sigma}\right\}
\end{equation}
where $\mbox{I}(\Theta \succ 0)$ is an indicator for $\Theta$ being positive-definite,
$\lambda_{jk}$ is as in \eqref{equ:network_regression_multiple}
% \piotr{Jack, I know there was a problem with handling positive dependence. However, if $\Theta$ is assumed to have 1s on the diagonal, we can use \cite{joe2006generating}, where they call the ``uniform distribution over correlation matrices'' -- this is simply the uniform distribution over normalised positive definite matrices, which is what you need to get $\mbox{I}(\Theta \succ 0)$.}
%Importantly, such a prior must also contain an indicator that the resulting joint distribution for the partial correlation matrix $\rho$ has positive density only on matrices that are non-negative definite.  \stephen{Maybe it's obvious, but where is the indicator in the formulation?}\piotr{Jack, perhaps \cite{joe2006generating} will be useful. HJ defines a suitable distribution over the space of correlation matrices.}
%The network adjusted model introduced in Section~\ref{ssec:network_penalisation} considers $\lambda$ to be depend on further unknown parameters $\beta$ and external matrices $A^{(1)}, \ldots, A^{(Q)}$. Such a model admits a Bayesian interpretation via the hierarchical prior
%\begin{align}
%    \rho_{ij} | \beta, A^{(1)}, \ldots, A^{(Q)} \sim \textrm{DE}\left(0, 1/\lambda_{ij}\left(A^{(1)}, \ldots, A^{(Q)}
%    , \beta\right)\right)
%\end{align}
and $\beta$ are now prior parameters. %One could for example consider hierarchical hyper-priors on parameters $\beta$.
The Bayesian interpretation is that the $\lambda_{jk}$'s arise from a Laplace random effects distribution with parameter $\beta$. %, which helps understand the implications of \eqref{equ:network_regression_multiple}.
The \textit{a priori} expected value of $\theta_{jk}$ is 0, which induces sparsity, and the prior variance is 
\begin{align}
   \textrm{Var}\left[\theta_{jk}~|~ \beta, A\right] = \mathbb{E}\left[\theta_{jk}^2~|~ \beta, A\right] = 
   \frac{2}{\lambda_{jk}^2}.
\end{align}
%Large values of $\lambda_{ij}\left(A^{(1)}, \ldots, A^{(Q)}, \beta\right)$ indicate a low prior variance for $\rho_{ij}$ and as a result larger probability of $\rho_{ij}$ being close to 0. On-the-other hand lower values of $\lambda_{ij}\left(A^{(1)}, \ldots, A^{(Q)}, \beta\right)$ indicate larger prior variance for $\rho_{jk}$ and a smaller prior probability of being close to 0. Further, the parametric form of $\lambda_{ij}\left(A^{(1)}, \ldots, A^{(Q)}, \beta\right)$ assumes that the log-variance of the partial correlations behave linearly in external network matrix $A^{(q)}$
Therefore \eqref{equ:network_regression_multiple} assumes that the log-variance of the partial covariances $\theta_{jk}$ depends linearly on the network data
\begin{equation}
    \log \mathbb{E}\left[\theta_{jk}^2~|~ \beta, A\right] = \log(2) - 2\left(\beta_0 + \beta_1 \bar{a}^{(1)}_{jk} + \ldots + \beta_Q \bar{a}^{(Q)}_{jk}\right).
\label{eq:priorvar_networkglasso}
\end{equation}
 
Provided one has an initial estimate of the left-hand side of \eqref{eq:priorvar_networkglasso}, which in our examples we derived from standard \GLASSO estimates of $\theta_{jk}$, one may check whether its relation to the network data is roughly linear. Such a check motivated taking the logarithm of the raw distances,  Facebook connectivities and flight passenger flow to define our networks for the \COVID data, while the stock market risk indicator networks required no transformations. See Supplementary Sections B.6 and C.4 for further details.

\subsection{Network graphical spike-and-slab LASSO}
\label{ssec:spikeslab_nglasso}

%\jack{mention \cite{vinciotti2022bayesian}?}
%\jack{Finally, \cite{vinciotti2022bayesian} proposed a graphical model that uses external network data to model the prior probability of an edge. We aim to also allow the external network to characterise the location and scale of the strength of non-zero connections as well - I THINK IN THE END IT DIDN'T DO THIS}
 
The network graphical LASSO in \eqref{equ:NGLASSO_objective} provides sparse point estimates of partial correlations and, via its Bayesian interpretation, describes how their variance depends on the network data.
In our applications, however, we also seek to describe how the proportion of non-zero partial correlations and their mean depend on the network.
For example, in the \COVID data both the probability that two counties are conditionally dependent and the mean partial correlation grow as their Facebook connection grows (Figure \ref{fig:glasso_covid}), and similarly for the stock market data (Figure C.5). %\david{Laura, please add a reference once you added the figure.} \jack{Want exactly same as Figure 1 put for Stock, either in the appendix or in the paper. I added a link to the GLASSO vs Network motivating plot, we still need the SS plot.}
 To address this issue, we developed a spike-and-slab framework that builds on the regression setting of \cite{rockova:2014} and the graphical spike-and-slab of \cite{gan:2018}. 
The main novelty is that both the slab prior probability and its parameters depend on network data. In particular, the slab need not be centered at zero, a feature that is novel\textemdash to our knowledge\textemdash  and has some independent interest.

We parameterise $\Theta$ in terms of partial correlations $\rho_{jk}$ in \eqref{Equ:partial_corr}, which facilitates interpretation and ensures that the posterior mode is invariant to scale transformations. By scale invariance we refer to the property that the estimated $\rho_{jk}$ remain the same regardless of whether one applies a scale transformation to the input data or not, see  \cite{carter2021partial} for a detailed discussion. %, showing that applying a \GLASSO penalty on the precision entries $\theta_{jk}$ is not scale-invariant, whereas applying it to the partial correlations $\rho_{jk}$ is.
%\piotr{In GLASSO you get scale invariance by using the sample correlation matrix rather than the sample covariance matrix. We should use a more direct argument for why we build on PCGLASSO in the Bayesian framework.}
We set a prior density $\pi(\mbox{diag}(\Theta),\rho)= \pi(\mbox{diag}(\Theta)) \pi(\rho)$, where $\sqrt{\Theta_{ii}} \sim \mathcal{IG}(a, b)$ with $a = 0.01$ and $b = 0.01$ reflecting an uninformative prior on the diagonal elements of $\Theta$, and
\begin{align}
%    \pi(\mbox{diag}(\Theta))&= \prod_{j=1}^p \mbox{IG}(\sqrt{\Theta_{jj}}; a, b) \nonumber \\
    \pi(\rho \mid \eta) &= C_\eta \mbox{I}(\rho \succ 0)
    \prod_{j > k} (1-w_{jk}) \mbox{DE}(\rho_{jk}; 0, s_0) + w_{jk} \mbox{DE}\left(\rho_{jk}; \eta_0^T a_{jk}, s_{jk} \right) \label{Equ:Spike-and-SlabPrior} \\
    w_{jk} &= \left( 1 + e^{-\eta_2^T a_{jk}}  \right)^{-1}, \quad s_{jk} = s_0 (1+ \exp\left\{\eta_1^T a_{jk} \right\}),
    \nonumber
\end{align}
where $C_\eta$ is the normalisation constant and $\mbox{I}(\rho \succ 0)$ a positive-definiteness indicator.
The spike is a double-exponential with zero mean and small scale $s_0$ meant to capture partial correlations that are practically zero, whereas the slab has larger variance $s_{jk}$ and may not be centered at zero.
The slab prior probability $w_{jk}$ follows a logistic regression on the network data $a_{jk}=(1,a_{jk}^{(1)},\ldots,a_{jk}^{(Q)})$, its mean $\eta_0^T a_{jk}$ depends linearly on $a_{jk}$ and its variance $s_{jk}$ is larger than $s_0$ by a factor that also depends on $a_{jk}$.
%\david{To keep things simple we should probably change $-\eta_1^T a_{jk}$ for $\eta_1^T a_{jk}$ in \eqref{Equ:Spike-and-SlabPrior}, then we say the following sentence.}
%{\color{blue}{
Specifically, positive entries in $\eta_0$ and $\eta_1$ indicate that the mean and variance (respectively) of the non-zero partial correlations increase for larger network connections $a_{jk}$, and similarly positive $\eta_2$ indicates a higher probability of a non-zero partial correlation for large $a_{jk}$.
%}}

We remark that because of the constraint $\mbox{I}(\rho \succ 0)$ the marginal prior $\pi(\rho_{jk})$ could be fairly different from the unconstrained density inside the product in \eqref{Equ:Spike-and-SlabPrior}, then $w_{jk}$ could not be interpreted as the prior probability of an edge, and similarly for $\eta_0^Ta_{jk}$ and $s_{jk}$. To address this issue we elicit prior parameters such that the indicator $\mbox{I}(\rho \succ 0)$ is satisfied with high prior probability, see below. 

Above $\eta=(\eta_0,\eta_1,\eta_2) \in \mathbb{R}^{3(Q+1)}$ are hyper-parameters driving the regression model of the partial correlations $\rho_{jk}$ onto the network data $a_{jk}$, and are a main quantity of interest in our applications.
%We remark that, although we adopt a Bayesian treatment, one could also take $- \log \pi(\rho)$ as a likelihood penalty and devise suitable optimisation algorithms. 
%\jack{I think we did both these strategies, we elicited priors for $\eta$ as you specify below, and then did empirical Bayes maximising the marginal posterior for these parameters given these priors}
%\jack{Elicitation allows us to do inference on hyperparameters, we use this in the experiments, in order to report inference for $\theta$ we resort to empirical Bayes, better empirical performance in experiments }
%Two popular strategies to set prior hyper-parameters such as $(\eta_0,\eta_1,\eta_2)$ in \eqref{Equ:Spike-and-SlabPrior} are either using prior elicitation to set default parameter values or an empirical Bayes framework where one maximises the marginal likelihood.
%In our examples we experimented with both strategies, finding qualitatively similar results.
%We focus discussion on the prior elicitation approach, as it provides a simpler interpretation of \eqref{Equ:Spike-and-SlabPrior}. % to capitalise on the fact that in a Bayesian framework $(\eta_0,\eta_1,\eta_2)$ are not tuning parameters devoid of interpretation, rather they determine beliefs on the prior expected number of non-zero partial correlations and their magnitude.
A standard strategy to set prior hyper-parameters such as $\eta$ in \eqref{Equ:Spike-and-SlabPrior} is an empirical Bayes framework where one maximises the marginal likelihood.
Such a framework allows us to do inference on the $\eta$'s themselves through the marginal posterior $\pi(\eta \mid y)$ and inference for $\Theta$ through the empirical Bayes posterior
\begin{align}
    \pi(\Theta| y, \hat{\eta})= f(y | \Theta)\pi(\Theta| \hat{\eta}),\nonumber%\label{Equ:Theta_EBposterior}
\end{align}
where $\hat{\eta}$ maximises the marginal posterior of $\eta$ given the data
\begin{align}
	\hat{\eta} :&= \argmax_\eta \pi(\eta \mid y)= \argmax_{\eta} \int \pi(\Theta, \eta | y) d\Theta= \argmax_{\eta} \int f(y | \Theta)\pi(\Theta| \eta)\pi(\eta) d\Theta.\nonumber%\label{Equ:eta_marginalMAP}%\nonumber
\end{align}
One could consider using the joint posterior $\pi(\Theta, \eta | y)$ for inference on $\Theta$ and $\eta$, but we found empirical Bayes to perform better in our experiments.  See \citet{giannoneEconomicPredictionsBig2021} for a related discussion on the desirability to learn the appropriate degree of sparsity from data in social science applications, and a related spike-and-slab proposal in a regression setting.
%this approach complements that of  which uses a spike-and-slab prior to infer the relevant predictors for mean outcomes in common economic and financial datasets.

%\jack{We are missing a normalisation constant associated with the positive definiteness of the prior here - the fix here is to say that we add this noramalising constant to the priro for $\eta$, \cite{wang_hao:2015} argued that such cancellation of prior norm constants is not too bad in a spike-and-slab graphical model setting -- as long as the norm constant affects hyper-parameters but not parameters (as in our case) and that we elected the priro such that the prior expected normalsiing contant was 1 with very high probability}
We next discuss our default elicitation for the prior $\pi(\rho \mid \eta)$.
The guiding principle was to set a minimally-informative prior, so that data may suitably update prior beliefs, while encouraging sparse solutions and preserving the interpretability of \eqref{Equ:Spike-and-SlabPrior}.
Briefly, we set $\pi(\eta)$ to be proportional to $C_\eta^{-1}$ times independent Gaussian prior densities on $(\eta_0,\eta_1,\eta_2)$. 
Adding the term $C_\eta^{-1}$ is a trick to simplify computation, since then $C_\eta$ drops from the posterior density $\pi(\Theta,\eta \mid y)$. \cite{wang_hao:2015} argued that such cancellation of prior normalisation constants does not adversely affect spike-and-slab priors in graphical model settings (as long as the constant affects hyper-parameters $\eta$ but not parameters $\Theta$, as in our case). % and that we elected the priro such that the prior expected normalsiing contant was 1 with very high probability
The prior on $\eta_2$ was set such that the prior mean number of edges is proportional to $p$, which induces sparsity, and the prior sample size can be thought of as 1, in analogy to the standard default Beta(0.5,0.5) prior in a Binomial experiment.
The prior on $\eta_1$ was set such that the prior mode of the slab's scale is $10 s_0$ and greater than $3 s_0$ with probability 0.99, i.e. the slab captures partial correlations of a larger magnitude than the slab.
Finally, the prior on $\eta_0$ was set such that the slab has zero prior mean and such that sampling entries of $\rho$ independently %\piotr{The entries of $\rho$ are sampled independently or $\rho$ is sampled independently of what?} 
from the double-exponential priors in \eqref{Equ:Spike-and-SlabPrior} returns a positive-definite matrix with 0.95 prior probability.
%\piotr{How is this $0.95$ obtained? But this I guess effectively implies that we sample from a relatively small neighbourhood of the identity. Is that fine?}. 
This ensures that $\pi(\rho \mid \eta)$ is similar to its unconstrained version where one drops the positive-definiteness indicator, as otherwise $w_{jk}$ cannot be interpreted as the marginal slab probability.

Figure A.1 plots the implied prior marginal distribution on the $\rho_{jk}$'s for both the \COVID and stock market applications showing that the prior concentrates at 0 but also features thick tails to capture true non-zero $\rho_{jk}$'s. 
%
%To assess the impact of these default prior choices, it is useful to display the implied prior marginal distribution on the $\rho_{jk}$'s. Figure \ref{Fig:Rho_prior_predictive_sim} shows that in both the \COVID and stock market applications most of the prior probability is contained in $\rho_{jk} \in (-0.5,0.5)$, which seems a sensible prior interval. 
%The prior concentrates significant mass around 0, which induces shrinkage, but also features thick tails, which favors capturing truly non-zero $\rho_{jk}$'s.
The corresponding posteriors (Figure A.1, bottom panels) set significant mass away from zero, suggesting that the prior shrinkage towards 0 was not excessive.
%%also displays the corresponding posteriors, where one appreciates significantly more mass at $\rho_{jk}$'s the fact that they are distinct from the priors indicates that the latter was not overly informative. 
Section A.3 provides further details and lists the hyper-parameter values used in our examples. Our code contains an implementation of our prior elicitation method.
 
%\david{Li/Laura: please add a panel to Figure \ref{Fig:Rho_prior_predictive_sim} with posterior draws of $\rho_{jk}$ across all $(j,k)$ for the \COVID and stock market data. Then the figure will allow us to see the transition from prior to posterior. :-)}
 
%priorSpecification_GLASSO_GOLAZO2_newSpike.Rmd
%\begin{figure}[!ht]
%\begin{center}
%\includegraphics[width =0.49\linewidth]{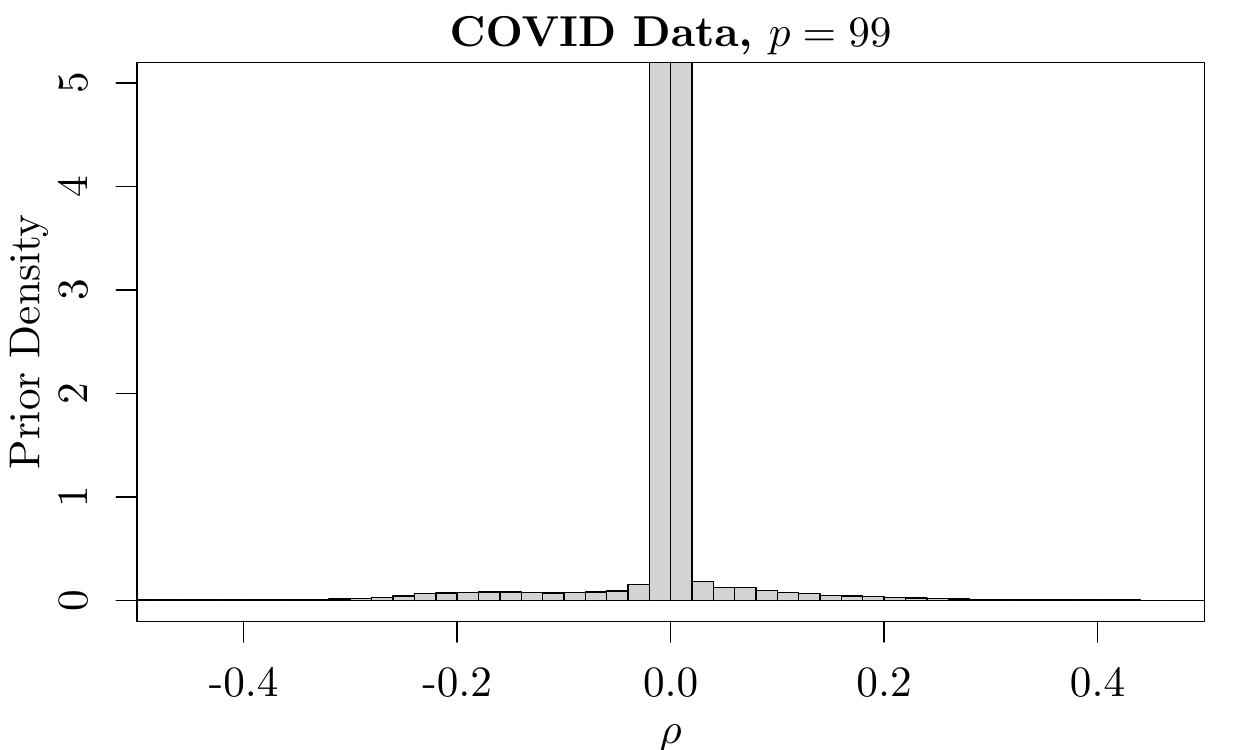}
%\includegraphics[width =0.49\linewidth]{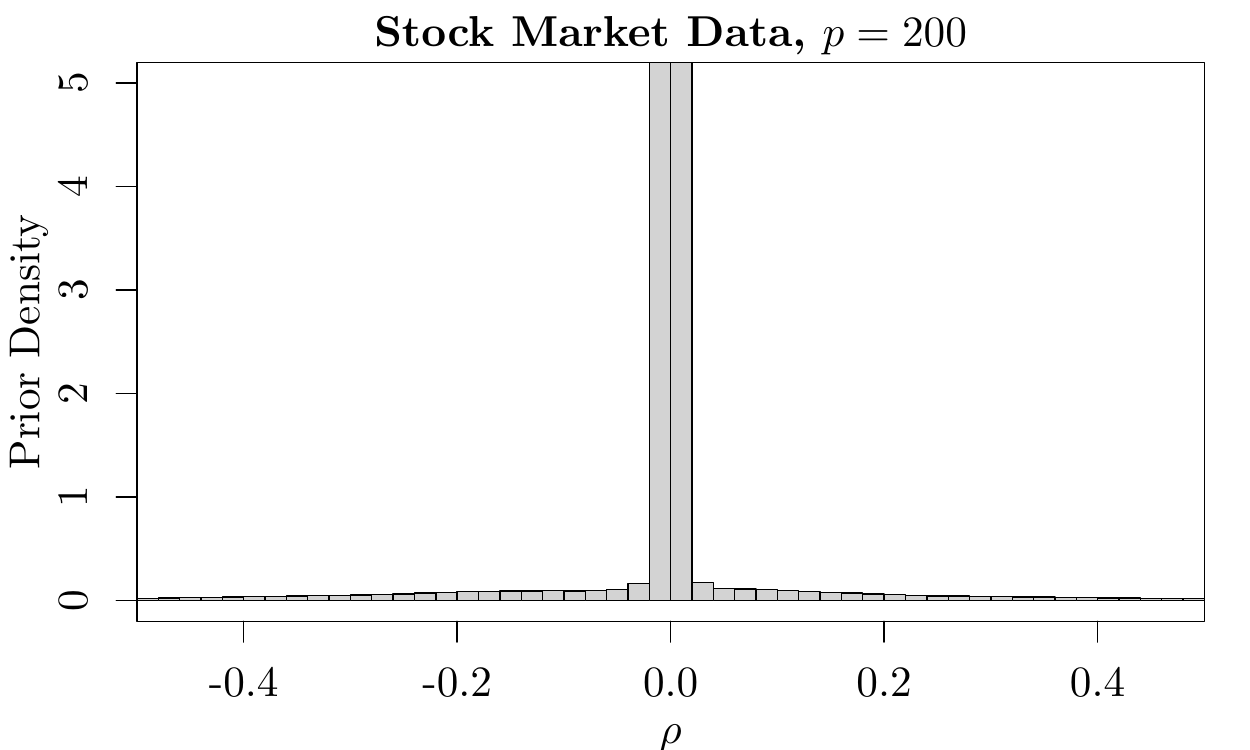}\\
%\includegraphics[width =0.49\linewidth]{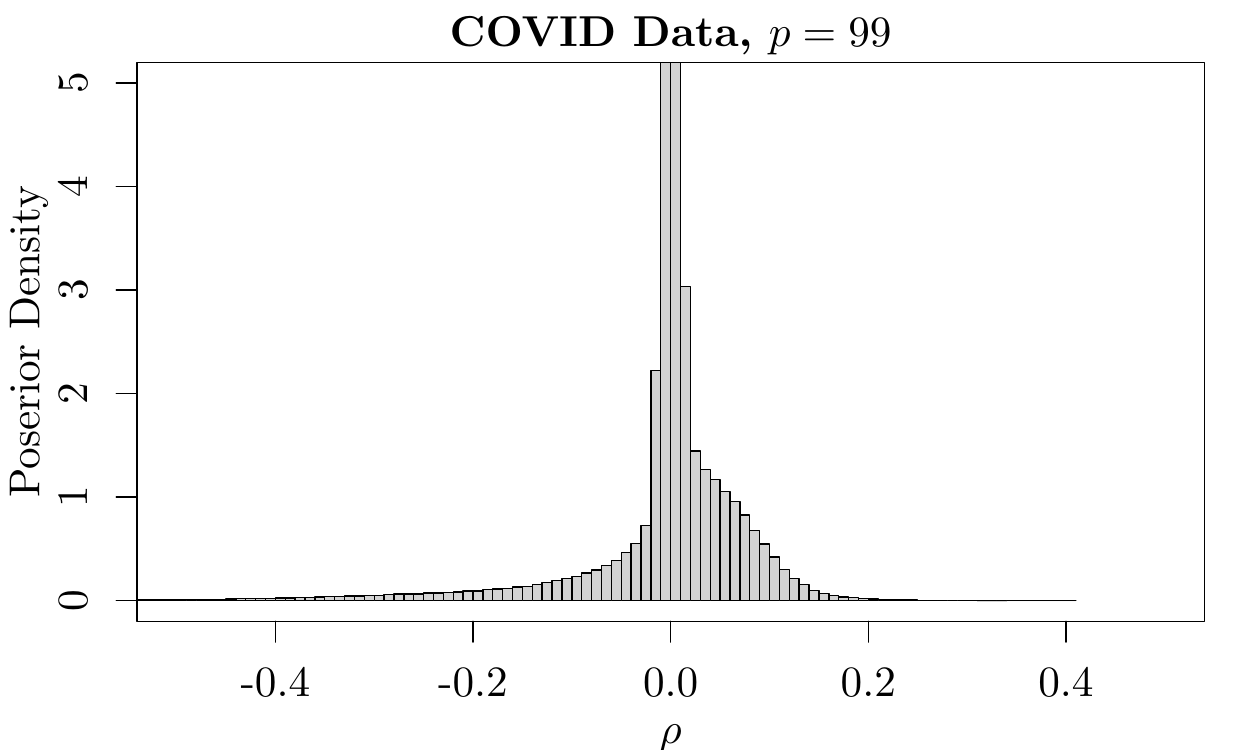}
%\includegraphics[width =0.49\linewidth]{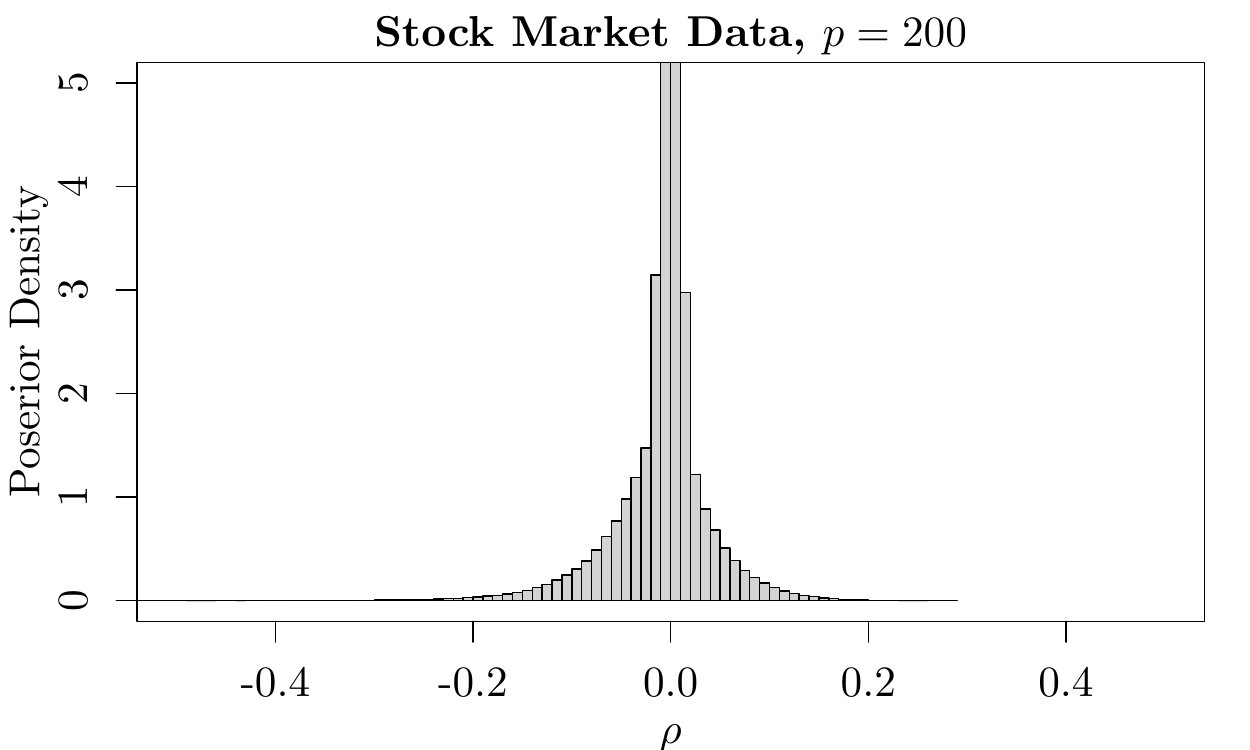}
%\caption{Elicited prior distribution and posterior distribution for $\rho_{jk}$, $j = 1,\ldots, p$, $k < j$ for the \COVID data $p = 99$ and stock market data $p = 200$. %\jack{these are for both matrices equal to 0}%\jack{Add posterior draws to this plot for the whole $\rho$ show prior not too informative, maybe this goes in the appendix}
%}
%\label{Fig:Rho_prior_predictive_sim}
%\end{center}
%\end{figure}
 
\subsection{Beyond Gaussian data}\label{sec:nongaussian}
 
In certain applications such as our stock market example, data exhibit non-Gaussian behavior such as thick tails and asymmetries, even after taking logarithmic or similar transforms (see the normality checks in Section C.3).
To address this issue in this application we used a non-paranormal model, which can accommodate said departures from normality.
%a subset of the trans-elliptical family that accommodates skewness and thicker-than-normal tails.
The distribution of $y_i=(y_{i1},\ldots,y_{ip})$ is non-paranormal if there exist strictly increasing functions $f_j:\mathbb R\to \mathbb R$ for $j=1,\ldots,p$ such that the vector $f(y_i):=(f_1(y_{i1}),\ldots,f_p(y_{ip}))$ is Gaussian.
Such a non-paranormal model may be estimated by first obtaining an estimate $\hat{f}$ from the data, for which we used the \textit{R} package \texttt{huge} \citep{zhao2012huge}, and subsequently applying our methodology to the transformed data $\hat{f}(y_i)$.

An interesting property of the non-paranormal family is that the graphical model can be interpreted as in the Gaussian case. The partial correlation between the transformed $f_j(y_{ij})$ and $f_k(y_{ik})$ is zero if and only if $(y_{ij},y_{ik})$ are conditionally independent.
Partial correlations retain an interesting interpretation in the trans-elliptical family: 
zero partial correlation $\rho_{jk}=0$ indicates that $y_{ij}$ is linearly independent with any transformation of $y_{ik}$ \citep{rossell2021dependence}.

\section{Computation and inference} \label{sec:computation}
 
\subsection{Network \GLASSO} \label{ssec:optim_rho}
 
We first describe how to optimise \eqref{equ:NGLASSO_objective} for a fixed $\beta$, and subsequently how to estimate $\hat{\beta}$. The main idea is that, since $\lambda_{jk} = \lambda_{jk}\left(A, \beta\right)$ are fixed for a fixed $\beta$, the network \GLASSO objective in \eqref{equ:NGLASSO_objective} is a special case of the \GOLAZO class of models in \cite{lauritzen:2020}.
Motivated by the desire to penalise positive and negative partial correlations differently, \GOLAZO algorithms consider Gaussian graphical models with likelihood penalties of the form 
\begin{equation}
    \sum_{j=1}^p\sum_{k\neq j}\max\left\{L_{jk}\rho_{jk}, U_{jk}\rho_{jk}\right\},\label{equ:GOLZAO_general}
\end{equation}
where $-\infty \leq L_{jk}\leq 0 \leq U_{jk}\leq \infty$ are fixed. Noting that $\lambda|x| = \max\left\{-\lambda x, \lambda x \right\}$ for positive $\lambda$, we see that the penalty in \eqref{equ:NGLASSO_objective} is in the form of \eqref{equ:GOLZAO_general}  with $L_{jk}= - \lambda_{jk}$ and $U_{jk}= \lambda_{jk}$.
% \begin{equation}
%     L_{jk} = -\lambda_{jk}\left(A^{(1)}, \ldots, A^{(Q)}, \beta\right) \textrm{ and } U_{jk} = \lambda_{jk}\left(A^{(1)}, \ldots, A^{(Q)}, \beta\right).
% \end{equation}
%Penalising the Gaussian log-likelihood with \eqref{equ:GOLZAO_general} leads to 
\eqref{equ:NGLASSO_objective} is a convex problem that can be efficiently solved using a block-coordinate ascent algorithm 
%\cite{lauritzen:2020} show that it can be efficiently solved using a block-coordinate descent algorithm 
similar to that proposed for  \GLASSO in \citep{banerjee2008model}. An \R{} package is provided for \GOLAZO at \url{https://github.com/pzwiernik/golazo}.
 
%Analogously to \eqref{eq:ebic_optim}, we set $\beta$ by minimising the \BIC
%%Given an efficient algorithm (our experiments show this to be extremely efficient even for large $p$) for estimating $\hat{\rho}\left(\lambda\left(A^{(1)}, \ldots, A^{(Q)}, \beta\right)\right)$ for fixed estimates $\beta$. Our first proposal for estimating $\beta$ is minimising the \EBIC
%\begin{align}
%    \hat{\beta}_{\BIC} := \argmin_{\beta\in\mathbb{R}^{Q+1}} ~ \BIC\left(\lambda\left(A, \beta\right)\right)\label{equ:BIC_optim_beta}
%\end{align}

Obtaining $\hat{\beta}_{\BIC}$ requires maximising $\BIC(\beta)$ in \eqref{eq:bic_optim}. As usual when using information criteria to set regularisation parameters, this is a non-concave function of $\beta$ that exhibits discontinuities. We propose two optimisation approaches. In cases where only one or two external networks are available and $p$ is moderate ($p \leq 200$, say) we propose a grid-search akin to that used to set the regularisation parameter in standard \GLASSO. Section A.1 contains several analytic upper bounds to facilitate such a search. However, the dimension of the hyper-parameter $\beta$ grows with the number $K$ of external networks, hence grid searches are very costly when $K \geq 3$ and $p$ is large. % and $\BIC(\beta)$ requires the estimate $\hat{\Theta}$ which is computationally expensive for larger $p$. 
In these settings, we propose using Bayesian optimisation. Briefly, Bayesian optimisation first evaluates the objective function $\BIC(\beta)$ at a few values of the hyper-parameter $\beta$ and uses a Gaussian process to estimate $\BIC(\beta)$ for all $\beta$. Next, an acquisition function to propose new $\beta$ values at which to evaluate $\BIC(\beta)$, which are then used to update the Gaussian process estimate. 
In particular we use the \R~ package \texttt{rBayesianoptimisation} \citep{yan2016rbayesianoptimization}, with the `ucb' acquisition function and maximum function evaluations as 15 + 5$Q$, where $Q$ is the number of considered networks.
In our examples Bayes optimisation returned virtually identical results to a grid search, but incurred a significantly lower computational cost when $\BIC(\beta)$ is hard to evaluate by requiring many fewer evaluations compared with the grid search alternative.

\subsection{Spike-and-slab}{\label{Sec:Spike-and-SlabUncertainty}}
 
The full parameter of interest is $(\mbox{diag}(\Theta),\rho,\eta)$, where $\eta=(\eta_0,\eta_1,\eta_2)$ are the hyper-parameters in \eqref{Equ:Spike-and-SlabPrior}.
To approximate their posterior distribution $\pi(\mbox{diag}(\Theta),\rho,\eta \mid y)$ given the data $y$ we used Hamiltonian Monte Carlo (see \cite{neal:2011} for a review).
Specifically, we developed an \texttt{R} implementation using the \texttt{Stan} software \citep{carpenter:2017}, as well as a Python implementation using the \texttt{NumPyro} package \citep{phan:2019}.
Sections A.2 and A.4 describe further implementation details and our code provides both implementations. The purpose of the \texttt{R} version is to make our methods available to the ample \texttt{R} community, whereas \texttt{NumPyro} provides significant computational savings by using improvements in automatic differentiation and enabling the use of GPUs. 
%The savings were substantial, where \texttt{Stan} required days \texttt{NumPyro} required minutes to run essentially the same model-fitting algorithm, see Section \ref{sec:stan_vs_numpyro}.
%\david{Laura / Jack: please add the runtime comparisons.}\jack{plan to run $p = 10$ simulations and hope the comparison is stark enough} \jack{Somewhere we need to talk about the implementation of the spike-and-slab as well!}
The savings were substantial, Section D demonstrates that greater than an order of magnitude speed up was possible even in simple experimental settings.
 
The output of both implementations are $N$ posterior samples $(\mbox{diag}(\Theta^{(i)}), \rho^{(i)}, \eta^{(i)})$ for $i=1,\ldots,N$ that can be used to approximate the posterior distribution or suitable summaries such as the marginal posterior mean and standard deviation of any parameter. Of particular interest to us is to estimate the posterior probability for the presence of an edge between any two nodes $(j,k)$, i.e. that the partial correlation $\rho_{jk}$ was generated by the slab in \eqref{Equ:Spike-and-SlabPrior}.
We next discuss how to estimate said posterior probability using the posterior samples.
 
To ease notation re-write the prior as
\begin{align}
    \pi(\rho_{jk} \mid \eta)% = (1-w_{jk}(\eta)) \mbox{DE}(\rho_{jk}; 0, s_0) + w_{jk}(\eta) \mbox{DE}\left(\rho_{jk}; \eta_0^T x_{jk}, s_0 (1+ \exp\left\{ - \eta_1^T x_{jk} \right\}) \right)\nonumber\\
	%\pi(\rho_{jk}) &= (1- w_{jk})\pi(\rho_{jk}; spike) + w_{jk} \pi(\rho_{jk}; slab)\\
	&= (1- w_{jk}(\eta))\pi_0(\rho_{jk} \mid \eta) + w_{jk}(\eta) \pi_1(\rho_{jk} \mid \eta)\label{Equ:Spike-and-SlabConditioning}
\end{align}
where $\pi_0(\rho_{jk} \mid \eta)$ is the spike prior density, $\pi_1(\rho_{jk} \mid \eta)$ the slab prior density, and $w_{jk}(\eta)$ the slab prior probability.
The idea is that any $\rho_{jk}$ generated by the spike takes a near-zero value, i.e. the partial correlation is either truly zero or small enough to be practically irrelevant.
%Further, for the purposes of the model, being in the slab or the spike is considered to capture the presence or absence of an edge respectively. 
Let $z_{jk}=1$ indicate that $\rho_{jk}$ was generated from the slab and $z_{jk}=0$ that it was generated from the spike, i.e. $P(z_{jk}=1 \mid \eta)= w_{jk}$. A measure of evidence in favor of the presence of the edge is the posterior probability
\begin{align}
    P(z_{jk} =1 \mid y)= \int P(z_{jk}=1 \mid \rho_{jk}, \eta) \pi(\rho_{jk}, \eta \mid y) d\rho_{jk} d\eta, \label{eq:postprob_edge}
\end{align}
%\jack{we don't have $\eta^{(i)}$ cause we do empirical Bayes }
where from Bayes rule 
%As a result, for any $\rho_{jk}$, we can reverse the conditioning of the spike-and-slab formulation \eqref{Equ:Spike-and-SlabConditioning} using Bayes rule and calculate
\begin{align}
P(z_{jk} =1 \mid \rho_{jk}, \eta)= 
\frac{w_{jk}(\eta) \pi_1(\rho_{jk} \mid \eta)}{(1-w_{jk}(\eta)) \pi_0(\rho_{jk} \mid \eta) + w_{jk}(\eta) \pi_1(\rho_{jk} \mid \eta)}\label{Equ:slab_post_prob}.
%p(\textrm{ slab } | \rho_{jk}, \eta) &= \frac{p(\textrm{ slab }| \eta)\pi(\rho_{jk} | \textrm{ slab}, \eta) }{p(\textrm{ spike } |  \eta) \pi(\rho_{jk} | \textrm{ spike}, \eta) + p(\textrm{ slab } | \eta)\pi(\rho_{jk} | \textrm{ slab}, \eta) }\nonumber\\
%&= \frac{w(\eta)_{jk}\pi(\rho_{jk} | \textrm{ slab }) }{(1- w(\eta)_{jk})\pi(\rho_{jk} | \textrm{ spike })  + w(\eta)_{jk}\pi(\rho_{jk} | \textrm{ slab }) },
\end{align}
%where $p(\textrm{ slab } | \rho_{jk}, \eta)$ captures the probability that an edges is present between nodes $j$ and $k$ given $\rho_{jk}$ was estimated. 
%The values of $\rho_{jk}$ and $\eta$ that we condition upon could come from the posterior for these values $\pi(\rho, \eta |y)$, producing a fully Bayesian estimate of the probability. More simply however, they could from estimates $\hat{\eta}$ and $\hat{\rho}_{jk}$ outlined in \eqref{Equ:eta_marginalMAP} and \eqref{Equ:Theta_marginalMAP} respectively.
 
Given $B$ posterior samples from $\pi(\rho,\eta \mid y)$, \eqref{eq:postprob_edge} may be easily estimated by 
%This suggests two natural estimators for \eqref{eq:postprob_edge}. First,
\begin{align}
    \hat{P}(z_{jk} =1 \mid y)= \frac{1}{N} \sum_{I=1}^N P(z_{jk}=1 \mid \rho_{jk}^{(I)}, \eta^{(I)})
    \label{eq:pp_mcestimate}
\end{align}
%where $(\rho_{jk}^{(I)}, \eta^{(I)})$ for $I=1,\ldots,N$ are posterior samples from $\pi(\rho_{jk}, \eta \mid y)$.
%A second and slightly simpler alternative is to obtain the posterior mode estimators $\hat{\eta}$ and $\hat{\rho}_{jk}$ outlined in \eqref{Equ:eta_marginalMAP} and \eqref{Equ:Theta_marginalMAP} respectively, and then use
%\begin{align}
%    \hat{P}(z_{jk}=1 \mid y)= P(z_{jk} =1 \mid \hat{\rho}_{jk}, \hat{\eta}).
%    \label{eq:pp_pmestimate}
%\end{align}
The description above applies in a full Bayesian treatment where $\eta$ has a posterior distribution, in our empirical Bayes framework we simply replaced $\eta$ by $\hat{\eta}$ in \eqref{Equ:Spike-and-SlabConditioning}-\eqref{eq:pp_mcestimate}.
 
Our decision rule is to include edge $(j,k)$ whenever $\hat{P}(z_{jk}=1 \mid y) \geq t$ for some threshold $t \in [0,1]$. We used $t=0.95$.
In problems where the goal is to estimate $\Theta$ it is customary to use $t=0.5$, see \cite{barbieri:2004}. In contrast, in structural learning where one seeks to control the posterior expected false discovery proportion below some given level $\alpha$, \cite{mueller:2004} showed that the optimal threshold maximising statistical power is to set the largest $t$ such that
\begin{align}
    \frac{1}{|D|} \sum_{(j,k) \in D} \hat{P}(z_{jk}=0 \mid y) \leq \alpha
    \nonumber
\end{align}
where $D$ is the set of included edges.
%That is, one sets $t$ such that the average $\hat{P}(z_{jk}=0 \mid y)$ over included edges is below $\alpha$.
In particular, setting $t=1 - \alpha$ ensures that the posterior expected false discovery proportion is below $\alpha$.

\subsection{Empirical Bayes} \label{ssec:ebayes}
 
%As explained in Section \ref{ssec:spikeslab_nglasso}, we use the marginal posterior distributions $\pi(\eta \mid y)$ to provide inference on $\eta$, and the empirical Bayes posterior
%\begin{align}
%    \pi(\Theta| y, \hat{\eta})= f(y | \Theta)\pi(\Theta| \hat{\eta}).\label{Equ:Theta_EBposterior}
%\end{align}
%provide inference for $\Theta$. In \eqref{Equ:Theta_EBposterior} $\hat{\eta}$ maximises the marginal posterior of $\eta$ given the data
%\begin{align}
%	\hat{\eta} :&= \argmax_\eta \pi(\eta \mid y)= \argmax_{\eta} \int \pi(\Theta, \eta | y) d\Theta= \argmax_{\eta} \int f(y | \Theta)\pi(\Theta| \eta)\pi(\eta) d\Theta.\nonumber%\label{Equ:eta_marginalMAP}%\nonumber
%\end{align}
%
The empirical Bayes estimate $\hat{\eta}$ discussed in Section \ref{ssec:spikeslab_nglasso} requires marginalizing the joint posterior $\pi(\Theta, \eta \mid y)$. This is possible given $N$ posterior samples $(\Theta^{(i)},\eta^{(i)})$ for $i=1,\ldots,N$ from the latter, since then by definition $\eta^{(i)}$ are samples from $\pi(\eta \mid y)$. 
Then one may obtain $\hat{\eta}$ by maximising a kernel density estimate of $\pi(\eta \mid y)$, for example. Given that the accuracy of kernel density estimators degrades as dimensionality grows, in our examples when $\mbox{dim}(\eta)>2$ we instead obtain marginal mode estimators $\hat{\eta}_j= \argmax_{\eta_j} \pi(\eta_j \mid y)$.
%\begin{align}
%    \hat{\eta}_j= \argmax_{\eta_j} \pi(\eta_j \mid y).
%\end{align}
%\jack{I wasn't sure if we wanted this separate from the empirical Bayes discussion in Section \ref{ssec:spikeslab_nglasso}}
 
%\jack{the benefit we see in separating the prior on the diagonal and off-diagonal elements, can we link in with the stan literature on this}
 
\section{Simulation study} \label{sec:simulations}
 
We conducted a simulation study to illustrate two important practical points.
First, that when the network data are informative regarding the structure of $\Theta$, incorporating said data improves inference.
Second as just as important, that when the network data are useless inference does not suffer too much.
%as proof of principle that, by incorporating external network data, one may improve graphical model inference.
To this end, we compared standard \GLASSO with the network \GLASSO of Section \ref{ssec:network_penalisation} and the network graphical spike-and-slab of Section \ref{ssec:spikeslab_nglasso} in several settings.
We also considered the \siGGM method \cite{higgins2018integrative}, which is analogous to the network \GLASSO in \eqref{equ:NGLASSO_objective} but hyper-parameters are set to enforces the assumption that the network data are related to $\Theta$, rather than learning from data whether this is the case or not.
As discussed in Section \ref{sec:computation}, the network \GLASSO hyper-parameters $\beta$ are set via the \BIC using grid-search optimisation, and the spike-and-slab hyper-parameters $\eta$ using empirical Bayes.
%In all settings we generated 50 independent datasets where $y_i \sim \mathcal{N}(0,\Theta^{-1})$ independently across $i=1,\ldots,n$, with $\Theta_{jj}=1$, upper- and lower-diagonal terms $\Theta_{jk}=\Theta_{kj}=1$ for $j=k+1$, and $\Theta_{jk}=0$ otherwise, and there is a single binary network matrix $A$.
We considered a setting where there is a single binary network $A$ with entries $a_{jk} \in \{0,1\}$ and considered $p \in \{10,50\}$ and sample sizes $n \in \{100,200\}$ (results for $n=500$ are in Table A.2).
We then generated 50 independent datasets where $y_i \sim \mathcal{N}(0,\Theta^{-1})$ independently across $i=1,\ldots,n$.
We set the data-generating $\Theta$ to have unit diagonal and most non-zero entries along the main tri-diagonal ($\Theta_{jk}$ where $|j-k|=1$).
Specifically, a proportion of 0.95 of the tri-diagonal entries were set to non-zero values uniformly spaced in $[0.2,0.5]$.
Regarding entries outside the main tri-diagonal (i.e. $\Theta_{jk}$ where $|j-k|>1$), a proportion of $0.5/p$ were set to non-zero values uniformly spaced in $[-0.1,0.1]$
%For our simulations we first defined a tri-diagonal graph adjacency matrices $H$ of dimension $p \in \{10, 50\}$ with upper- and lower-diagonal terms $h_{jk}=h_{kj}=1$ for $j=k+1$, and $h_{jk}=0$ otherwise. From $H$ we generate $\Theta$, where for $h_{jk} = 1$, $\theta_{jk} = 0$ with probability 0.05 and with probability 0.95 $\theta_{jk}$ is spaced evenly between 0.2 and 0.5. For $h_{jk} = 0$, $\theta_{jk} = 0$ with probability $1 - \frac{0.5}{p}$, and with probability $\frac{0.5}{p}$, $\theta_{jk}$ is spaced evenly between -0.1 and 0.1. 
(i.e. the number of edges in the graphical model grows linearly with $p$).  
%
%We considered a number of variables $p = 10,20,50$ and sample sizes $n = 100, 200, 500$.
%\david{Li, are we still using 50 simulations, or perhaps a bit more? If so please update.}
%\li{Yes,we are still using 50 simulations}
 
%The primary purpose of the study is two-fold. First, to show that if the network data provide no valuable information about the graphical model structure, then our methods perform similarly to standard \GLASSO where the network data is not used. Second, to show that as the network becomes more informative, the inference provided by our methods improves gradually. 
We consider a setting where the network data are useless (independent network), and two settings where they are increasingly informative.
To measure the degree to which the network data $a_{jk} \in \{0,1\}$ is informative we count the proportion of overlaps where $a_{jk} = \mbox{I}(\Theta_{jk} \neq 0)$, 
i.e. the presence/absence of an edge in the network $A$ matches that of an edge in $\Theta$.
We considered the following settings:
%\begin{enumerate}
%    \item Independent network: the proportion of $a_{jk}=\mbox{I}(\Theta_{jk} \neq 0)$ is that expected by chance when $a_{jk}$ and $\mbox{I}(\Theta_{jk} \neq 0)$ are independent.
%    
%    \item Mildly informative network: the proportion of $a_{jk}=\mbox{I}(\Theta_{jk} \neq 0)$ is 0.75.
%    
%    \item Strongly informative network: the proportion of $a_{jk}=\mbox{I}(\Theta_{jk} \neq 0)$ is 0.85. 
%\end{enumerate}
\begin{enumerate}
\item Independent network: The tri-diagonal elements of A are set such that half of them are 1 and half of them 0, equally for the elements outside the main tri-diagonal, half of these are 1 and half of these are 0. This led to a 0.533 and 0.502 proportion of edges that agree between $A$ and $\mbox{I}(\Theta \neq 0)$ for $p = 10$ and $50$ respectively. 
\item Mildly informative network: The tri-diagonal elements of A are set such that the proportion $a_{jk} = 1$ is 0.75, alternatively for the elements outside the main tri-diagonal the proportion of $a_{jk} = 1$ is 0.25. This led to a 0.778 and 0.747 proportion of edges that agree between $A$ and $\mbox{I}(\Theta \neq 0)$ for $p = 10$ and $50$ respectively.
\item Strongly informative network: The tri-diagonal elements of A are set such that the proportion $a_{jk} = 1$ is 0.85, alternatively for the elements outside the main tri-diagonal, the proportion of $a_{jk} = 1$ is 0.15. This led to a 0.867 and 0.844  proportion of edges that agree between $A$ and $\mbox{I}(\Theta \neq 0)$ for $p = 10$ and $50$ respectively.
\end{enumerate}

Code to reproduce our simulations is available in the GitHub repository.
For each setting, we report the mean squared estimation error (MSE), the false discovery rate (FDR), and the false negative rate \citep{benjamini:1995}. The FDR is the expected proportion of false positive edges among the edges estimated to be present, a measure of type I error, whereas the FNR is the expected proportion of false negative edges among those not reported to be present, which measures statistical power.
Under the \GLASSO methods, an edge is declared if the corresponding estimate of $\rho_{jk}$ was non-zero (rounded to 5 decimal places). For the spike-and-slab model an edge is declared when the posterior probability that $\rho_{jk}$ arises from slab \eqref{Equ:slab_post_prob}, conditional on empriical Bayes estimates $\hat{\eta}$ is above 0.95. %So doing seeks to control the FDR to less than 0.05.

Table \ref{tab:sim_results0.95} summarises the results. For all sample sizes, the network \GLASSO significantly reduced the MSE when the network data were mildly or strongly informative ($A_{0.75}$ and $A_{0.85}$), whereas it attained a similar MSE to standard \GLASSO in the uninformative network setting $(A_{ind})$.
The FDR was significantly above the usually accepted level of 0.05.
%In contrast, the spike-and-slab prior attains lower FDR levels, at the cost of a slightly increased MSE. Altogether these findings suggest that our penalised likelihood framework may be more suitable for parameter estimation/prediction, whereas our Bayesian framework may be preferable in model selection or hypothesis testing problems.
Regarding the spike-and-slab formulations, they consistently achieved an FDR below 0.05 and a small FNR, and in large $p$ situations a further improvement of the MSE compared with the network-GLASSO methods. Adding network data improved the spike-and-slab MSE and FNR, particularly when $p$ was large relative to $n$. The FDR did not noticeably improve, but it was already near-zero when not using the network data. 
These findings suggest that the spike-and-slab formulations tend to attain better inference than the \GLASSO counterparts. However the latter may be more appealing in settings with pressing computational demands.
For example, in the $p=50$, $n=100$, $A_{.85}$ setting \GOLAZO took just over 5 minutes to run, whereas the \texttt{NumPyro} spike-and-slab implementation took close to 20 minutes (and \texttt{Stan} nearly 2 hours),
see Section D for further details. 

We stress that when the network data are useless ($A_{ind}$) the performance of Network \GLASSO remained similar to \GLASSO, and that of Network SS to that of a standard spike-and-slab.
In contrast the performance of \siGGM was poor in this setting, illustrating the practical value of assessing whether the network data is useful for inference, as done in our two frameworks.
In the informative network data settings the performance of \siGGM improved, although its MSE was higher than for our methodology and the FDR levels significantly above 0.05.
%We see that Network \GLASSO generally achieves a smaller MSE, considerably smaller FDR and slightly larger FNR than \siGGM. In particular the Network \GLASSO does not assume the network matrix is informative and therefore improves performance over \siGGM to the greatest extent when the network matrix is in fact uninformative. 

%\jack{
%Table \ref{tab:sim_results_higgins} \jack{appendix?} compares the network \GLASSO with the \MAP point estimation of \siGGM \cite{higgins2018integrative}. We see that Network \GLASSO generally achieves a smaller MSE, considerably smaller FDR and slightly larger FNR than \siGGM. In particular the Network \GLASSO does not assume the network matrix is informative and therefore improves performance over \siGGM to the greatest extent when the network matrix is in fact uninformative. 
%} \jack{something about uninterpretable parameetrs}

%\david{The paragraph above is an educated guess on what we'll observe, based on Li's latest results, please update accordingly.}
 
%\david{I simplified the results table to only show MSE, FDR, and FNR (the latter two defined in the standard way of \cite{benjamini:1995}. Otherwise, the results get unreadable. Li, after adding your results to Table \ref{tab:sim_results0.95}, please remove your other tables.}
 
\begin{table}[!ht]
    \centering
    %\caption{Simulation results under non-informative network $A_{ind}$, mildly and strongly informative networks $A_{0.75}$ and $A_{0.85}$. Under the SS and network SS models a 0.95 posterior probability threshold was used to declare the presence of an edge}
    \caption{Simulation results under non, mildly and strongly informative networks $A_{ind}$, $A_{0.75}$ and $A_{0.85}$. For SS and network SS models edges declared when  posterior probability $> 0.95$.}
    \begin{tabular}{|cc|ccc|ccc|} \hline
    & & \multicolumn{3}{c|}{$p=10$} & \multicolumn{3}{c|}{$p=50$} \\
       & $n$ & MSE & FDR & FNR & MSE & FDR & FNR \\ \hline
  \GLASSO                     & 100 &0.350  &0.370  &0.098  &3.505 &0.442  &0.292  \\
  Network \GLASSO, $A_{ind.}$ & 100 &0.354  &0.340  &0.122  &3.623 &0.392  &0.306  \\
  Network \GLASSO, $A_{0.75}$ & 100 &0.291  &0.258  &0.093  &2.847 &0.421  &0.251  \\
  Network \GLASSO, $A_{0.85}$ & 100 &\textbf{0.170}  &0.174  &0.120  &2.246 &0.426  &0.223  \\
  SS & 100 & 0.222 & \textbf{0.000} & 0.086 & 1.611 & \textbf{0.000} & 0.023\\
  Network SS, $A_{ind.}$      & 100 & 0.237 & 0.003 & 0.082 & 1.631 & 0.004 & 0.025  \\
  Network SS, $A_{0.75}$      & 100 & 0.234 & 0.007 & 0.073 & 1.462 & 0.005 & 0.023 \\
  Network SS, $A_{0.85}$      & 100 & 0.189 & 0.047 & 0.060 & \textbf{1.280} & 0.002 & 0.022 \\ 
\siGGM, $A_{ind}$      & 100 & 0.534 & 0.683 &0.047   & 4.815 & 0.866 & 0.017 \\ 
\siGGM, $A_{0.75}$     & 100 & 0.304 & 0.492 & 0.019   & 3.203 & 0.837 & 0.010 \\ 
\siGGM, $A_{0.85}$     & 100 & 0.197 & 0.385 & \textbf{0.028}   & 2.749 & 0.794 & \textbf{0.009} \\ \hline
  \GLASSO                     & 200 &0.184  &0.416  &0.022  &1.794 &0.476  &0.181  \\
  Network \GLASSO, $A_{ind.}$ & 200 &0.201  &0.378  &0.040  &1.871 &0.439  &0.189  \\
  Network \GLASSO, $A_{0.75}$ & 200 &0.161  &0.309  &0.022  &1.515 &0.412  &0.181  \\
  Network \GLASSO, $A_{0.85}$ & 200 &0.096  &0.204  &0.098  &1.241 &0.388  &0.173  \\
  SS & 200 & 0.109 & \textbf{0.000} & 0.056 & 0.672 & 0.002 & 0.017 \\
  Network SS, $A_{ind.}$      & 200 & 0.127 & 0.007 & 0.053 & 0.671 & \textbf{0.002} & 0.017  \\
  Network SS, $A_{0.75}$      & 200 & 0.114 & 0.007 & 0.048 & 0.597 & 0.003 & 0.015  \\
  Network SS, $A_{0.85}$      & 200 & \textbf{0.091} & 0.023 & 0.041 & \textbf{0.527} & 0.002 & 0.015 \\
  \siGGM, $A_{ind}$      & 200 & 0.273 & 0.666 & \textbf{0.015}   & 2.108 & 0.839 & 0.009 \\ 
  \siGGM, $A_{0.75}$     & 200 & 0.181 & 0.487 & \textbf{0.015}   & 1.470 & 0.797 & 0.009 \\ 
  \siGGM, $A_{0.85}$     & 200 & 0.105 & 0.381 & 0.026   & 1.138 & 0.751 & \textbf{0.008} \\
  \hline
%  \GLASSO                     & 500 &0.082  &0.367  &0.002  &0.825 &0.410  &0.032  \\
%  Network \GLASSO, $A_{ind.}$ & 500 &0.085  &0.315  &0.007  &0.766 &0.443  &0.035 \\
%  Network \GLASSO, $A_{0.75}$ & 500 &0.066  &0.270  &0.000  &0.604 &0.419  &0.031  \\
%  Network \GLASSO, $A_{0.85}$ & 500 &0.045  &0.195  &0.008  &0.512 &0.386  &0.027  \\
%  SS & 500 & \textbf{0.030} & 0.000  & 0.023 & 0.198 & 0.002 & \textbf{0.009}\\
%  Network SS, $A_{ind.}$      & 500 & 0.034 & \textbf{0.000} & 0.023 & 0.201 & \textbf{0.001} & 0.010 \\
%  Network SS, $A_{0.75}$      & 500 & 0.032 & 0.002 & \textbf{0.018} & 0.193 & 0.001 & 0.009 \\
%  Network SS, $A_{0.85}$      & 500 & 0.033 & 0.008 & 0.022 & \textbf{0.183} & 0.001 & 0.009 \\
%       \hline
    \end{tabular}
    \label{tab:sim_results0.95}
\end{table}

\section{Results} \label{sec:results}
 
%\david{Do we assess MCMC convergence or ESS anywhere? We should add at least a mention about this in the main text, how many iterations we run in each example etc.}
 
%\subsection{\COVID mortality}
\subsection{\COVID infection rates}
\label{ssec:covid_results}

Recall that the outcomes are log-infection rates for USA counties during $n = 97$ weeks and that a regression model was fit to account for various factors driving the mean infection rates.
These included week and county indicators, temperature and vaccination rate and serial correlation terms, see Section \ref{sec:applications}.
%See also Section \ref{sec:covid_dataprocessing} for a description of the data pre-processing. 
The goal is to regress the residual partial correlations between counties, which measure the extent to which \COVID co-evolved in these counties, on three network datasets.
These are a geographical closeness network $A_1$ where $a_{jk}^{(1)}$ is the reciprocal of the log-geographic distance between counties $(j,k)$ (hence larger values indicate smaller distance),
a Facebook network $A_2$ where $a_{jk}^{(2)}$ is the log-Facebook connection index between $(j,k)$,
and a flight network $A_3$ where $a_{jk}^{(3)}$ is the logarithm of 1 + the flight passenger flow between $(j,k)$ (see Section B for more details).
Pearson's correlation between $A_1$ and $A_2$ is 0.746, i.e. there is a large overlap in the information given by both networks and it is hence desirable to use a principled model to disentangle their effects.
 
As a first exercise, we used network \GLASSO to determine what network datasets are informative with respect to the target partial correlations. As $p = 332$ is large and the hyper-parameter dimension is $\mbox{dim}(\beta)=4$, we estimated $\hat{\beta}_{\BIC}$ using Bayesian optimisation, as described in Section \ref{sec:computation}.
% analyze the data using four strategies: a standard \GLASSO using no network information, our network \GLASSO using only the geographical or only the Facebook networks, and using both networks.
Table \ref{tab:results_covid} shows a summary comparing the 8 models defined by the inclusion/exclusion of each network data.
The model attaining the best \BIC value includes the geographical and Facebook networks, suggesting that they both carry relevant information to help learn the graphical model, but not the flight network. 
The estimated coefficients for both networks $(\hat{\beta}_1,\hat{\beta}_2)$ were negative, i.e. counties that are close geographically or highly-connected at Facebook are regularised less. The larger coefficient $\hat{\beta}_2$ in the joint model suggests that the effect of the Facebook network is greater. 
Interestingly, the three network-regularised solutions were significantly sparser relative to the 628 edges detected by \GLASSO.

%% OLD RESULTS ONLY P = 100
%\begin{table}[!ht]
%\centering
%\caption{Four models for the \COVID data. $A_1$ and $A_2$: networks defined by $ 1/\log(Geodistance)$ and $\log(Facebook)$. \BIC values account for the extra hyper-parameters in the network \GLASSO models. 10-fold: 10-fold cross-validated log-likelihood}
%\begin{tabular}{ccccccc}
%  \hline
%  Method & \BIC & $\hat{\beta}_0$ & $\hat{\beta}_1$ & $\hat{\beta}_2$ & Edges & 10-fold\\ 
%   \hline
%   \GLASSO & 7066.313 & -1.711 &  &  &  646 & 273.61\\ 
%   Network \GLASSO - $A_1$ & 5555.422 & -0.132 & -1.211 &  &  245 & 285.00 \\ 
%   Network \GLASSO - $A_2$ & 5561.311 & 0.263 &  & -1.368 &  217 & 286.77 \\ 
%   Network \GLASSO - $A_1$ \& $A_2$ & \textbf{5525.378} & -0.056 & -0.333 & -1.000 &  235 & \textbf{287.22}\\ 
%   \hline
%\end{tabular}
%\label{tab:results_covid}
%\end{table}

\begin{table}[!ht]
\centering
\caption{Eight models for the \COVID data. $A_1$, $A_2$ and $A_3$: networks defined by $ 1/\log(Geodist)$, $\log(Facebook)$ and $A_3 = \log(1+Flights)$. \BIC values account for the extra hyper-parameters in the network \GLASSO models. 10-fold: 10-fold cross-validated log-likelihood}
\begin{tabular}{cccccccc}
  \hline
  Method & \BIC & $\hat{\beta}_0$ & $\hat{\beta}_1$ & $\hat{\beta}_2$& $\hat{\beta}_3$ & Edges & 10-fold\\ 
  \hline
  \GLASSO & 23158.558 & -1.376 &  &  &  & 2637 & 113.208 \\ 
  Network \GLASSO - $A_1$ & 16237.865 & 0.230 & -1.053 &  &  & 1427 & 122.692 \\ 
  Network \GLASSO - $A_2$ & 15207.178 & 0.738 &  & -1.301 &  & 1430 & 122.516 \\ 
  Network \GLASSO - $A_3$ & 24227.657 & -1.295 &  &  & -0.085 & 2602 & 102.794 \\ 
  Network \GLASSO - $A_1$ \& $A_2$ & \textbf{15064.079} & 1.500 & 0.355 & -1.695 &  & 1197 & \textbf{124.372} \\ 
  Network \GLASSO - $A_1$ \& $A_3$ & 16057.853 & 0.527 & -1.193 &  & 0.531 & 1377 & 119.583 \\ 
  Network \GLASSO - $A_2$ \& $A_3$ & 15217.319 & 0.493 &  & -1.131 & 0.377 & 1339 & 122.396 \\ 
  Network \GLASSO - $A_1$, $A_2$ \& $A_3$ & 15448.091 & 0.212 & -0.063 & -1.093 & -0.103 & 1598 & 121.276 \\ 
  \hline
\end{tabular}
\label{tab:results_covid}
\end{table}
 
Despite these solutions being sparser, they included some edges that were not included by \GLASSO.
Figure \ref{fig:unique_edges_covid_geodist} shows edges that were only selected when adding the geographical network $A_1$, which largely correspond to counties that are close to each other. Figure \ref{fig:unique_edges_covid_facebook} shows an analogous plot when using the Facebook network $A_2$, interestingly there are connections between faraway counties in the west, north-east and south-east. Figure B.8 further portrays the estimated graphical model when using both networks. %\jack{Section \ref{Sec:MapPlots} visualises the network given by non-zero elements of the estimated $\Theta$ on top of a map of the U.S}
 
To further assess the relative performance of the eight models, we undertook a 10-fold cross-validation exercise where we assessed the log-likelihood (as a measure of predictive accuracy) in an out-of-sample fashion.
%, iteratively leaving out different 10\% chunks of the data which are used to  assess the log-likelihood of each model fitted on the remaining 90\% of the data in an out-of-sample fashion. 
The models incorporating the Facebook and geographical network also performed much better than standard \GLASSO according to this predictive criterion, despite being remarkably sparser (1,197 vs. 2,637 edges).
 
\begin{figure}[!ht]%[hbt!]
\centering
\begin{subfigure}[b]{\linewidth}
    \includegraphics[trim= {0.0cm 0.5cm 0.0cm 0.5cm}, clip, width = 1\linewidth]{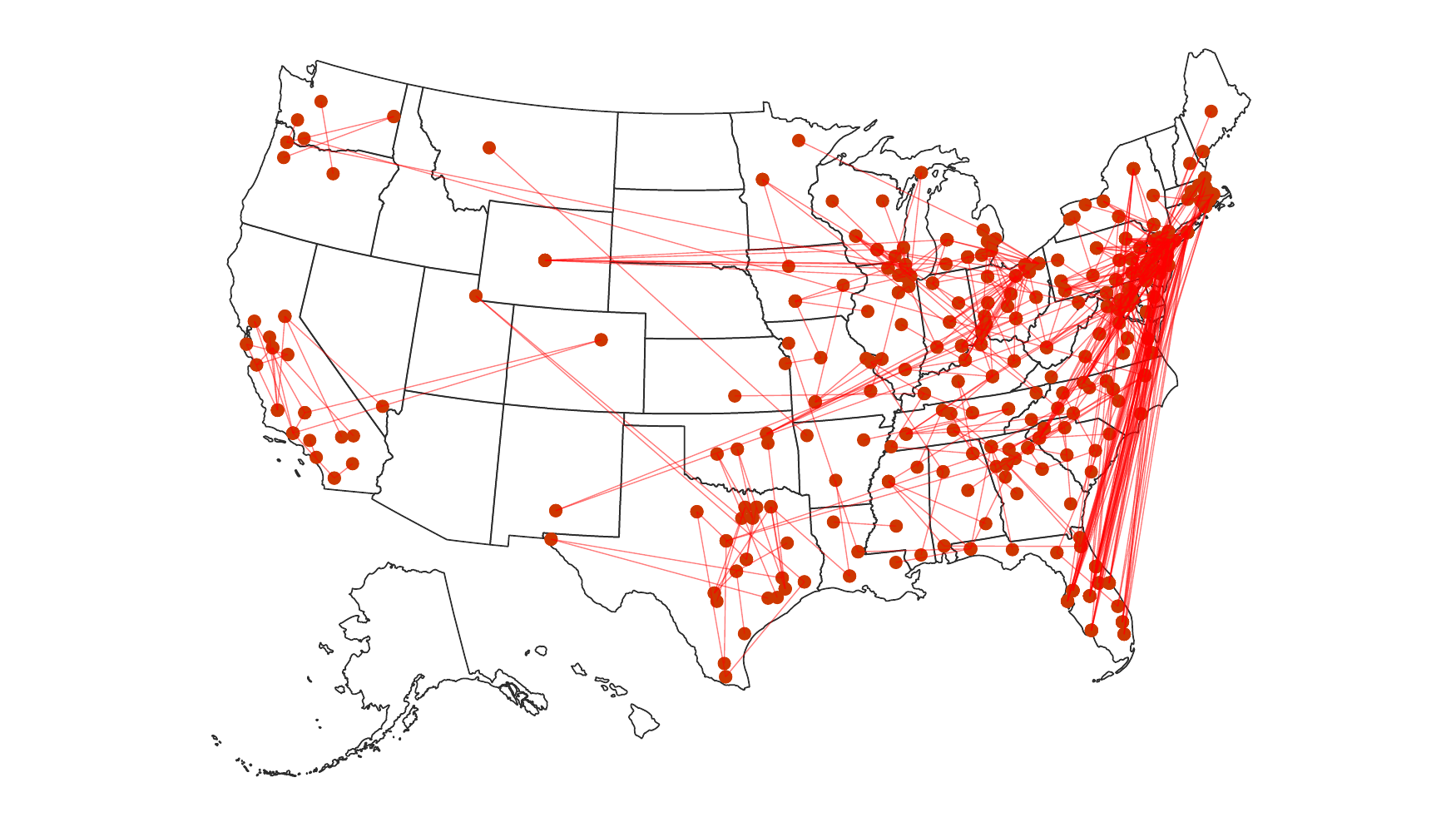}
    \caption{Edges identified by Network \GLASSO{} - $A_1$ (geographical network) but not by \GLASSO}
    \label{fig:unique_edges_covid_geodist}
\end{subfigure}
\vfill
\begin{subfigure}[b]{\linewidth}
    \includegraphics[trim= {0.0cm 0.5cm 0.0cm 0.5cm}, clip, width = 1\linewidth]{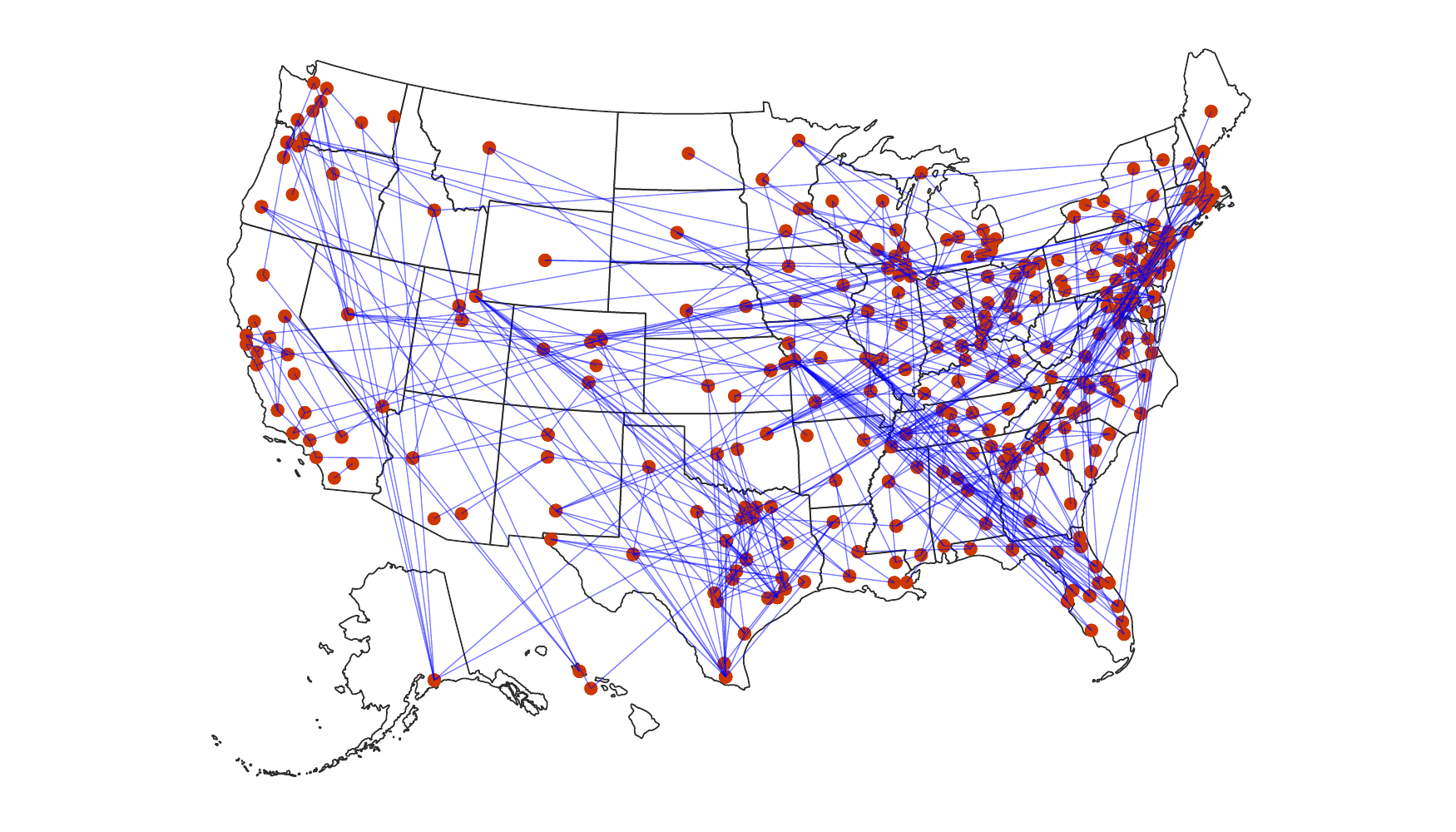}
    \caption{Edges identified by Network \GLASSO{} - $A_2$ (Facebook network) but not by \GLASSO}
    \label{fig:unique_edges_covid_facebook}
\end{subfigure}
\caption{Edges identified by Network \GLASSO but not by standard \GLASSO. 
%\jack{\BIC - $\gamma_{\EBIC} = 0$}. 
%The coordinates of Honolulu (Hawaii) have been adjusted from (-164.44361, 23.87280) to (-158.2019740, 21.4613654) for presentation
%and subsequent plots of the same type were treated in the same way.
}
%trim={<left> <lower> <right> <upper>}
\label{fig:unique_edges_covid}
\end{figure}

We next applied our spike-and-slab framework to obtain further insights on how the proportion of edge connections, as well as the mean partial correlation, depend on the two networks.
We initially ran 20,000 MCMC iterations, thinning to 1 in every 10, to sample from $\pi(\mbox{diag}(\Theta),\rho,\eta \mid y)$. The resulting chains for network hyperparameters $\eta$ had average effective sample size (\ESS) of 473.8 and average R-hat value of 1.004, providing us with sufficient confidence to use these chains to do inference on $\eta$ and produce empirical Bayes estimates. We then ran a second MCMC, fixing $\hat{\eta}$, for 4,000 iterations thinning to 1 in every 10. The resulting chains for partial correlations $\rho$ had an average \ESS of 372.8 and an average R-hat value of 1.003, suggesting that the chain converged.

As discussed earlier, the bottom panels in Figure \ref{fig:glasso_covid} display the fitted spike-and-slab distribution as a function of both the geographical closeness and Facebook networks. The corresponding plot for the flight network is in Figure B.6.
Table \ref{Tab:SS_COVID_CIs} presents the corresponding (empirical Bayes) hyper-parameter estimates, and Figure B.7 displays the estimated prior slab mean and prior slab probability as functions of the networks.
%\david{Below I adjusted the text assuming that we go with the parameterization where $\rho_{jk}$ is the partial correlation (not the negative pc) and that $\eta_1>0$ indicates larger $\rho_{jk}$. The sign of some of the entries in Table 3 may need to be adjusted, I guess those for $\eta_1$?}
%{\color{blue}{
Recall that positive entries in $\eta_0$ and $\eta_1$ indicate that the mean and variance (respectively)
%}}.
of the non-zero $\rho_{jk}$, i.e. the slab location and variance parameters, increase for counties that are strongly connected in the network.
%{\color{blue}{
Similarly, positive entries in $\eta_2$ indicates a higher probability of there being a non-zero partial correlation between such counties.
%}}
Table \ref{Tab:SS_COVID_CIs} hence shows that counties strongly connected in the Facebook and geographic networks had more non-zero partial correlations (relative to less connected counties), and that both the mean and variance of the partial correlations were also larger.
%was estimated to increase with both the geographic closeness and Facebook networks (positive $\hat{\eta}_{01}$, $\hat{\eta}_{02}$) but 
The flight passenger network was estimated to have no effect on there being a non-zero partial correlation, nor on their mean, and a mild effect on the variance of non-zero partial correlation (in agreement with the \BIC and cross-validation results in Table \ref{tab:results_covid}). 
%Their variance (slab dispersion parameter) was also estimated to increase with both geographic closeness and Facebook networks (negative $\hat{\eta}_{11}$, $\hat{\eta}_{12}$) and to decrease with the flight passenger network (negative $\hat{\eta}_{13}$). 
%The estimated positive coefficients for $(\hat{\eta}_{21},\hat{\eta}_{22})$, parameterising the slab probability (i.e. non-zero partial correlation) in terms of the geographic closeness and Facebook networks inidicate that bigger network values correspond to greater probabilities of being non-zero. The estimate $\hat{\eta}_{23}$ was negative indicating the opposite for the flight passenger network, however its credibility interval included 0.
%The estimated coefficients $(\hat{\eta}_{21},\hat{\eta}_{22})$ parameterising the slab probability (i.e. non-zero partial correlation) had opposite signs. Larger Facebook connectivity was associated with a larger probability of the corresponding $\rho_{jk}$ being non-zero, while under the Geographical distance the opposite was observed. 
%However the credibility interval for the latter ($\eta_{21}$) included 0, i.e. while there was strong evidence that the geographical distance affects the mean and variance of the partial correlations, this was not the case for the probability of a non-zero partial correlation (after accounting for the Facebook index).
%Indeed, Figure \ref{fig:ss_covid_probslab} illustrates that the probability of a non-zero partial correlation increases from near-zero to near-one as the Facebook index grows, whereas that is not the case for the geographical index.
The coefficients for the Facebook network are larger in absolute value than those of the geographical network indicating that the Facebook network has the stronger association with the dependence in \COVID rates. This is further illustrated in Figure B.7. % illustrates how the three networks affect the mean of the non-zero partial correlations and their probability of being non-zero, while both the geographical closeness and Facebook networks act to increase these, the affect of the Facebook network is greater than the geographical networks 
These results illustrate the greater flexibility provided by the network spike-and-slab models to portray the relation between the network data and the partial correlations. 
%We also note that the coefficients of the Facebook network are estimated as uniformly bigger than those of the Geographical network, suggesting the former is more informative. 
For completeness, Table B.1 summarises the selected graphical model under a 0.5 and 0.95 posterior probability threshold for declaring an edge.

Altogether, our results support that there is a fairly strong association between social media connections and the co-evolution of the pandemic, even when accounting for geographical closeness and a number of factors driving the mean structure, and that said association is not driven by airplane travel.

%\jack{From the first MCMC we can plot posterior for $\eta$'s, do any of them contain 0?? The bonus of Bayesian e.g. Does $\eta$'s for geographical contain 0?? Can't get from frequentist?, plot of posterior mean and 95\% interval }
 
\begin{table}[!ht]
\centering
\caption{Network spike-and-slab empirical Bayes (marginal MAP) estimates and 95\% posterior intervals for \COVID data. $A_1$,  $A_2$ and $A_3$: networks defined by
$1/log(Geodist)$, $log(Facebook)$ and $log(1+Flights)$. Bold values where the credibility interval includes 0.
%\jack{can we gie more infomative names, location, sclae, prob} 
}
\begin{tabular}{ccccc}
\toprule
& Intercept & $A_1$ & $A_2$ & $A_3$  \\
\midrule
$\eta_0$ (slab location) & -0.008 & 0.006 & 0.017 & \textbf{0.0} \\
95\% interval & (-0.009, -0.005) & (0.003,0.008) & (0.014,0.018) & (-0.002,0.002) \\
$\eta_1$ (slab dispersion) & 2.285 & 0.054 & 0.178 & -0.071 \\
95\% interval & (2.110, 2.507) & (0.003, 0.105) & (0.108, 0.240) & (-0.144, -0.002) \\
$\eta_2$ (slab probability) & -2.694 & 0.336 & 0.771 & \textbf{-0.1} \\
95\% interval & (-3.088,-2.397) & (0.154, 0.513) & (0.608, 0.949) & (-0.247, 0.047) \\
\bottomrule
\end{tabular}

% Original sign:
% \begin{tabular}{ccccc}
% \toprule
% & Intercept & $A_1$ & $A_2$ & $A_3$  \\
% \midrule
% $\eta_0$ (slab location) & 0.008 & -0.006 & -0.017 & 0.0 \\
% 95\% interval & (0.005, 0.009) & (-0.008, -0.003) & (-0.018, -0.014) & (-0.002,0.002) \\
% $\eta_1$ (slab dispersion) & -2.285 & -0.054 & -0.178 & 0.071 \\
% 95\% interval & (-2.507,-2.1100001) & (-0.105, -0.003) & (-0.240, -0.108) & (0.002, 0.14400001) \\
% $\eta_2$ (slab probability) & -2.694 & 0.336 & 0.771 & -0.1 \\
% 95\% interval & (-3.088, -2.397) & (0.15400001, 0.513) & (0.608, 0.94900006) & (-0.24700001, , 0.047) \\
% \bottomrule
% \end{tabular}

\label{Tab:SS_COVID_CIs}
\end{table}

\subsection{Stock market excess returns} \label{ssec:stock_market}

%\david{To give this section a bit more protagonism we could consider adding a figure. For example, the two bottom panels in Fig C.5 (potentially also Fig C.7 if we had the space). We wanna show how our model provides inference that helps interpret the graphical model, just showing a table doesn't fully do that. I made a number of cuts elsewhere so that hopefully we have space for this.}
%\jack{Network names Economic Risks Similairty Index and Policy Risks Similarity Index}

Recall that the outcomes are log-daily excess returns of $p=366$ US companies. The first network is an economic risks network $A_1$ where $a^{(1)}_{jk}$ is the Pearson's correlation between vectors of economic risks faced by firms $j$ and $k$.  The $r$th element of these vectors is $\log(1 + \textrm{prop}_r)$ where $\textrm{prop}_r$ is proportion of 10-K terms that reflect the $r$th economic risk according to the dictionaries of \citet{bakerPolicyNewsStock2019}.  $A_2$ is the equivalent but for vectors of policy risks.
%\david{The description of how the networks were computed doesn't match what we said in Section \ref{sec:applications}, there we didn't mention taking logs, and we mentioned proportion of term uses rather than counts.}
Pearson's correlation between the two networks was 0.301, suggesting that they provide largely different information. See Section C for a description of the data pre-processing.
 
We firstly run \GLASSO using no network data and then network \GLASSO using only the Economic network, only the Policy network, and finally using both networks. Table \ref{tab:results_stock} compares these four models. The model including both networks attained the best \BIC value and their estimated parameters $(\hat{\beta}_1,\hat{\beta}_2)$ are both negative. That is, partial correlations between companies that have large connections in the network are more likely to be non-zero, and are hence less regularised. 
The estimated graphical model when using both networks is sparser than under standard \GLASSO. %, and all the network \GLASSO models are preferable to standard \GLASSO.
 
%\jack{Insert some summary of the data, maybe comparing edges across the sector - want to say that incorporating risk information allows more general structure that sector specific}

%%Old results, only p = 200 
%\begin{table}[!ht]
%\centering
%\caption{Four models for the stock market data. $A_1$ is the Economic network, $A_2$ the Policy network. \BIC values account for the extra hyper-parameters in the network \GLASSO models. 
%10-fold is the 10-fold cross-validation log-likelihood
%\begin{tabular}{ccccccc}
%  \hline
%  Method & \BIC & $\hat{\beta}_0$ & $\hat{\beta}_1$ & $\hat{\beta}_2$ & Edges & 10-fold\\ 
%   \hline
%   \GLASSO & 47505.870 & -1.474 &  &  &  631 & -6857.00\\ 
%   Network \GLASSO - $A_1$ & 46927.615 & -1.289 & -0.368 &  &  664 & -6834.45 \\ 
%   Network \GLASSO - $A_2$ & 47012.765 & -1.105 &  & -0.474 &  604 & \textbf{-6828.51} \\ 
%   Network \GLASSO - $A_1$ \& $A_2$ & \textbf{46853.544} & -1.222 & -0.306 & -0.167 &  654 & -6831.10 \\ 
%   \hline
%\end{tabular}
%\label{tab:results_stock}
%\end{table}

\begin{table}[!ht]
\centering
\caption{Four models for the stock market data. $A_1$ is the Economic network, $A_2$ the Policy network. \BIC values account for the extra hyper-parameters in the network \GLASSO models. 
10-fold is the 10-fold cross-validation log-likelihood}
\begin{tabular}{ccccccc}
  \hline
  Method & \BIC & $\hat{\beta}_0$ & $\hat{\beta}_1$ & $\hat{\beta}_2$ & Edges & 10-fold\\ 
   \hline
 \GLASSO & 74078.33 & -1.639 &  &  & 2623 & -467.128 \\ 
 Network \GLASSO $A_1$ & 72459.55 & -1.137 & -0.677 &  & 2770 & \textbf{-463.683} \\ 
 Network \GLASSO $A_2$ & 73857.75 & -1.107 &  & --0.776 & 2211 & -467.128 \\ 
 Network \GLASSO $A_1$ \& $A_2$ & \textbf{72392.50} & -0.176 & -0.932 & -0.671 & 2058 & -467.42 \\ 
   \hline
\end{tabular}
\label{tab:results_stock}
\end{table}

To further assess the four models we evaluated their out-of-sample log-likelihood using 10-fold cross-validation. The models incorporating only the economic risks network performed best in this prediction exercise. 

\begin{figure}%[hbt!]
\centering
\includegraphics[width =0.49\linewidth]{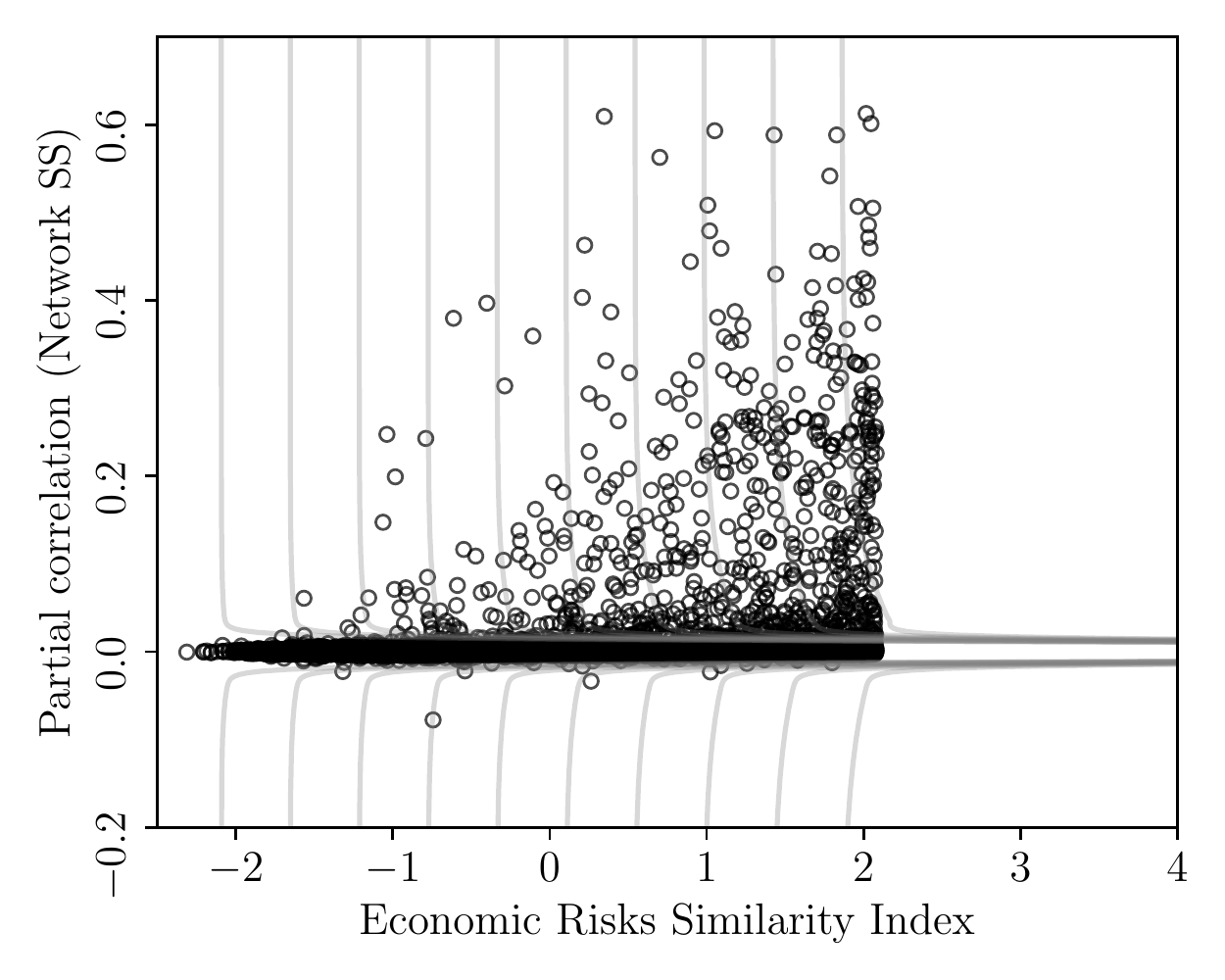}
\includegraphics[width =0.49\linewidth]{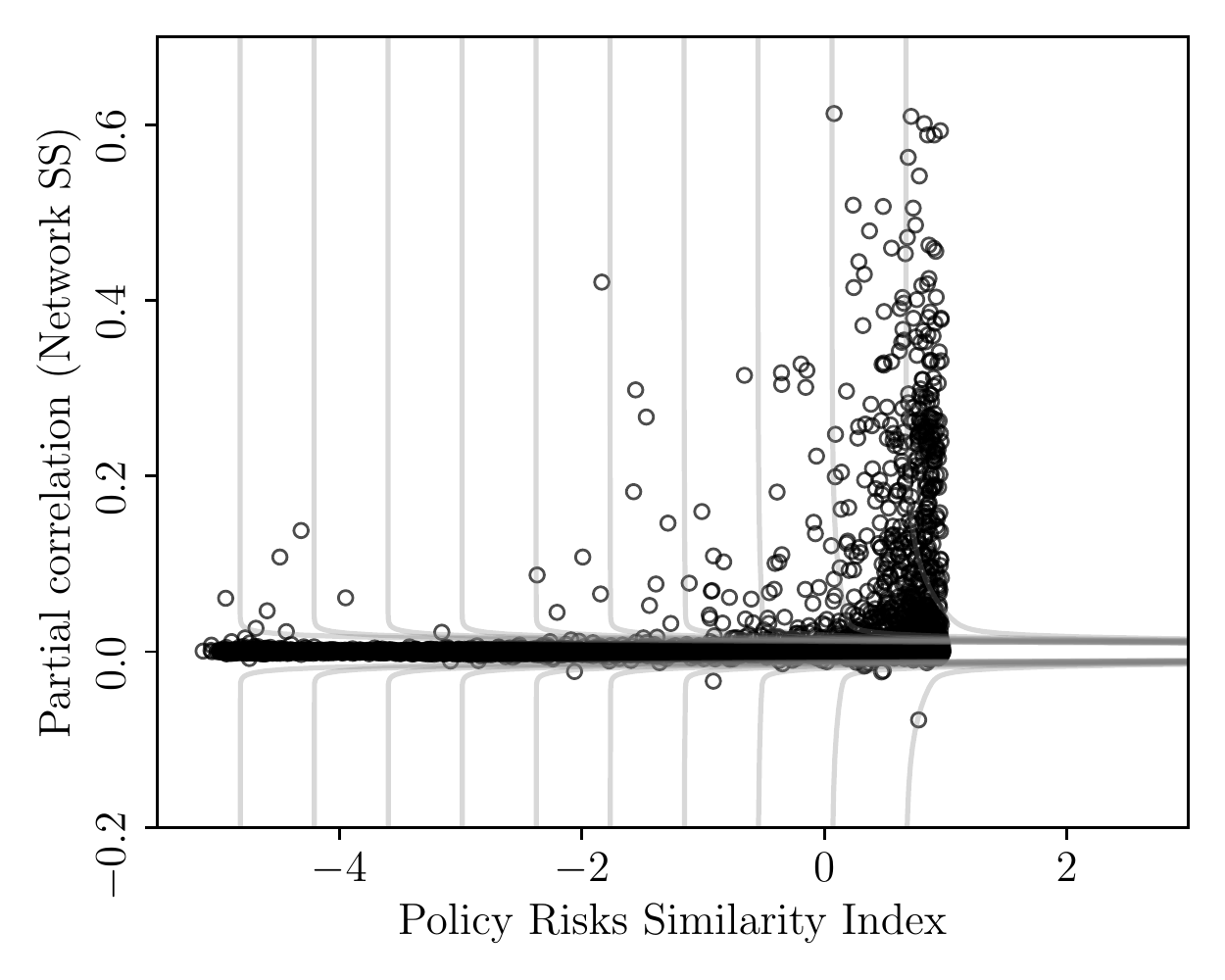}
\includegraphics[width =0.49\linewidth]{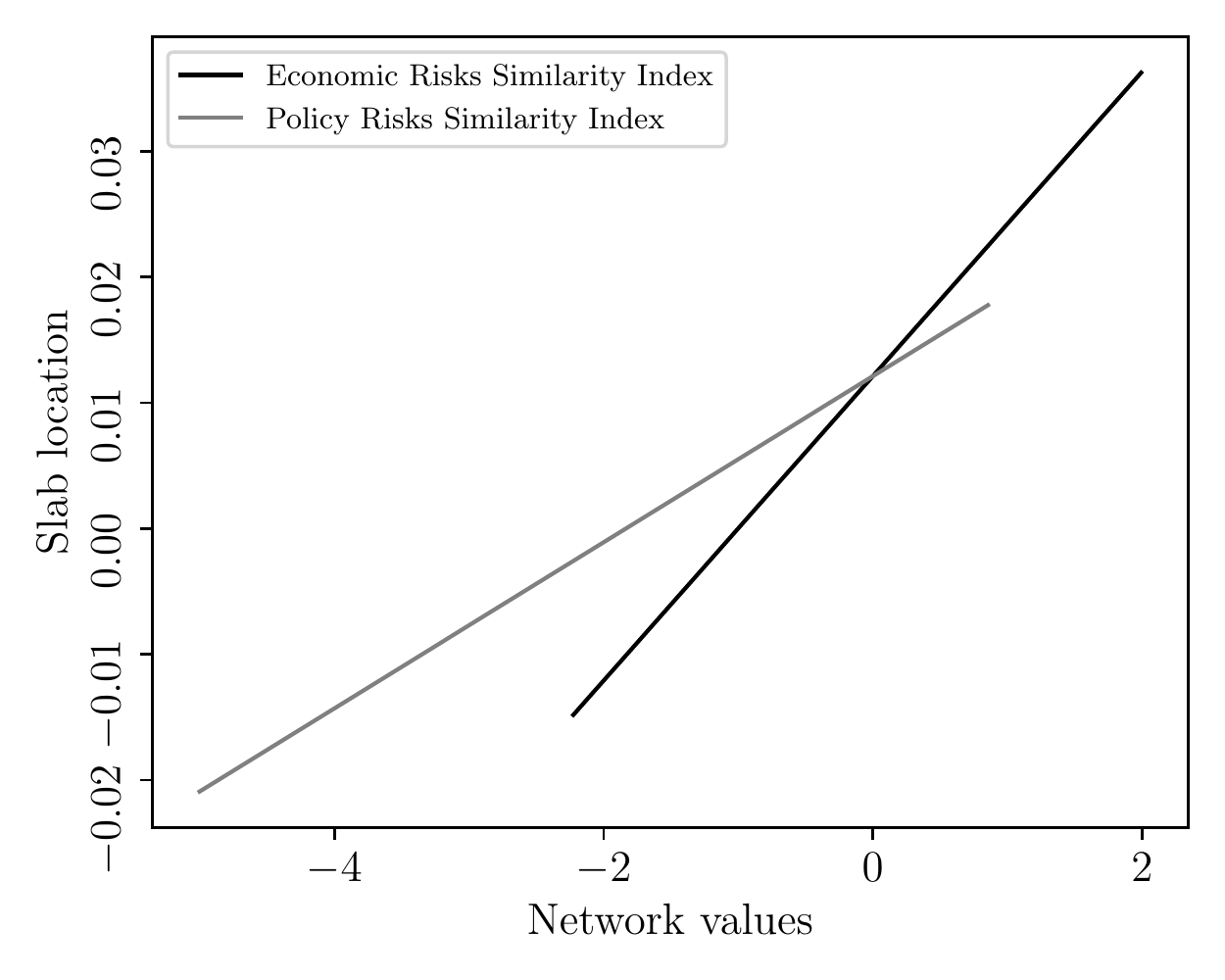}
\includegraphics[width =0.49\linewidth]{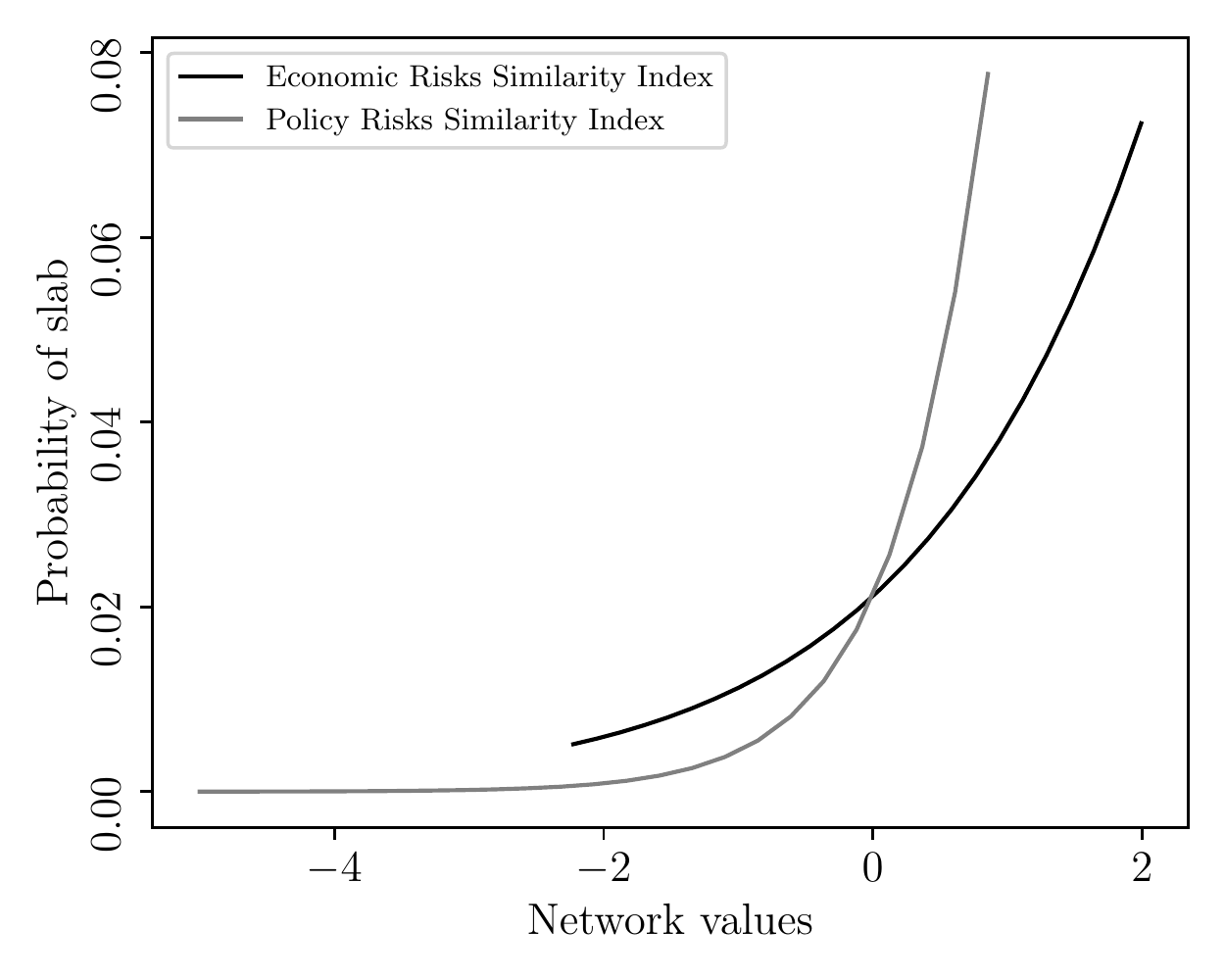}
\caption{Residual partial correlations of the stock market excess returns across firms vs Economy risk (left) and Policy risk (right).  
Top panel: fitted spike-and-slab distributions and fitted partial correlations estimated with network spike-and-slab model. %Bottom panel: prior slab probability as a function of both networks.
Bottom panel: Stock-market data: Slab location (\textbf{left}) and slab probability (\textbf{right}) as a function of both networks estimated by empirical Bayes. %We consider the slab here as the location of the partial correlations, the negation of the $\rho$ model parameters, which is why negative hyperparameter for $\eta_{0}$ (Table \ref{Tab:SS_Stock_CIs}) result in positive effects \jack{This will be corrected in the text anyway}.
}
\label{Fig:GLASSO_vs_Stock}
\end{figure}

To gain further insights into the relation between partial correlations and the network data, we applied our spike-and-slab framework. 
%\jack{Check Laura's input: We ran MCMC once for 10,000 iterations. For the network hyperparameters $\eta$, the resulting chains have minimal ESS and maximal absolute R-hat difference from 1 values of 275.184 and 0.005 respectively (mean ESS: 453.873, mean(abs(rhat-1)): 0.002) allowing us to do inference on these and to produce empirical Bayes estimates. We then ran MCMC again, fixing $\eta$ at their empirical Bayes estimates, for 4,000 iterations thinning to one in every 10 and achieve ESS and R-hat for the $\rho$’s of: min ESS 11.431, mean ESS 370.032, max(abs(rhat-1)): 0.220, mean(abs(rhat-1)): 0.003}
We initially run 10,000 MCMC iterations, thinning to 1 in every 10, to sample from $\pi(\mbox{diag}(\Theta),\rho,\eta \mid y)$. The resulting chains for network hyper-parameters $\eta$ had an average effective sample size (\ESS) of 453.9 and an average R-hat value of 1.002, suggesting MCMC convergence. %providing us with sufficient confidence to use these chains to do inference on $\eta$ and 
From these samples we obtained empirical Bayes estimate $\hat{\eta}$ and then ran a second MCMC, fixing $\hat{\eta}$, for 4,000 iterations thinning to 1 in every 10. The resulting chains for partial correlations $\rho$ had an average \ESS of 370.0 and an average R-hat of 1.003.

% \david{Fig 3 looks a bit strange. In the upper panels I would've expected the slab to have positive mean for larger network connections, based on where I see the estimated partial correlations, but instead the slabs seem to also be at zero? (perhaps the plotted slab covered by points?)}
Figure \ref{Fig:GLASSO_vs_Stock} shows the estimated spike-and-slab distributions for the partial correlations as a function of both networks, and Table \ref{Tab:SS_Stock_CIs} the corresponding hyper-parameter estimates. Recall that the network values were standardised so although they are correlations they do not lie in $[-1, 1]$. %Further, Figure \ref{fig:ss_stock_probslab} displays the estimated effect of both networks on the slab location (mean non-zero partial correlation) and its probability. 
Companies with strong connections in either network are estimated to have both a larger probability of a non-zero partial correlation (positive $\eta_2$) and a larger mean non-zero partial correlation (positive $\eta_0$). Interestingly, the policy network had the larger positive effect on the probability of a non-zero partial correlation, but its effect on their mean is smaller than the economic network's. %Unlike in the \COVID example, for large vales of the networks we still see many 0 partial correlations estimates and therefore the affect of the network 

% \david{Steve, the paragraph below probably needs to be updated.}

In short, both the economic and policy risk networks appear to contain independent information about the partial correlations among firms' stock returns.  The \GLASSO model suggests economic risks are more associated with such correlations: when both networks are included, the estimated impact of the economic risk network on the regularisation is stronger than that of the policy risk network and the out-of-sample predictive exercises prefers only the model with the economic network.  At the same time, the additional structure of the spike-and-slab model reveals a more subtle pattern.  The relationship between the strength of the connection in the economic risks networks and the partial correlation evolves more smoothly than for the strength of the connection in the policy network.  Only when firms are strongly connected in the latter is there an inferred impact on their partial correlations.

Since the pioneering work of \citet{hassanFirmLevelPoliticalRisk2019}, economists have used word-count-based approaches to measure and evaluate firm-level exposure to political and policy risks.  But exposure to policy risks is in part a function of economic risks: for example, firms exposed to air travel via their business model will also be exposed to regulation of the Federal Aviation Administration.  Our findings suggest that, after accounting for shared economic risks, shared policy risks only matter for excess return co-movement once firms are strongly connected.

For completeness, Table C.1 summarises the selected graphical model under a 0.5 and 0.95 posterior probability threshold for declaring an edge.
 
\begin{table}[!ht]
\centering
\caption{Network spike-and-slab empirical Bayes (marginal MAP) estimates and 95\% posterior credible intervals for the stock market data. $A_1$ is the Economic network, $A_2$ the Policy network.}
% FLIPPED locations
\begin{tabular}{cccc}
\toprule
 & intercept & $A_1$ & $A_2$ \\
\midrule
$\eta_0$ (slab location)  & 0.012 & 0.012 & 0.007 \\
95\% interval & (0.01, 0.014) & (0.009,0.014) & (0.003,0.009) \\
$\eta_1$ (slab dispersion) & 2.943 & 0.284 & -0.353 \\
95\% interval & (2.827,3.063) & (0.194, 0.369) & (-0.447,-0.249) \\
$\eta_2$ (slab probability) & -3.834 & 0.644 & 1.59 \\
95\% interval & (-4.122,-3.553) & (0.519, 0.808) & (1.267, 1.93) \\
\bottomrule
\end{tabular}

% % ORIGINAL locations
% \begin{tabular}{cccc}
% \toprule
%  & intercept & $A_1$ & $A_2$ \\
% \midrule
% $\eta_0$ (slab location)  & -0.012 & -0.012 & -0.007 \\
% 95\% interval & (-0.014, -0.01) & (-0.014,-0.009) & (-0.009,-0.003) \\
% $\eta_1$ (slab dispersion) & -2.943 & -0.284 & 0.353 \\
% 95\% interval & (-3.063,-2.827) & (-0.369, -0.194) & (0.249,0.447) \\
% $\eta_2$ (slab probability) & -3.834 & 0.644 & 1.59 \\
% 95\% interval & (-4.122, -3.553) & (0.519, 0.808) & (1.267, 1.93) \\
% \bottomrule
% \end{tabular}
\label{Tab:SS_Stock_CIs}
\end{table}
 
\section{Discussion} \label{sec:discussion}

We believe that our two frameworks to regress a graphical model on network data should have interest beyond our motivating \COVID and stock market applications.
%The penalised likelihood framework allows to regularise partial correlations between two variables based on the strength of their network connection(s), whereas 
Specifically, the Bayesian framework provides a rich model to depict the probability that parameters are non-zero as well as the distribution of non-zero parameters. Such a framework should find applicability in many other problems, for example high-dimensional regression or factor models.
%The former approach was operationalised using the \GOLAZO algorithm \citep{lauritzen:2020} and network hyper-parameters were learned using the \BIC, while the Bayesian model was implemented in the \texttt{NumPyro} probabilistic programming language \citep{bingham:2019,phan:2019} and empirical Bayes was used to estimate the network hyper-parameters. 
Our results showed that the external (network) data was particularly helpful in situations where the problem dimension was large relative to the sample size $n$, as is often the case in applications. Further, we observed that the ability to learn hyper-parameters ameliorated the consequences in a worst-case scenario where one introduces uninformative external data. 
 
Our \COVID application found that geographical closeness and a Facebook connectivity network were both informative about the dynamics of \COVID cases, allowing for the estimation of a sparser graphical model that predicted better out-of-sample. The Facebook network had a greater association with \COVID, suggesting for example to consider social media campaigns to help improve disease outcomes.
We stress that our findings should be understood as associations between social media and disease progression, rather than causal connections. For example, although \cite{nyhan2023like} found that 
Facebook feeds are skewed towards politically like-minded sources, there was little evidence that increasing exposure to more diverse sources reduced polarization. It is therefore possible that the Facebook index serves as a proxy for like-minded attitudes, rather than shared social media causally driving people to have similar attitudes.
In the stock market application we found similarity in firm risk exposures to economic and policy risks were both informative of the firms co-evolution in excess returns. While the policy network appeared to have a stronger relationship to whether two firms were connected, the economic network was more predictive of the behaviour of connected observations. These findings suggest that by understanding better the role played by risks declared by firms on their stock market behavior, it may be possible to design better portfolio strategies.
 
Further methodological work could consider richer relationships for how the graphical models can depend on the networks, e.g. by considering non-parametric models. Another interesting avenue would be developing computational methods to scale our algorithms to even higher dimensions. %, i.e. the setting where adding network information is more useful.

\section*{Acknowledgements}

JJ, DR and PZ were partially funded by Government of Spain's Plan Nacional PGC2018-101643-B-I00 and the Ayudas Fundación BBVA %a Equipos de Investigación Cientifica 2017 and 
Proyectos de Investigación Científica en Matemáticas 2021,
grants. 
JJ was also partially funded by Juan
de la Cierva Formación de la Agencia Estatal de Investigación  FJC2020-046348-I,
and DR by Europa Excelencia EUR2020-112096, the AEI/10.13039/501100011033 and European Union “NextGenerationEU”/PRT,
and grant Consolidaci\'on investigadora CNS2022-135963 by the AEI. 
DR and PZ also acknowledge funding from Huawei Technologies Ireland.
LL acknowledges the financial support from the China Scholarship Council (CSC) (Grant No.202006240148).
SH gratefully acknowledges funding from European Research Council Consolidator Grant 864863, which supported his time and the work of LB.
DR and PZ also acknowledge funding from  the Ayudas Fundación BBVA a Equipos de Investigación Cientifica 2021 and from Huawei Technologies Ireland.

We thank Dennis Kristensen for helpful comments and Du Phan for advising us on our models' implementation in \texttt{NumPyro}.

\bibliographystyle{plainnat}
\bibliography{references}
 
\newpage 
\appendix
%\addcontentsline{toc}{section}{Supplementary Material} % Add the appendix text to the document TOC
%\part{Supplementary Material} % Start the appendix part
%\parttoc % Insert the appendix TOC
%If i want to change their names then look at this link
%https://tex.stackexchange.com/questions/67253/changing-heading-of-appendix
%https://tex.stackexchange.com/questions/215886/numbering-of-lemmas-in-appendix

\section*{Supplementary Material} 

Section \ref{Sec:Implmentation_details} provides further details for the implementation of our network \GLASSO and network spike-and-slab models. Section \ref{sec:covid_dataprocessing} contains further information related to our \COVID data application, including the data collection, preprocessing, linear model estimation and diagnostic checks, network specification and linearity check, as well as further results and figures. Section \ref{sec:stock_dataprocessing} provides analogous information for the stock market data. Lastly, Section \ref{sec:stan_vs_numpyro} provides a performance comparison of the network \GLASSO frequentist model with the network spike-and-slab Bayesian model using \texttt{Stan} and \texttt{NumPyro}. 
Code to implement all of our experiments and data pre-processing is available at \url{https://github.com/llaurabat91/graphical-models-external-networks}.

\section{Implementation details for network \GLASSO and network spike-and-slab}{\label{Sec:Implmentation_details}}

\subsection{Bounding the region for optimal \GOLAZO hyperparameters $\beta$.}{\label{app:regions_beta}}

%\jack{These may need generalising to beyond binary matrices but are useful to have here for reference}

%We describe simple bounds to limit the grid search for the hyper-parameter $\hat{\beta}$ optimising the \BIC/\EBIC.
The \GOLAZO algorithm (Section~8.1 in \cite{lauritzen:2020}) is a block coordinate descent algorithm where the $j$-th row is optimised with other entries of $\Sigma$ fixed by solving a quadratic program
\begin{equation}\label{eq:dual}
\min_d \; d^T (\Sigma_{\setminus j})^{-1} d \qquad\mbox{subject to }  |\Sigma_{ij}-S_{ij}|\leq \lambda_{ij}\mbox{ for all }i<j\mbox{ and }\Sigma_{ii}=S_{ii}\mbox{ for all }i,
\end{equation}
where $d$ contains the off-diagonal entries of the $j$-th row of $\Sigma$ (the diagonal entry always satisfies $\Sigma_{jj}=S_{jj}$). 

The following lemma guarantees that for large enough $\lambda_{ij}$ the solution is to set all parameter estimates to zero.

%In our case $\lambda_{ij}=\lambda_0$ if $ij\notin E$ and $\lambda_{ij}=\lambda_1$ if $ij\in E$.
%Denote $M_0=\max_{ij\notin E}|S_{ij}|$ and $M_1=\max_{ij\in E}|S_{ij}|$.
%\begin{lemma}
%If $\lambda_0\geq M_0$ and $\lambda_1\geq M_1$ then $y=0$ is the optimal $y$ in  \eqref{eq:dual} is %the origin.
%\end{lemma}
%\begin{proof}
%Under the given condition $y=0$ is always feasible. Since this is the global minimum, the result follows.
%\end{proof}

%By the above lemma we can always assume that either $\lambda_0\leq M_0$ or $\lambda_1\leq M_1$. Denoting $m_i=\log M_i$ for $i=0,1$ we get
%$$
%\beta_0\leq m_0+\overline B \beta_1\qquad\mbox{or}\qquad  \beta_0\leq m_1-(1-\overline B) \beta_1
%$$
%From this we can get explicit constraints on the possible $\beta_0,\beta_1$ to consider. In particular, for every $\beta_1$ we get an explicit upper bound on $\beta_0$:
%$$
%\beta_0\leq \max\{m_0+\overline B \beta_1,m_1-(1-\overline B) \beta_1\}.
%$$
%Similarly, if we fix $\beta_0$ then the resulting constraints on $\beta_1$ take the form:
%$$
%\beta_1\geq \frac{1}{\overline B}(\beta_0-m_0)\qquad\mbox{or}\qquad \beta_1\leq \frac{1}{1-\overline B}(m_1-\beta_0)
%$$
%If $\beta_0$ is sufficiently small then this does not give any valid constraint, that is, $\beta_1\in \mathbb R$. However, for the type of $\beta_0$ we see in our simulations, we get two disjoint (infinite) intervals.

\begin{lemma}
If $\lambda_{jk}\geq |S_{jk}|$ for all $k\neq j$ then $d=0$ optimises \eqref{eq:dual}.
\end{lemma}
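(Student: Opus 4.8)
The plan is to observe that $d=0$ is the \emph{unconstrained} global minimizer of the quadratic objective, and then to check that the hypothesis $\lambda_{jk}\geq|S_{jk}|$ is exactly what makes $d=0$ feasible; an unconstrained global minimizer that happens to lie in the feasible set is automatically a solution of the constrained program.

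First I would argue that the matrix $(\Sigma_{\setminus j})^{-1}$ in the objective is positive definite. Throughout the block-coordinate descent the working covariance $\Sigma$ is positive definite, so its principal submatrix $\Sigma_{\setminus j}$ is positive definite, hence invertible with positive-definite inverse. Consequently $d^T(\Sigma_{\setminus j})^{-1}d\geq 0$ for all $d$, with equality if and only if $d=0$. Thus $d=0$ is the unique minimizer of the objective over all of $\mathbb{R}^{p-1}$ when the constraints are ignored.

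Next I would verify feasibility of $d=0$. Since $d$ collects the off-diagonal entries $\Sigma_{ij}$ ($i\neq j$) of the $j$-th row, setting $d=0$ forces $\Sigma_{ij}=0$ for every $i\neq j$. The box constraint $|\Sigma_{ij}-S_{ij}|\leq\lambda_{ij}$ then reduces to $|S_{ij}|\leq\lambda_{ij}$, which is precisely the hypothesis $\lambda_{jk}\geq|S_{jk}|$ for all $k\neq j$; the diagonal equality $\Sigma_{jj}=S_{jj}$ does not involve $d$ and so is unaffected. Hence $d=0$ lies in the feasible region.

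Combining the two observations, $d=0$ minimizes the objective over the whole space and belongs to the feasible set, so it solves \eqref{eq:dual}. There is no genuine obstacle here: the only two points worth stating carefully are the positive-definiteness of $(\Sigma_{\setminus j})^{-1}$, which guarantees that $d=0$ is a strict global minimizer rather than merely a stationary point, and the one-line translation of the hypothesis into feasibility of the origin.
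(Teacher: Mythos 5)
Your proposal is correct and follows exactly the same argument as the paper's proof: the hypothesis makes $d=0$ feasible, and since $d=0$ is the unconstrained global minimizer of the positive (semi)definite quadratic objective, it solves the constrained program. You simply spell out the two steps (positive-definiteness of $(\Sigma_{\setminus j})^{-1}$ and the translation of the hypothesis into feasibility of the origin) that the paper states in one line.
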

\begin{proof}
Under the given condition $d=0$ is always feasible. Since $d=0$ is also the global minimum, the result follows.
\end{proof}

We can therefore assume that $\lambda_{jk}< |S_{jk}|$ for at least one pair $(j,k)$. That is, we may restrict attention to $\beta$ satisfying 
$$\max_{j \neq k} \log(\lambda_{jk})\;=\; \max_{j \neq k} \beta_0 + \sum_{q=1}^Q \beta_q \bar{a}_{jk}^{(q)} \;\leq\; \max_{j \neq k} \log |S_{jk}|.$$
Note that this expression bounds the range of possible optima for each $\beta_q$ given the rest, and in particular for $\beta_0$ we obtain
$$
\beta_0\leq \max_{j\neq k}\{\log |S_{jk}|- \sum_{q=1}^Q \beta_q \bar{a}_{jk}^{(q)} \},
$$
which is $\leq \max_{j \neq k} \log |S_{jk}|$ at the initialisation step where $\beta_1=\ldots =\beta_Q=0$.

In particular, we propose the following procedure. First initialise $\hat{\beta}_0$ (the first entry in $\hat{\beta}$), such that $\hat{\lambda} = \exp(\hat{\beta}_0)$, where $\hat{\lambda}$ maximises the \BIC in (4) over a univariate grid. Assuming that all variables in $y_i$ are standardised to unit sample variance, the grid search is facilitated by Lemma \ref{eq:dual} and the analytic upper bound that 
\begin{equation}
    \hat{\beta}_0 \leq \log\left(\max_{k \neq j}\left\{\left|R_{jk}\right|\right\}\right), %\leq 0
\end{equation}
where $R$ is the empirical correlation matrix. %\jack{We moved from S to R - is it obvious in the paper we apply the GOLAZO to R?} 

Second, we conduct a grid search on the whole vector $\beta$, with the first entry being centered around $\hat{\beta}_0$. 
The grid search is again facilitated by the Lemma \ref{eq:dual} which shows that one may restrict attention to $\beta$ such that
$$
\max_{j \neq k} \lambda_{jk}= \max_{j \neq k} e^{ \beta_0 + \sum_{q=1}^Q \beta_q \bar{a}_{jk}^{(q)} } \leq 1-|R_{jk}|
%-|S_{jk}|+\sqrt{S_{jj}S_{kk}}
$$
since increasing $\lambda_{jk}$ beyond this bound will not change $\hat{\Theta}$.
Within the grid search, we also use the solution obtained for a particular $\beta$ as a warm start for subsequent values of $\beta$.

%\color{purple}
Further, the fact that $\Sigma$ in \eqref{eq:dual} must be positive definite allows for the construction of further simple bounds. For every $i\neq j$ we necessarily have $\Sigma_{jk}^2\leq  \Sigma_{jj}\Sigma_{kk}=S_{jj}S_{kk}$, or equivalently, $\Sigma_{jk}\in[- \sqrt{S_{jj}S_{kk}},\sqrt{S_{jj}S_{kk}}]$. It follows that, without loss of generality, we can restrict attention to that $\lambda_{jk}\leq \sqrt{S_{jj}S_{kk}}-|S_{jk}|$ giving
$$
\beta_0 + \sum_{q=1}^Q \beta_q \bar{a}_{jk}^{(q)}\;\leq\; \log(\sqrt{S_{jj}S_{kk}}-|S_{jk}|)\qquad\mbox{for all }j\neq k.
$$

\subsection{Implementation of spike-and-slab}{\label{ssec:ss_implmentation}}

If the spike has a very small variance, or the slab has too bigger variance it can be difficult for an \MCMC sampler to efficiently explore both spaces. We use a rescaling trick to facilitate efficient \MCMC inference for the network spike-and-slab model. Rather than sample directly from $\pi(\rho)$ as defined by (8), for each $\rho_{jk}$ we define latent variables $\tilde{\rho}_{jk}^{spike}$, $\tilde{\rho}_{jk}^{slab}$ and $u_{jk}$. We then sample 
\begin{align}
    \tilde{\rho}_{jk}^{spike} &\sim \mbox{DE}\left(0, 1 \right), \quad \tilde{\rho}_{jk}^{slab} \sim \mbox{DE}\left(0, 1 \right) \textrm{ and}\quad  u_{jk} \sim \mbox{Unif}[0, 1],   \nonumber
\end{align}
and set 
\begin{align}
    \rho_{jk} = \mbox{I}(u_{jk} > w_{jk})\left(s_0 \times \tilde{\rho}_{jk}^{spike}\right) + \mbox{I}(u_{jk} \leq w_{jk})\left(\eta_0^T a_{jk} + s_{jk} \times \tilde{\rho}^{slab}_{jk}\right).\nonumber
\end{align}
It is straightforward to see that the marginal distribution of $\rho_{jk}$ matches that defined in (8). Lastly, to make such an implementation suitable for \MCMC samplers that require differentiability, we approximate the indicator $\mbox{I}(u_{jk} > w_{jk})$ with a sigmoid function 
\begin{align}
    \mbox{I}(x\geq 0) &\approx \sigma_{k}(x) = \frac{1}{1 + \exp( - kx)} \textrm{ for large } k,\nonumber
\end{align}
taking $k = 100$.

\subsection{Prior elicitation}
\label{ssec:prior_elicitation}

%Here we discuss the prior elicitation for the Bayesian \GLASSO, \GOLAZO, \GLASSOSS, and \GOLAZOSS. 
We elicit spike-and-slab prior parameters $(\eta_0,\eta_1,\eta_2)$ that encourage sparse solutions, avoid pathological values, and maintain their specified intuition whilst being minimally informative.
We finish this section with a table of the values used in the simulations and in our applications.
For interpretability, we treat the spike's scale parameter $s_0$ as a constant. Recall that the spike captures partial correlations $\rho_{jk}$ that are considered to be 0 for all practical purposes, which here we consider to be $|\rho_{jk}| < 0.01$.
We hence set $s_0$ such that the spike has most of its density below this threshold, i.e. $\Pi(\rho_{ij} \in (-\tau, \tau); s_0) = 0.95$, where $\tau = 0.01$. This gave the value $s_0 = 0.003$

Consider first the hyperparameters $(\eta_{00},\eta_{10},\eta_{20})$ defining the intercept of the regression of the slab's mean, variance, and prior probability on the network data. We set the priors
\begin{align}
    \eta_{00} &\sim \mathcal{N}\left(0, g_{0}^2\right)\nonumber\\
    \eta_{10} &\sim \mathcal{N}\left(m_{1}, g_{1}^2\right)\nonumber\\
    \eta_{20} &\sim \mathcal{N}\left(m_{2}, g_{2}^2\right).\nonumber
\end{align}
For the hyperparameters that capture the effect of each network $A^{(q)}$, where $q=1,\ldots, Q$, we set
\begin{align}
    \eta_{0q} &\sim \mathcal{N}\left(0, g_{0}^2\right)\nonumber\\
    \eta_{1q} &\sim \mathcal{N}\left(0, g_{1}^2\right)\nonumber\\
    \eta_{2q} &\sim \mathcal{N}\left(0, g_{2}^2\right).\nonumber
\end{align}
%
%\david{Jack, I simplified the notation of the prior hyperparameters, before there was a triple index that did not seem to be really needed (given the defaults that we use). To avoid nasty questions best to present as simple a version as possible. Also prior sd's are denote by $g$'s rather than $s$'s to avoid confusing the notation with sample covariances and the slab's scale $s_1 := s_0(1 + \exp\left\{-\eta_1\right\})$.}
%
Centering the prior of $\eta_{00}$ at 0 encodes the absence of information about whether partial correlations are positive or negative on average.
Similarly, centering the priors of $(\eta_{0q},\eta_{1q},\eta_{2q})$ at zero reflects no prior knowledge on whether the network data are predictive of $\rho$ and in which direction.
To set the remaining hyperparameters we assume the networks have been standardised and conduct the prior elicitation for the average value of the networks (i.e. $\bar{a}_{jk}^{(q)}=0$ for all networks $q$). As a result, our prior elicitation is invariant to the network(s) considered.
%The remaining prior hyperparameters were set as follows.

The prior on $\eta_2$ was set based on sparsity and minimal informativeness considerations.
Specifically, we set the prior expected number of edges (non-zero partial correlations) to scale linearly with $p$, so that each node is expected to have a constant degree as $p$ grows. 
%None of the methods we propose in this paper produce exact 0's for the partial correlation or precision matrices, \jack{reference}. As a result, we post-process both the Bayesian and Frequentist estimates of $\Theta$ by rounding them to a certain number of decimal places. This is important when evaluating the \EBIC objective function or calculating false positives and negatives. For both our simulated and real experiments we round at 3 decimal places, meaning that values below $5 \times 10^{-4}$ are considered as 0.
%Unlike the double exponential prior associated with \GLASSO and \GOLAZO,
When all networks are at their average value 
%(i.e. $\bar{a}_{jk}^{(q)}=0$ for all networks $q$) 
the slab prior probability is $w = 1/(1 + e^{-\eta_{20}})$.
A standard non-informative prior on slab prior probabilities is a $\mbox{Beta}(m_w v_w, m_w (1-v_w))$ distribution \citep{scott:2010}, where $m_w$ is the prior mean and $v_w$ is often interpreted as the prior `sample size'. We take the minimally informative choice $v_w=1$. Regarding $m_w$, we set it such that the prior expected number of edges is $p$. Since the prior expected number of edges is
\begin{align}
    %\textrm{expected \# edges } &= 
    \mathbb{E}\left[\sum_{j = 1}^p \sum_{k < j}\mathbb{I}(\rho_{jk} \in \textrm{slab})\right]
                                = \frac{p(p-1)}{2}w\nonumber,
\end{align}     
for $m_w = \frac{2}{(p-1)}$ the expected number of edges is $p$. 
Based on these considerations, we set the $(m_2,g_2^2)$ featuring in the prior of $\eta_{20}$ and $\eta_{2q}$ so that the implied prior on $w$ has the same mean and variance as the Beta prior described above.

%\david{Jack, your text said that you matched the prior mean and variance of $\eta_2$, which wouldn't make much sense. I assumed you meant that you set the prior on $\eta_2$ such that the implied prior on $w$ has the same mean and variance as the Beta prior?
%Also, your earlier description referred to $\eta_2$ but really you meant $\eta_{20}$ if I understood correctly?}

Regarding the prior on $\eta_1$, we considered that for the slab to capture non-zero partial correlations its prior scale parameter at the average value of the networks $s_{jk} = s_0(1 + \exp\left\{-\eta_{10}\right\})$ should be significantly larger than that of the spike $s_0$.
%so as not to become too `spikey'. The intuition of the spike-and-slab prior is that the spike corresponds to those evaluates that are 0, and the slab captures the behaviour of non-zero values. 
%If the slab becomes too `spikey', then it is possible that values that were truly non-zero get assigned to the spike rather than the slab. 
We hence set $m_{1}$ and $g_{1}$ such that the prior mode of $s_1$ is $10 \times s_0$, as well as $s_1 > 3 \times s_0$ with prior probability 0.99.

Finally, the prior on $\eta_0$ was set based on prior positive-definiteness considerations.
Specifically, the positive-definiteness indicator $\mbox{I}(\rho \succ 0)$ induces dependence in the spike-and-slab prior density,
%Although the Bayesian extensions to the \PCGLASSO and \GOLAZO specify independent priors for the off-diagonal elements of $\rho$, the joint distribution of $\rho$ is constrained to give 0 density to any non-positive-definite matrix.  As a result, if generating such elements independently does not result in a positive definite matrix with high probability, then the implicit prior after constraining to the space of positive-definite matrices 
i.e. it can produce a joint prior that is vastly different from the product of independent priors on each $\rho_{jk}$. Such a discrepancy is undesirable for prior interpretation, particularly in our setting where the priors and their hyperparameters are objects of interest that describe how $\rho_{jk}$ depends on network data. To address this issue we set prior parameters such that the prior probability of $\rho$ being positive definite when independently sampling its elements is at least 0.95.
Conditional on the priors specified for $(\eta_1,\eta_2)$, $g_{0}$ was set to the largest value (i.e. least informative) that guarantees at least 0.95 probability that $\rho$ is positive-definite under independent sampling from the unconstrained spike-and-slab prior components. 

%\subsubsection{\GOLAZOSS}

%\david{Jack, please remember to delete whatever old elicited values we're no longer using. We should provide them for both applications.}

%\subsubsection{Elicited values - OLD}

%The below table summarises the values elicited for the dimensions $p$ and network matrices considered in this paper. 

%\begin{table}[ht]
%\centering
%\begin{tabular}{rrrrrrr}
%  \hline
%   $p = 10$  & $m_{000}$ & $s_{000}$ & $m_{010}$ & $s_{010}$ & $m_{020}$ & $s_{020}$ \\ 
%   \hline
%   \GLASSOSS & 0.000 & 0.159 & -4.595 & 1.042 & -2.722 & 3.056 \\ 
%   \GOLAZOSS: A-true & 0.000 & 0.106 & -4.595 & 0.650 & -2.444 & 2.167 \\ 
%   \GOLAZOSS: A-semidep & 0.000 & 0.121 & -4.595 & 0.700 & -2.444 & 1.944 \\ 
%   \GOLAZOSS: A-indep & 0.000 & 0.132 & -4.595 & 0.750 & -2.833 & 2.167 \\ 
%   \hline
% \end{tabular}
% \label{Tab:SS_prior_hyperparameters}
% \end{table}

%priorSpecification_GLASSO_GOLAZO2.Rmd
%\begin{figure}[H]
%\begin{center}
%\includegraphics[trim= {0.0cm 0.0cm 0.0cm 0.0cm}, clip,  
%width=0.49\columnwidth]{plot/SS_simulations/GLASSO_SS_p10_prior_elicit_plot_tikz-1.pdf}
%trim={<left> <lower> <right> <upper>}
%\caption{Simulations from the \GLASSO prior predictive for $\rho$ with priopr hyperparameters as in \ref{Tab:SS_prior_hyperparameters}
%}
%\label{Fig:Rho_prior_predictive_sim}
%\end{center}
%
%\end{figure}

\subsubsection{Elicited values}

%\jack{I MADE A MESS HERE, THESE SHOULD BE VALUES ELICITED BY THE GLASSO-SS VERSION WHICH IS THE SAME AS ASSUMING THE NETWORKS WERE 0. THIS HAS BEE DONE FOR P=10 AND P=50 BU TNOT THE COVID AND STOCK MARKET DATA}

Table \ref{Tab:SS_prior_hyperparameters} presents the elicited values used in our simulations and real data examples. Code to elicit priors following the specification above for further examples is available in the GitHub repository.
%
%
%% IF WE RUN THESE AGAIN REMEBER THAT I FUCKED UP HERE SLIGHTLY
%%
\begin{table}%[ht]
\caption{Network spike-and-slab prior hyperparameters}
\centering
\begin{tabular}{rcccc}
  \hline
 & $p = 10$ & $p = 50$ & \COVID data ($p=332$) & Stock data ($p=366$) \\ 
  \hline
  $s_0$ & 0.003 & 0.003 & 0.003 & 0.003 \\ 
  $g_0$ & 0.145 & 0.152 & 0.002 & 0.002 \\ 
  $m_1$ & -2.197 & -2.197 & -2.197 & -2.197 \\ 
  $g_1$ & 0.661 & 0.661 & 0.3 & 0.35 \\ 
  $m_2$ & -2.722 & -6.737 & -7.789& -10.16 \\ 
  $g_2$ & 3.278 & 3.395 & 1.02 & 1.81 \\ 
\hline
\end{tabular}
\label{Tab:SS_prior_hyperparameters}
\end{table}
As the dimension of the data increases, only the prior for $\eta_2$ changes greatly. This is a result of the assumption that the number of edges grows linearly with $p$, and therefore $\Theta$ is \textit{a priori} assumed more sparse for larger $p$.

%Taken from the paper 
To assess the impact of these default prior choices, it is useful to display the implied prior marginal distribution on the $\rho_{jk}$'s. Figure \ref{Fig:Rho_prior_predictive_sim} shows that in both the \COVID and stock market applications most of the prior probability is contained in $\rho_{jk} \in (-0.5,0.5)$, which seems a sensible prior interval. 
The prior concentrates significant mass around 0, which induces shrinkage, but also features thick tails, which favors capturing truly non-zero $\rho_{jk}$'s.
Indeed, the corresponding posteriors (Fig. \ref{Fig:Rho_prior_predictive_sim}, bottom panels) set significant mass away from zero, suggesting that the prior shrinkage towards 0 was not excessive.

%priorSpecification_GLASSO_GOLAZO2_newSpike.Rmd
\begin{figure}[!ht]
\begin{center}
\includegraphics[width =0.49\linewidth]{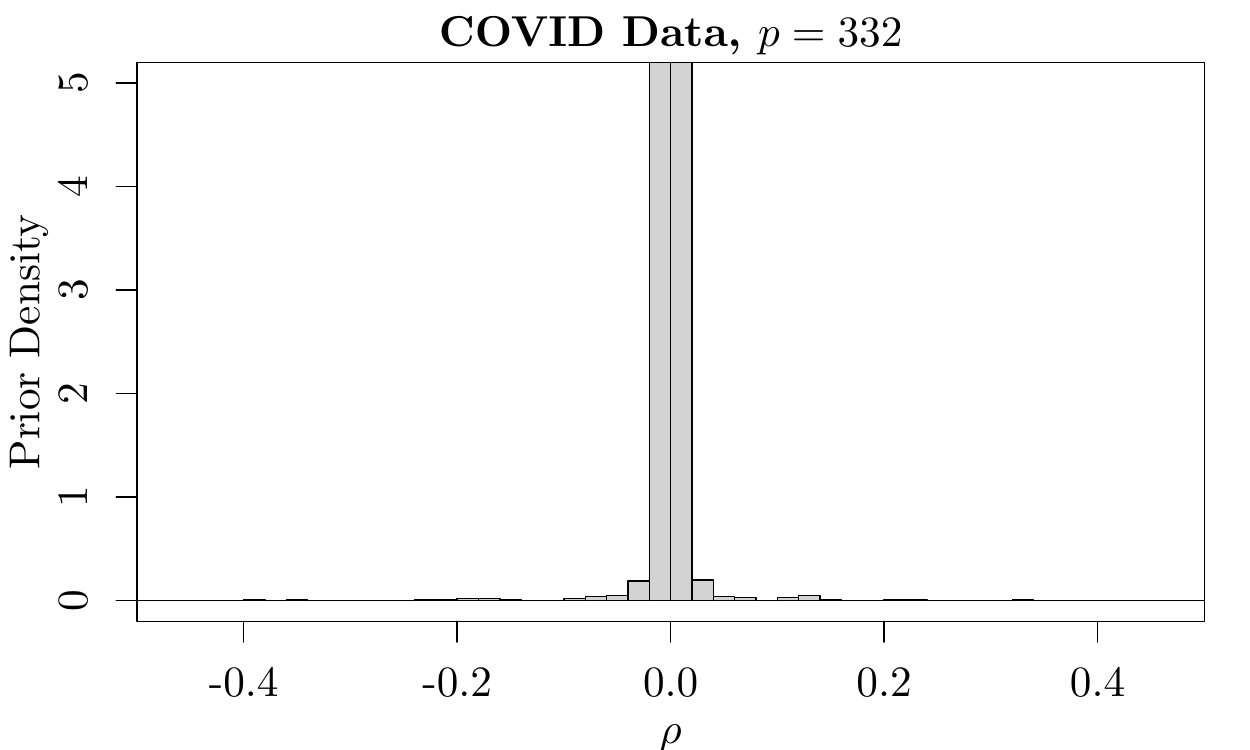}
\includegraphics[width =0.49\linewidth]{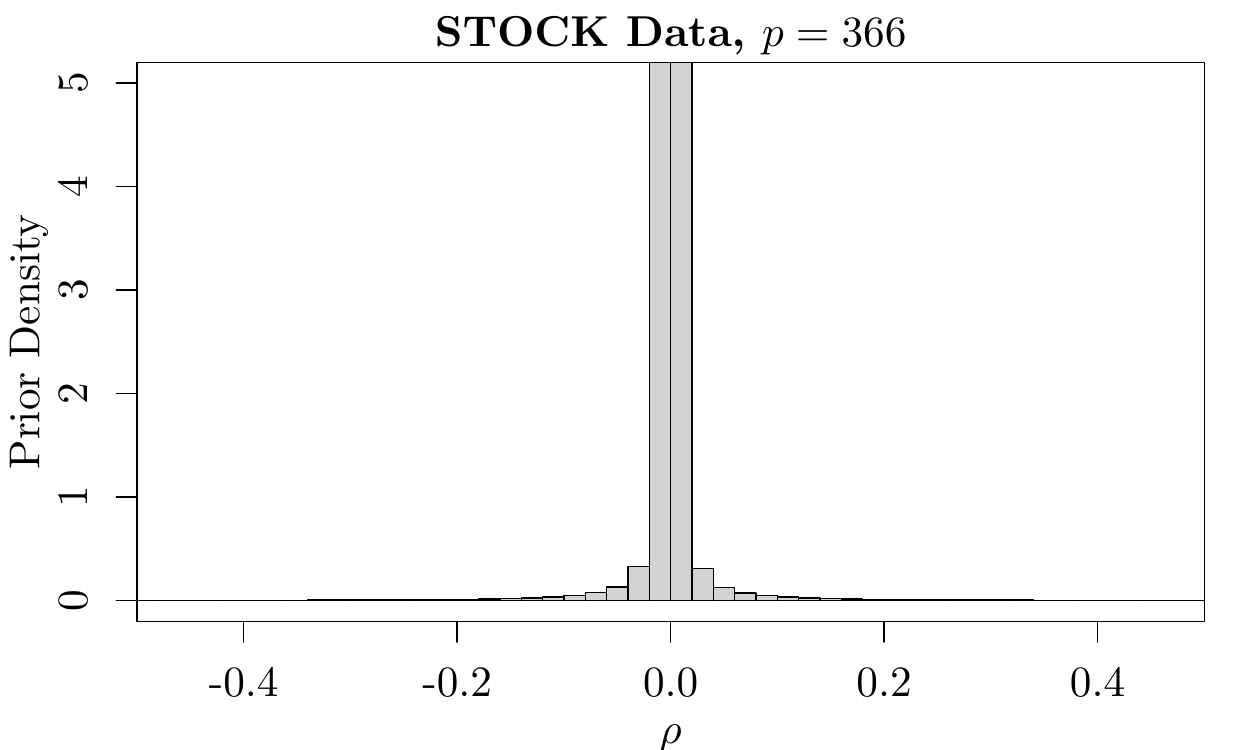}\\
\includegraphics[width =0.49\linewidth]{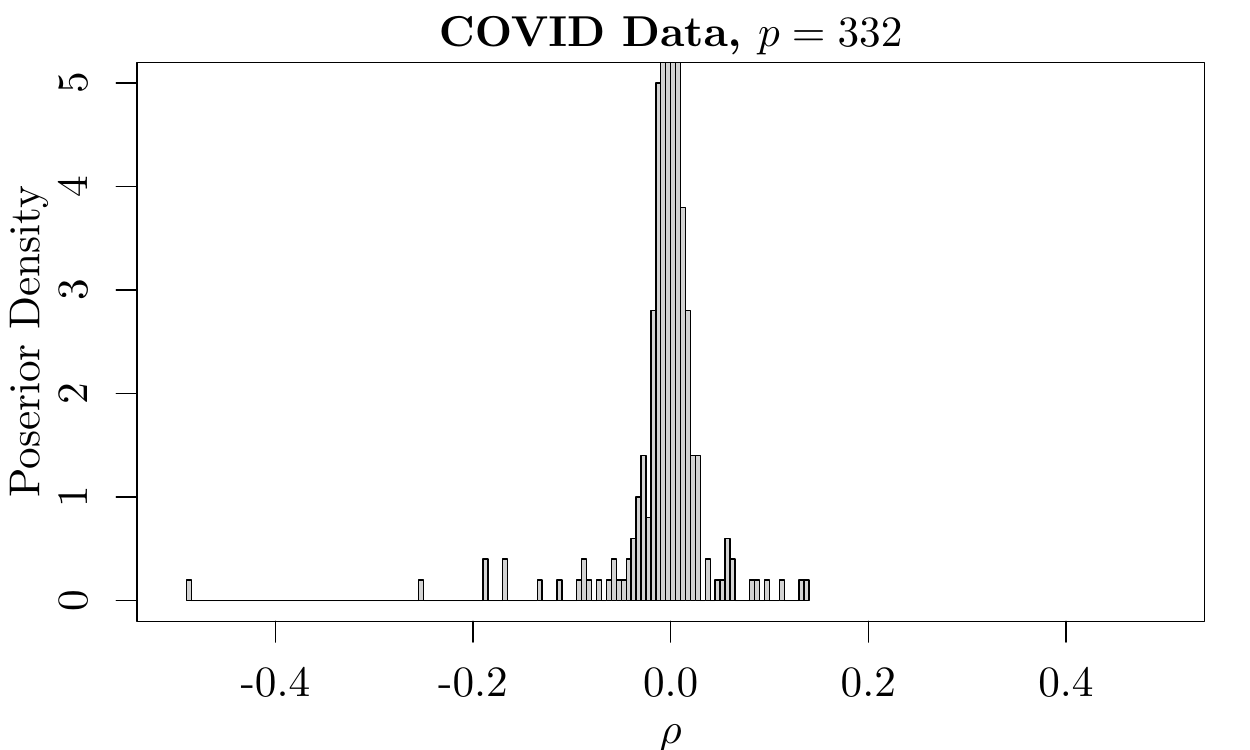}
\includegraphics[width =0.49\linewidth]{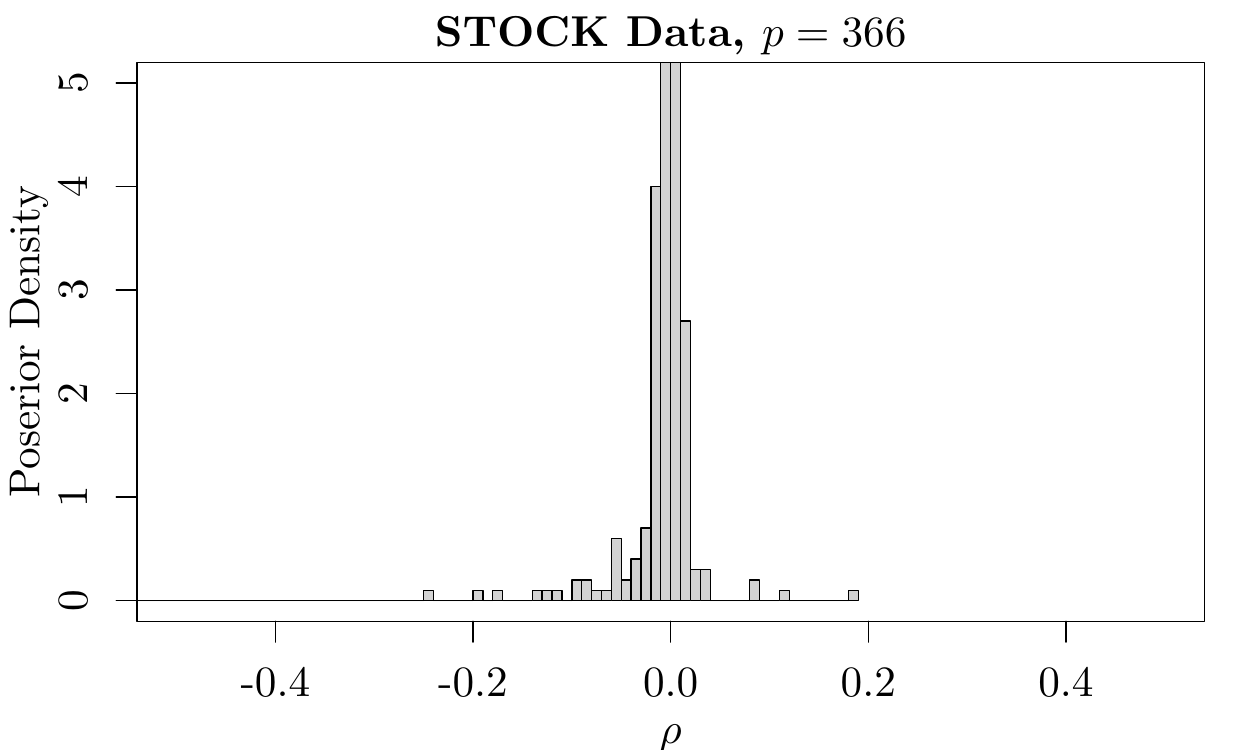}
\caption{Elicited prior distribution and posterior distribution for $\rho_{jk}$, $j = 1,\ldots, p$, $k < j$ for the \COVID data $p = 332$ and stock market data $p = 366$. %\jack{these are for both matrices equal to 0}%\jack{Add posterior draws to this plot for the whole $\rho$ show prior not too informative, maybe this goes in the appendix}
}
\label{Fig:Rho_prior_predictive_sim}
\end{center}
\end{figure}

\subsection{Reparametrisation of the network hyperparameters}{\label{ssec:ss_repar}}

An advantage of the Bayesian network spike-and-slab approach is that it allows us to do inference for the network hyperparameter as was done in Tables 3 and 5. Such inferences, however, require that the effective sample size (\ESS) of the sampled hyperparameters is sufficiently high. We observed empirically that hyperparameters attain lower \ESS. Although this phenomenon has not been studied in our graphical model settings, in hierarchical models it is well understood that parameters associated to higher levels have strictly slower MCMC mixing, and that said mixing can be improved by reparameterising the problem \citep{zanella2021multilevel}. We applied the following transformation of the hyperparameters to facilitate their sampling.

%Rather than sample,
%\begin{align}
%    \eta_{i0} &\sim \mathcal{N}\left(m_i,  g_{0}^2\right)\nonumber\\
%    \eta_{iq} &\sim \mathcal{N}\left(0, g_{i}^2\right), \quad i = 1, 2, 3, \quad j = 0, 1, \ldots, Q,\nonumber
%\end{align}
Rather than sample directly from the priors for the hyperparameters as outlined in Section \ref{ssec:prior_elicitation}, we reparameterised and sampled 
\begin{align}
    \tilde{\eta}_{iq} &\sim \mathcal{N}\left(0, \frac{p(p-1)}{2n}\right), \quad i = 1, 2, 3, \quad q = 0, 1, \ldots, Q.\nonumber
\end{align}
The original $\eta$ hyperparameters can then be recovered as
\begin{align}
    \eta_{i0} &= m_i + \tilde{\eta}_{i0} \times g_i / \sqrt{p(p-1)/2n},\nonumber\\
    \eta_{iq} &= 0 + \tilde{\eta}_{iq} \times g_i / \sqrt{p(p-1)/2n}, \quad i = 1, 2, 3, \quad q =  1, \ldots, Q,\nonumber
\end{align}
where $m_0 := 0$. 
The idea behind this is to first standardise the $\eta$'s to all have mean 0 and variance 1, before adjusting the variance of the $\tilde{\eta}$'s by the square-root of the ratio of the number of $\rho$'s ($p(p-1)/2$) from which the $\eta$'s are learned, to the number of observations $Y$ ($n$) from which the $\rho$'s themselves are learned. Such a reparametrisation leaves the model completely unchanged, but we found this improved the ESS of the $\eta's$.

%\subsection{Simulation results using the \EBIC}{\label{Sec:EBIC_simulations}}

\subsection{Additional simulation results}{\label{Sec:EBIC_simulations}}

\subsubsection{Simulations with $n = 500$}

Table \ref{tab:sim_results0.95_n500} presents simulation results from Section 4 in the additional case where the sample size $n = 500$. These show that when $n$ is large relative to $p$, network information helps to a lesser extent. 

\begin{table}[!ht]
    \centering
    \renewcommand{\arraystretch}{0.75}
    %\smaller
    \caption{Simulation results for $n = 500$ under non, mildly and strongly informative networks $A_{ind}$, $A_{0.75}$ and $A_{0.85}$. For SS and network SS models edges declared when  posterior probability $> 0.95$.}
    \begin{tabular}{|cc|ccc|ccc|} \hline
    & & \multicolumn{3}{c|}{$p=10$} & \multicolumn{3}{c|}{$p=50$} \\
       & $n$ & MSE & FDR & FNR & MSE & FDR & FNR \\ \hline
  \GLASSO                     & 500 &0.082  &0.367  &0.002  &0.825 &0.410  &0.032  \\
  Network \GLASSO, $A_{ind.}$ & 500 &0.085  &0.315  &0.007  &0.766 &0.443  &0.035 \\
  Network \GLASSO, $A_{0.75}$ & 500 &0.066  &0.270  &\textbf{0.000}  &0.604 &0.419  &0.031  \\
  Network \GLASSO, $A_{0.85}$ & 500 &0.045  &0.195  &0.008  &0.512 &0.386  &0.027  \\
  SS & 500 & \textbf{0.030} & 0.000  & 0.023 & 0.198 & 0.002 & 0.009\\
  Network SS, $A_{ind.}$      & 500 & 0.034 & \textbf{0.000} & 0.023 & 0.201 & \textbf{0.001} & 0.010 \\
  Network SS, $A_{0.75}$      & 500 & 0.032 & 0.002 & \textbf{0.018} & 0.193 & 0.001 & 0.009 \\
  Network SS, $A_{0.85}$      & 500 & 0.033 & 0.008 & 0.022 & 0.183 & 0.001 & 0.009 \\
  \siGGM, $A_{ind}$     & 500 & 0.104 & 0.658 & 0.000   & 0.968 & 0.775 & \textbf{0.007} \\ 
  \siGGM, $A_{0.75}$    & 500 & 0.068 & 0.478 & 0.001   & 0.606 & 0.712 & 0.008 \\   
  \siGGM, $A_{0.85}$    & 500 & 0.047 & 0.375 & 0.021   & 0.524 & 0.683 & 0.008 \\
       \hline
    \end{tabular}
 
    \label{tab:sim_results0.95_n500}
\end{table}

\subsubsection{The \EBIC to learn the network hyperparameters}

As a sensitivity check, we also consider using the \EBIC \citep{chen:2008} to select hyperparameters for the \GLASSO and Network \GLASSO models
\begin{align}
    \EBIC(\lambda) &= -2\ell_n(\hat{\Theta}(\lambda)) + |\mathbf{E}(\hat{\Theta}(\lambda))|\log n + 4|\mathbf{E}(\hat{\Theta}(\lambda))|\gamma_{\EBIC} \log p
    \label{eq:ebic_optim}
\end{align}
Compared with the \BIC, \eqref{eq:ebic_optim} has an additional complexity penalty, controlled by hyperparameter $\gamma$. \cite{foygel2010extended} recommend $\gamma_{\EBIC} \in[0, 0.5]$ where $\gamma_{\EBIC} = 0$ recovers the \BIC. Table \ref{tab:sim_results_EBIC} presents the results of the experiments introduced in Section 4 when using the \EBIC with $\gamma_{\EBIC} = 0.5$ to select hyperparameters. Comparing these results with Table 1 shows that using the \EBIC reduced the FDR relative to the \BIC, however, this generally results in much more conservative edge selection which damaged the MSE.

\begin{table}[]
    \centering
    %\caption{\GLASSO and network \GLASSO simulation results under non-informative network $A_{ind}$, mildly and strongly informative networks  $A_{0.75}$ and $A_{0.85}$ with \EBIC rule ($\gamma_{\EBIC} = 0.5$) for learning the $\beta$ hyperparameters. 
    \caption{\GLASSO and network \GLASSO simulation results under non, mildly and strongly informative networks $A_{ind}$, $A_{0.75}$ and $A_{0.85}$ with \EBIC rule ($\gamma_{\EBIC} = 0.5$) for learning the $\beta$ hyperparameters.
    %\jack{Why do we have NaN FDR, do we have 0 over 0? we can define this as 0} \li{Yes, we have several 0 over 0 here.} \jack{THANKS LI David said when we have 0/0 this for FDR is defined to be 0, can you replace the NaNs you get with 0's and then redo the average please, I think this only happens here}
    %\li{Thanks Jack. Sure no problem, I've updated the results.}
    } 
    \label{tab:sim_results_EBIC}
    \begin{tabular}{|cc|ccc|ccc|} \hline
    & & \multicolumn{3}{c|}{$p=10$} & \multicolumn{3}{c|}{$p=50$} \\
       & $n$ & MSE & FDR & FNR & MSE & FDR & FNR \\ \hline
  \GLASSO                     & 100 &0.474  & 0.243   &0.176  &6.628 &0.163    &0.566 \\
  Network \GLASSO, $A_{ind.}$ & 100 &0.556  &0.163  &0.253  &7.008 &0.128  &0.632  \\
  Network \GLASSO, $A_{0.75}$ & 100 &0.383  &0.138  &0.162  &5.691 &0.112  &0.504   \\
  Network \GLASSO, $A_{0.85}$ & 100 &0.195  &0.103  &0.153  &4.566 &0.098  &0.414   \\ \hline
  %Network SS, $A_{ind.}$      & 100 &  &  &  & &  &  \\
  %Network SS, $A_{0.75}$      & 100 &  &  &  & &  &  \\
  %Network SS, $A_{0.85}$      & 100 &  &  &  & &  &  \\ \hline
  \GLASSO                     & 200 &0.254  &0.283  &0.060  &2.726 &0.224  &0.241   \\
  Network \GLASSO, $A_{ind.}$ & 200 &0.265  &0.223  &0.082  &2.678 &0.227  &0.248   \\
  Network \GLASSO, $A_{0.75}$ & 200 &0.200  &0.176  &0.058  &2.155 &0.206  &0.216   \\
  Network \GLASSO, $A_{0.85}$ & 200 &0.108  &0.118  &0.120  &1.837 &0.188  &0.207   \\ \hline
  %Network SS, $A_{ind.}$      & 200 &  &  &  & &  &  \\
  %Network SS, $A_{0.75}$      & 200 &  &  &  & &  &  \\
  %Network SS, $A_{0.85}$      & 200 &  &  &  & &  &  \\ \hline
  \GLASSO                     & 500 &0.101  &0.281  &0.004  &0.958 &0.327  &0.138   \\
  Network \GLASSO, $A_{ind.}$ & 500 &0.099  &0.235  &0.011  &1.002 &0.286  &0.142  \\
  Network \GLASSO, $A_{0.75}$ & 500 &0.074  &0.185  &0.000  &0.781 &0.272  &0.153   \\
  Network \GLASSO, $A_{0.85}$ & 500 &0.051  &0.116  &0.096  &0.698 &0.214  &0.158   \\
  %Network SS, $A_{ind.}$      & 500 &  &  &  & &  &  \\
  %Network SS, $A_{0.75}$      & 500 &  &  &  & &  &  \\
  %Network SS, $A_{0.85}$      & 500 &  &  &  & &  &  \\
       \hline
    \end{tabular}

\end{table}

%\newpage
\section{\COVID data analysis}
\label{sec:covid_dataprocessing}

This section provides additional details for the analysis of the \COVID infection rate data. 

\subsection{Data sources}

To undertake our analysis, we collected and combined the following datasets.

1. U.S. population data\\
%We selected the top 100 counties for analysis based on 2019 U.S. population data. Data were sourced from \url{ https://www2.census.gov/programs-surveys/popest/tables/2010-2019/counties/totals/}. Due to the lack of statistics for the District of Columbia in the \COVID policy dataset, we proceed with the top 99 counties.\\
U.S. population data for 2019 were sourced from \url{ https://www2.census.gov/programs-surveys/popest/tables/2010-2019/counties/totals/}. \\

2. FIPS code data \\
To allow for a better match between different datasets, we also extracted the ``FIPS code" that uniquely identifies counties within the U.S. from the U.S. Bureau of Labor Statistics \url{https://www.bls.gov/cew/classifications/areas/sic-area-titles.htm}.\\

3. \COVID infection data \\
Time series data of confirmed \COVID infections in each U.S. county was obtained from \url{https://github.com/CSSEGISandData/COVID-19/blob/master/csse_covid_19_data/csse_covid_19_time_series/time_series_covid19_confirmed_U.S..csv}. Figure \ref{Fig:COVID_raw_cases} plots of the weekly aggregated confirmed \COVID infections\\

\begin{figure}%[hbt!]
\centering
\includegraphics[width =0.9\linewidth]{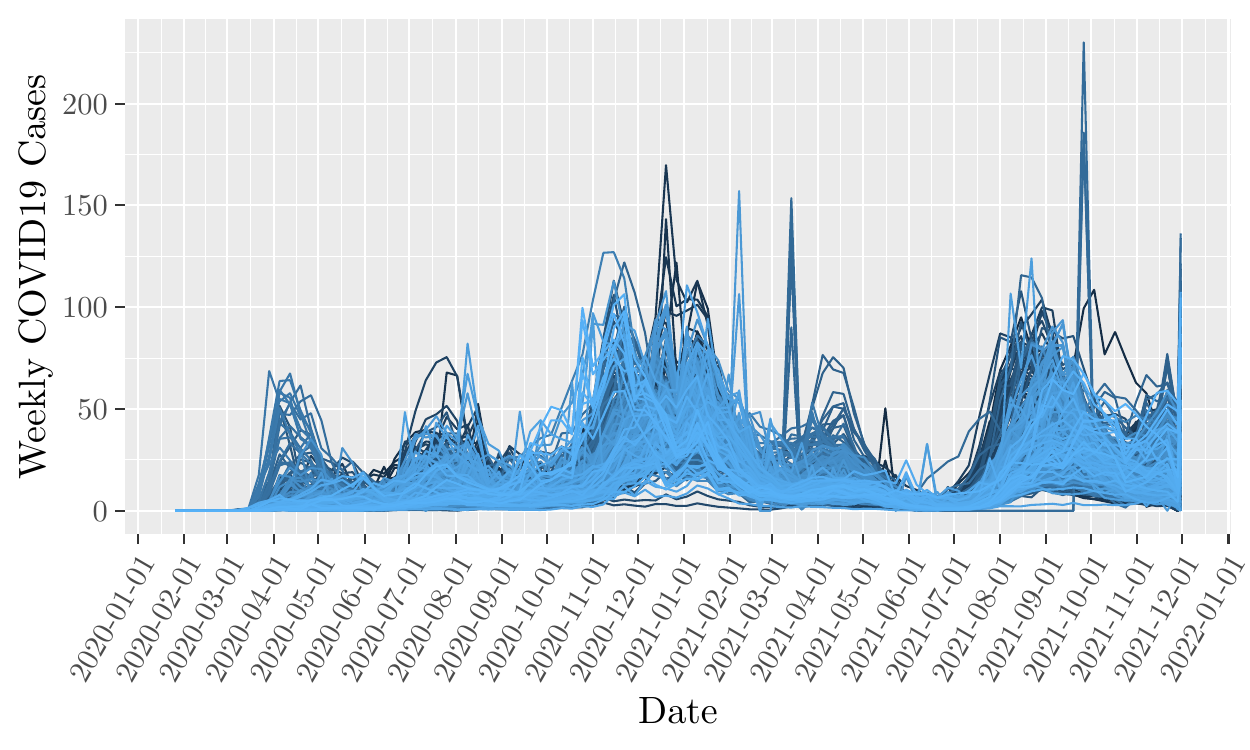} 
\caption{Weekly \COVID Cases per county for the 99 biggest counties in the U.S.}
\label{Fig:COVID_raw_cases}
\end{figure}

4. \COVID vaccination data\\
State-level vaccination data was obtained from  \url{https://github.com/govex/COVID-19/tree/master/data_tables/vaccine_data/us_data/time_series}.\\

5. Policy data\\
The Oxford \COVID Government Response Tracker \url{https://github.com/CSSEGISandData/COVID-19_Unified-Dataset} %(GitHub repo, university website) 
tracks individual policy measures across 20 indicators. They also calculate several indices to give an overall impression of government activity. We used their Containment and Health indices to summarise the policy variables.  \\

6. Temperature data\\
We extracted the daily average near-surface air temperature from the `Hydromet' folder of the above repository \url{https://github.com/CSSEGISandData/COVID-19_Unified-Dataset/tree/master/Hydromet}.\\

7. U.S. area data\\
Population densities were obtained by dividing the county population by the area of the region. Area data were obtained from the U.S. Census Bureau \url{https://tigerweb.geo.census.gov/tigerwebmain/TIGERweb_main.html}.\\

8. Geocloseness data\\
To measure the Geographical distance between two counties we use the Haversine distance \citep{sinnott1984virtues} which assumes the earth is spherical.  The latitude and longitude of each county were downloaded from the U.S. Census Bureau \url{https://tigerweb.geo.census.gov/tigerwebmain/TIGERweb_main.html}.\\

9. Facebook connectivity data\\
The Facebook Social Connectedness Index (SCI), obtained from \url{https://data.humdata.org/dataset/social-connectedness-index}, uses an anonymised snapshot of all active Facebook users and their friendship networks to measure the intensity of connectedness between locations. Specifically, it measures the relative probability that two individuals across two locations are friends with each other on Facebook.
 \\

10. Flight connectivity data\\
Flight data between airports in the US was downloaded from \url{https://essd.copernicus.org/articles/13/357/2021/essd-13-357-2021.html}
 \\

11. Airport Information\\
Name, ICAO code, and Geographical location of US airports were extracted from  \url{https://www.flightradar24.com/52.52,13.39/4}. This data allowed us to assign airports to the county(s) that they were part of. \\

12. county-to-MSA crosswalk \\
When counties are part of large urban areas known as  metropolitan statistical area (MSA) we allocate the flights proportionally to all counties in the MSA.  For example, a flight from JFK in New York to LAX in Los Angeles is not recorded between just those two counties, but rather allocated between all county pairs that form the NY and LA MSAs proportionally to the populations of these counties in the MSAs. The membership of counties to MSAs was downloaded from \url{https://www.census.gov/geographies/reference-files/time-series/demo/metro-micro/delineation-files.html}.\\

13. Flight capacities\\
The capacity of certain plane models was extracted from the folowing links
\begin{itemize}
    \item \url{https://www.seatguru.com/airlines/American_Airlines/fleetinfo.php}
    \item \url{https://www.seatguru.com/airlines/Delta_Airlines/fleetinfo.php}
    \item \url{https://www.seatguru.com/airlines/JetBlue_Airways/fleetinfo.php}
    \item \url{https://www.seatguru.com/airlines/Southwest_Airlines/fleetinfo.php}
    \item \url{https://www.seatguru.com/airlines/Spirit_Airlines/fleetinfo.php},
\end{itemize}
allowing for the estimation of the number of passengers on each flight.\\

Producing our flight connectivity network required the following steps 
\begin{itemize}
    \item Assign each airport to the county in which it is located and distribute flights between counties that make up MSA's to the other counties in the MSA proportionally to their population
    \item Use airline capacity data to estimate the number of passengers on each flight and therefore the number of passengers flowing between two counties
    \item Standardise this by the population of each county to estimate population flow
\end{itemize}

\subsection{Data processing}

Once the data was collected, some minimal data preprocessing was required to prepare the data for our analysis. This consisted mainly of variable transformation and imputing of missing values.

\subsubsection{Variables transformation}
Natural logarithms were taken of the variables `\textit{confirmed case}', `\textit{population density}' and `\textit{number of vaccinations}'.

\subsubsection{Missing values}
In addition, there were missing values in covariates Containment and Health Index data (CHI) as well as the  Temperature (Temp) data and the Vaccination data. We imputed these missing values as follows

\begin{enumerate}
    \item The CHI values were calculated as a function of different policy measures (\url{https://github.com/OxCGRT/covid-policy-tracker/blob/master/documentation/index_methodology.md}). On several occasions either these policy measures or their flags were missing. We imputed these as follows
    \begin{itemize}
        \item Missing flags were imputed as 0's, i.e. no flags
        \item Missing values before the first recorded value were imputed as 0, i.e. assuming no measures were in place before the first recorded measure
        \item Missing values in between two recorded values were imputed as an average of the before and after measures
        \item Missing values after the last recorded value were imputed as the last seen measure, i.e. assuming a continuation
    \end{itemize}
    \item For the Temp data, the temperatures for 18 counties were not recorded at all. We imputed these using the nearest county geographically whose temperature data was available.
    
    \item The vaccination data was only recorded from the 14th of December 2020 and therefore all vaccination counts before this date were imputed as 0's.
\end{enumerate}

\subsection{Meta-County Clustering}{\label{Sec:MetaCountyClustering}}

%\jack{R County pop started with 3142 counties, the FIPS 3144}
%\jack{we removed 5 counties with no Facebook connections with any other county? say not appearing in the facebook network sp 3139}
%\jack{and we removed DC and 3138}
%\jack{We removed one further county becasue of the population density info - not sure why 3137}
%\jack{I sent Laura data with 3137 counties}
%\jack{Laura removed 8 counties from this that weren;t present in the flight data leaving 3129}
%\jack{PythonPopulation data has 3141 counties, Flight data has 3144 counties}

Before Clustering the data we removed some counties whose data were not available. From the FIPS data we downloaded there were 3144 counties. District of Columbia did not have \COVID policy variable available and we could not compute the population density for Valdez-Cordova Census Area in Alaska so we removed these. Five counties were removed as they were not available in the SCI index and 8 counties in Connecticut were removed because they did not have any flight connection data. This left 3129 counties.

Starting with 3129 counties, we hierarchically clustered small counties together such that the resulting meta-counties all have population greater than 500,000. The clustering procedure is described in the following steps and is implemented by our code.
\begin{itemize}
    \item[] \textbf{Step 1}: Remove the `big' counties. Any county whose population was greater than 500,000 is extracted and left unchanged. There were 136 `big' counties leaving 2993 `small' counties.
    \item[] \textbf{Step 2}: Cluster small counties with each state. 
    \begin{itemize}
        \item[i)]  Within each state find the smallest `small' county and combine this with the `small' county within that state whose centroid is closest to create a `meta-county'
        \item[ii)] Update the county centroid as the average of the latitude and longitude of the two combined counties 
    \end{itemize}
    This procedure is repeated with each state until either all of the `meta-counties' have population greater than 500,000, or there is only one meta-county left for that state. This resulted in 196 meta counties 
\end{itemize}

This clustering procedure resulted in $p = 332$ counties. Once the meta-counties have been created the number of cases were summed, populations are combined, areas combined, temperatures averaged and the vaccination and CHI variables inputted for that state. %\laura{I confirm all these. The final dataset also includes averaged $density\_sk$ and averaged $per\_daily\_vaccinated$ (while $daily\_vaccinated$ is inputted); I do not think we used any of those?} 
This allowed the model described below to be estimated together on the large counties and the meta-counties made up of smaller counties.

\subsection{Model description}
Our final response variable is the log of the weekly \COVID infections per 10,000 members of the population (i.e. cases / population $\times$ 10,000). This results in data $y_1,\ldots, y_n$ where $y_i= (y_{i1}, ..., y_{ip})$ is the log of the standardised weekly \COVID infections at week $i$ in the $p=332$ counties and meta-counties. The sample interval is from 22 January 2020 to 30 November 2021 resulting in $n = 97$ weeks of data. 

Our graphical model posits $y_i \sim \mathcal N_p(\mu_i,\,\Theta^{-1})$ where $\mu_i= (\mu_{i1},..., \mu_{ip})$. For convenience, we decouple the estimate of $\mu_i$ from $\Theta$. We pose a regression model for $\mu_{ij}$ and then estimate $\Theta$ using the residuals of this model assuming zero mean as in Section 2. Our generalised additive regression model for $y_{ij}$ can be summarised as follows 
\begin{align}
log(confirmed)_{ij}  = b_0 &+ b_1 \times log(Lag_{confirmed})_{ij} +
b_2 \times log(popdensi)_j\nonumber\\
&+b_3 \times Cum\_vaccinated_{i,state_j} +
b_4 \times CHI_{i,state_j}  \nonumber\\
&+s(Temp)_{ij} + \gamma_2 \times Time_2 + ... + 
\gamma_T \times Time_T + \epsilon_{ij} \nonumber
\end{align}
where 
\begin{itemize}[itemindent=0pt]%,leftmargin=*]
    \item[(1)] $log(confirmed)_{ij}$ represents the natural logarithm of weekly per 10,000 people confirmed case in county $j$ at time $i$.
    \item [(2)] $log(Lag_{confirmed})_{ij}$ a first-order auto-regressive term measuring the infection rate at the previous time point $i - 1$ for each county $j$
    \item [(3)] $log(popdensi)_j$ is the population density for county $j$
    \item [(4)] $Cum$\_$vaccinated_{i, state_j}$ is the cumulative number of vaccinated individuals in the state to which county $j$ belongs by time $i$
    \item [(5)] $CHI_{i, state_j}$ represented the Containment and Health Index summarising \COVID policies/measures put in place in the state to which county $j$ belongs and time $i$ (wearing masks, closing schools, etc.)
    \item[(6)] $s(Temp)_{ij}$ is a non-parametric smooth of the average temperature for county $j$ at time $i$ implemented in \texttt{mgcv} package in \textit{R}
    \item[(7)] $Time_i$ is an indicator for week $i$ and provides a weekly fixed effect term estimating the mean infections across all counties at time $i$
    \item[(8)] $\epsilon_{ij}$ are the residuals of county $j$ at time $i$
\end{itemize}
With such a model we aim to remove the effect of the most relevant covariates that drive the mean number of infections, allowing $\Theta^{-1}$ to capture dependencies unexplained by these covariates. % and investigate the influence of the distance between counties (Geographical distance) and the degree of social connections between counties (Facebook index).

%\jack{Parameter Estimates?}

\subsection{Checking model goodness-of-fit}
\label{ssec:gof_COVID}

The main assumptions behind our assumed model require that the residuals $\epsilon_{ij}$ are Gaussian distributed and independent across $i=1,\ldots,n$ time points. We provide diagnostic plots to check these assumptions.

Figure \ref{Fig:COVID_residuals} plots the fitted values $\hat{y}_{ij}$ and each of the predictors against the residuals $\epsilon_{ij}$. This demonstrates that the assumption that the covariates are linearly related to the response is satisfactory and that the residuals appear reasonably homoskedastic. Figure \ref{Fig:COVID_qq} shows a histogram of the standardised residuals and Q-Q-normal plots for $\epsilon_{ij}$. The Gaussian assumption is tenable here.

\begin{figure}%[hbt!]
\centering
\includegraphics[width =0.49\linewidth]{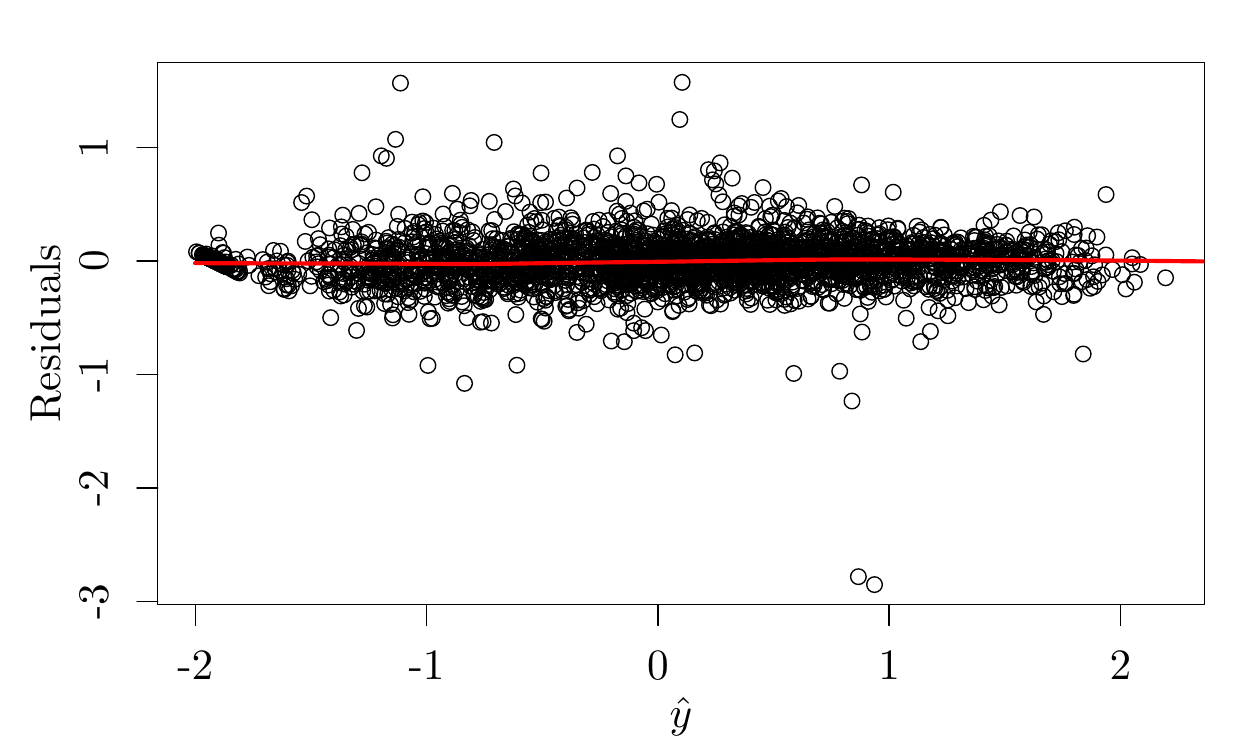} 
\includegraphics[width =0.49\linewidth]{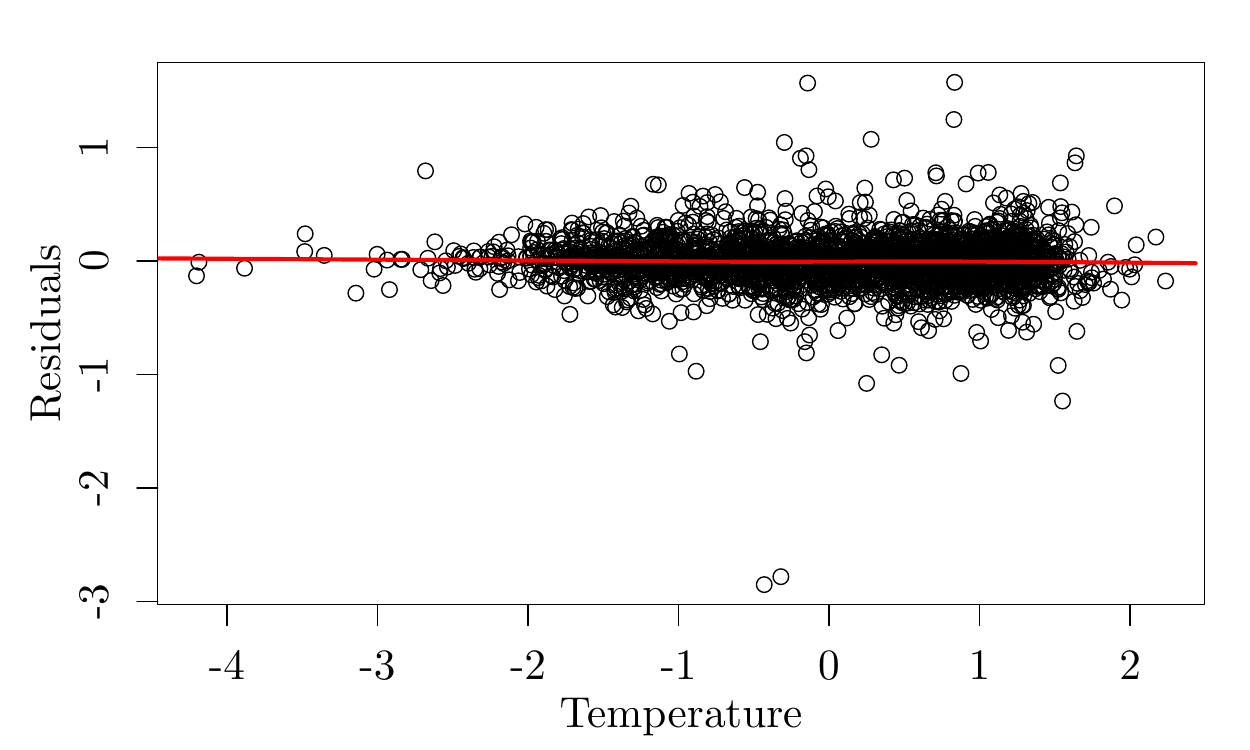}
\includegraphics[width =0.49\linewidth]{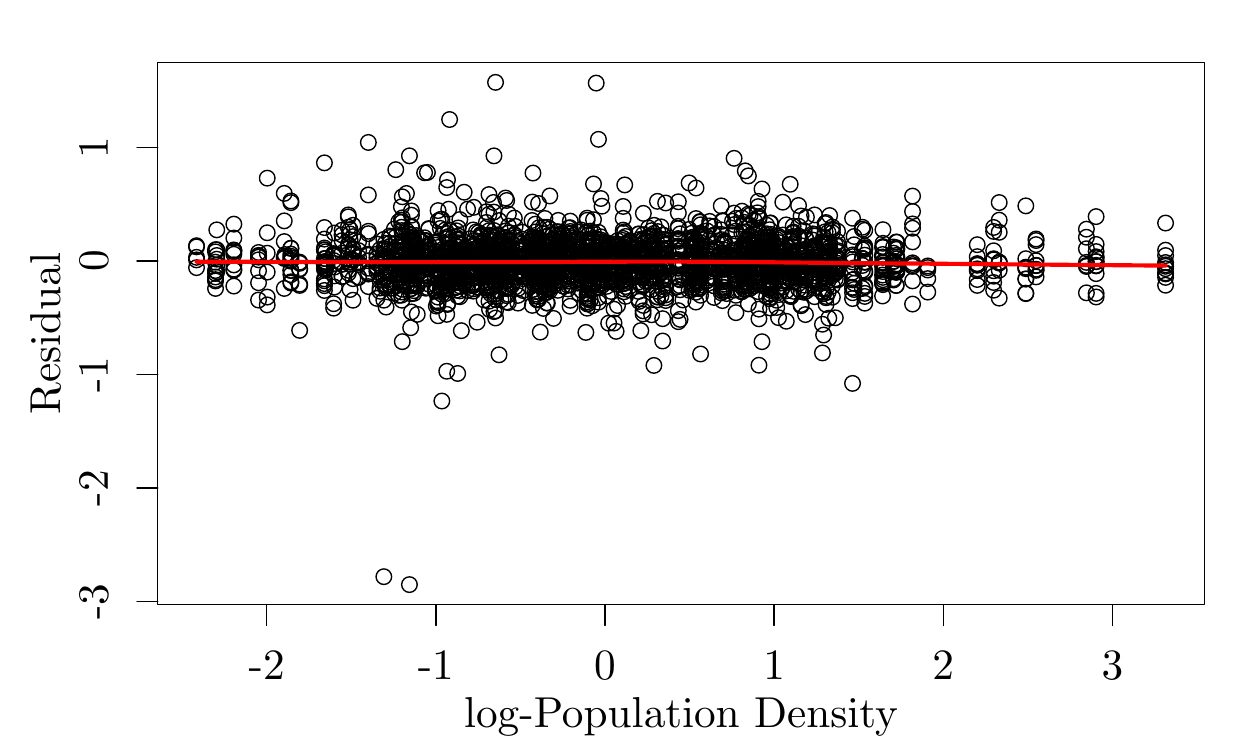}
\includegraphics[width =0.49\linewidth]{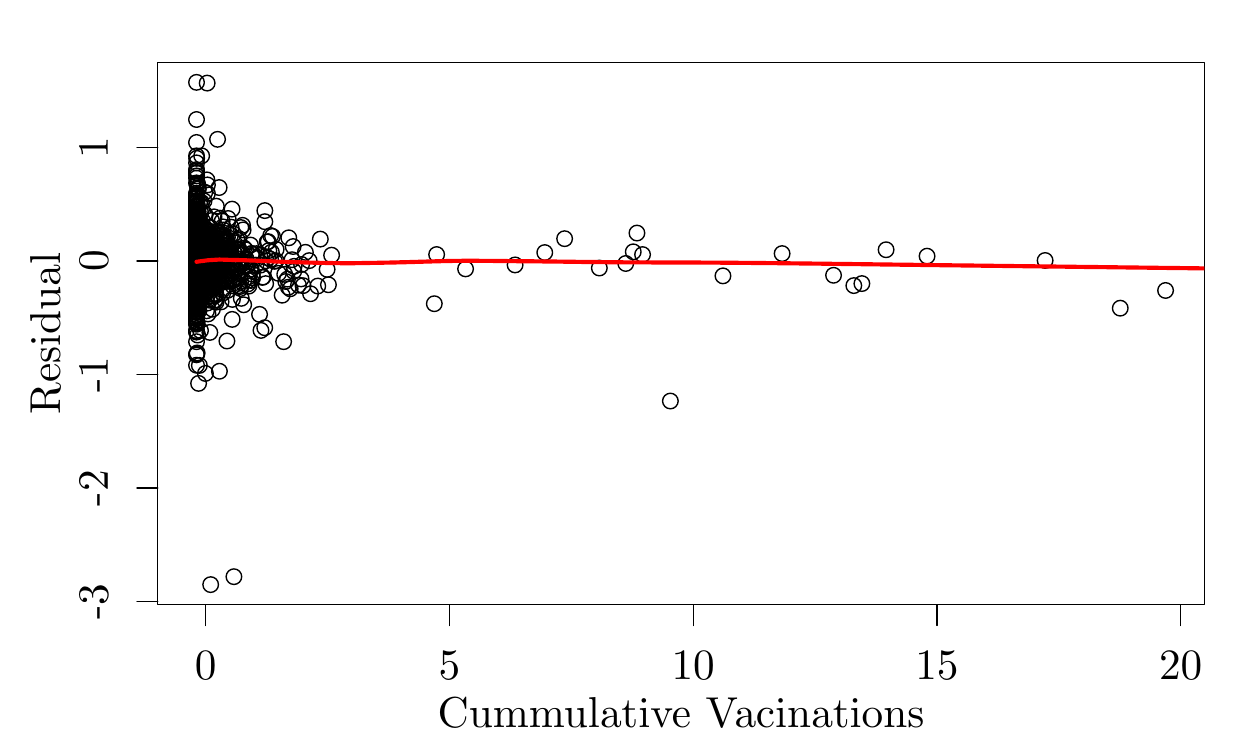}
\includegraphics[width =0.49\linewidth]{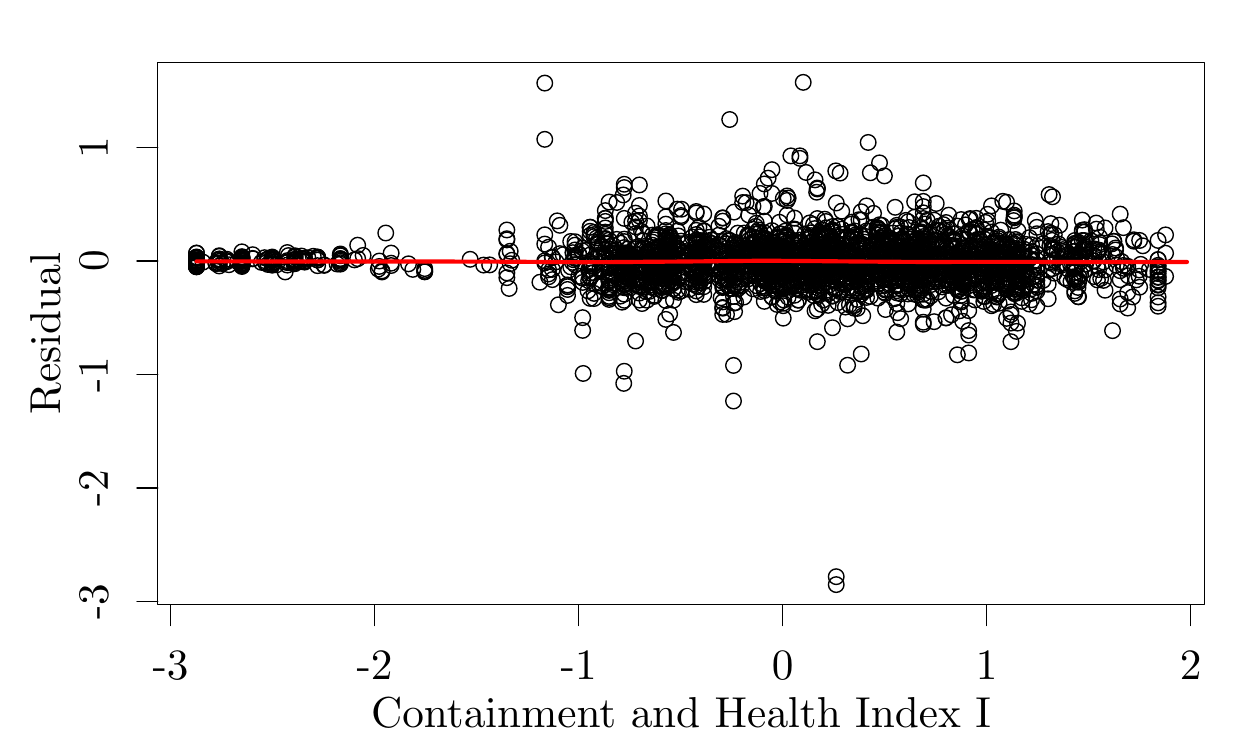}
\caption{Plots of the fitted values and each covariateagainst the residuals for the \COVID data. The {\color{red}{\textbf{red}}} line corresponds to the LOWESS smooth.}
\label{Fig:COVID_residuals}
\end{figure}

\begin{figure}%[hbt!]
\centering
\includegraphics[width =0.49\linewidth]{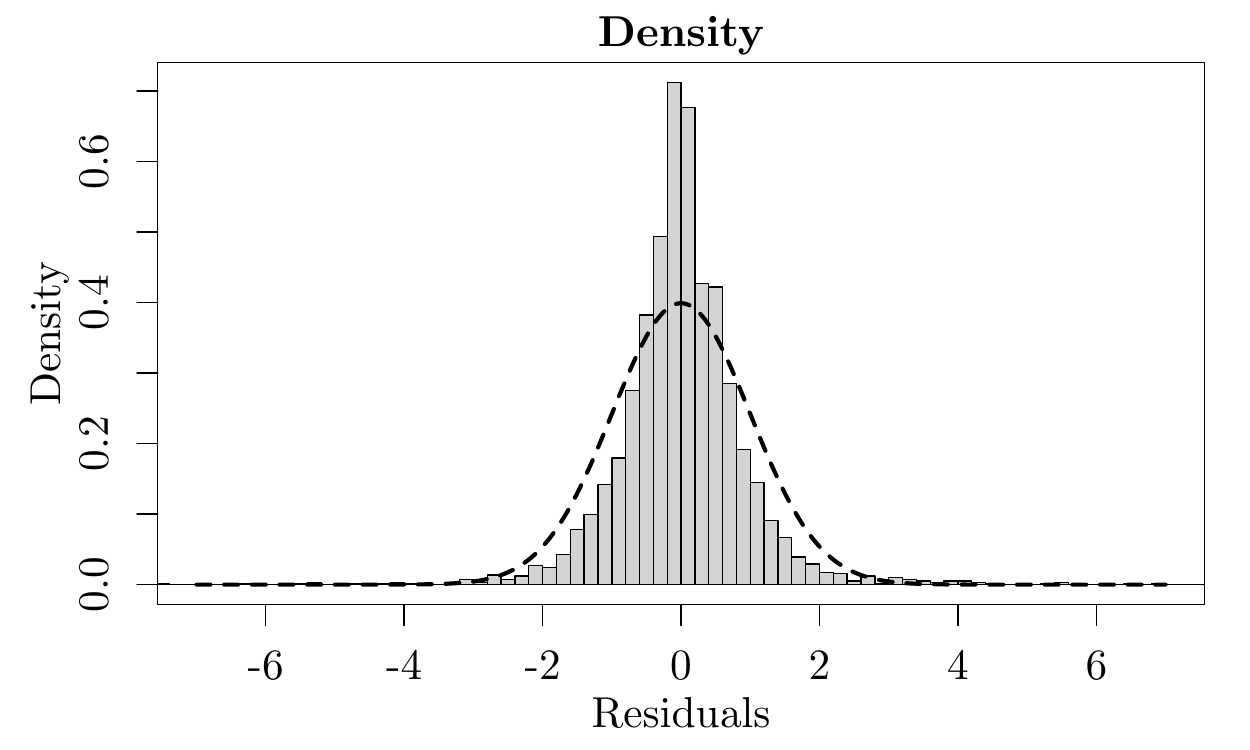} 
\includegraphics[width =0.49\linewidth]{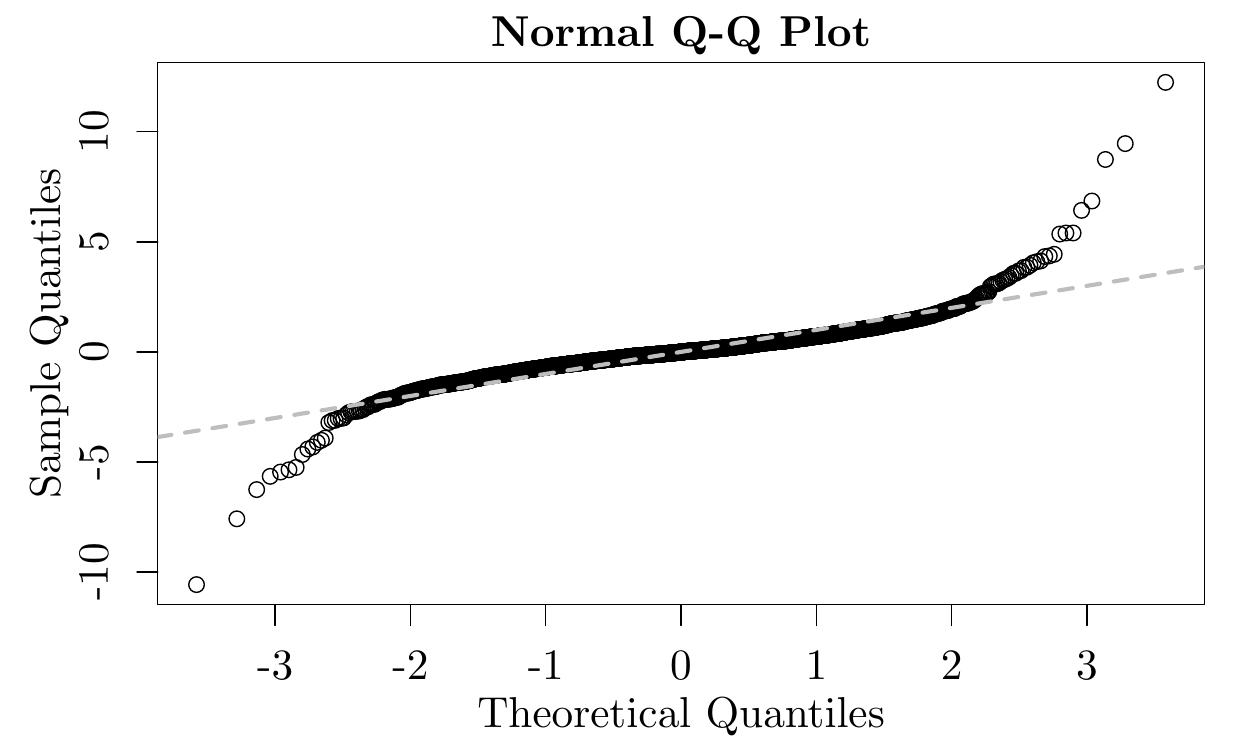}
\caption{\COVID data. \textbf{Left} Histogram of the standardised residuals compared with the standard Gaussian density. \textbf{Right} Q-Q Normal plot of the standardised residuals.}
\label{Fig:COVID_qq}
\end{figure}

The raw \COVID data exhibited strong serial correlation. To address this issue we added a first-order auto-regressive term. Figure \ref{Fig:COVID_ACF} plots the autocorrelation functions and partial autocorrelation functions for further lags after incorporating the AR1 term. These indicate that higher-order terms are unnecessary. After adding an AR1 term the interpretation of the errors (and their covariance) changes: they measure the infection rate relative to the covariates and to the infection rate of the previous week, i.e. they capture whether certain counties are growing faster/slower than expected (relative to the next week). So the model is investigating the growth rates, rather than absolute infection numbers.
\begin{figure}%[hbt!]
\includegraphics[width =0.49\linewidth]{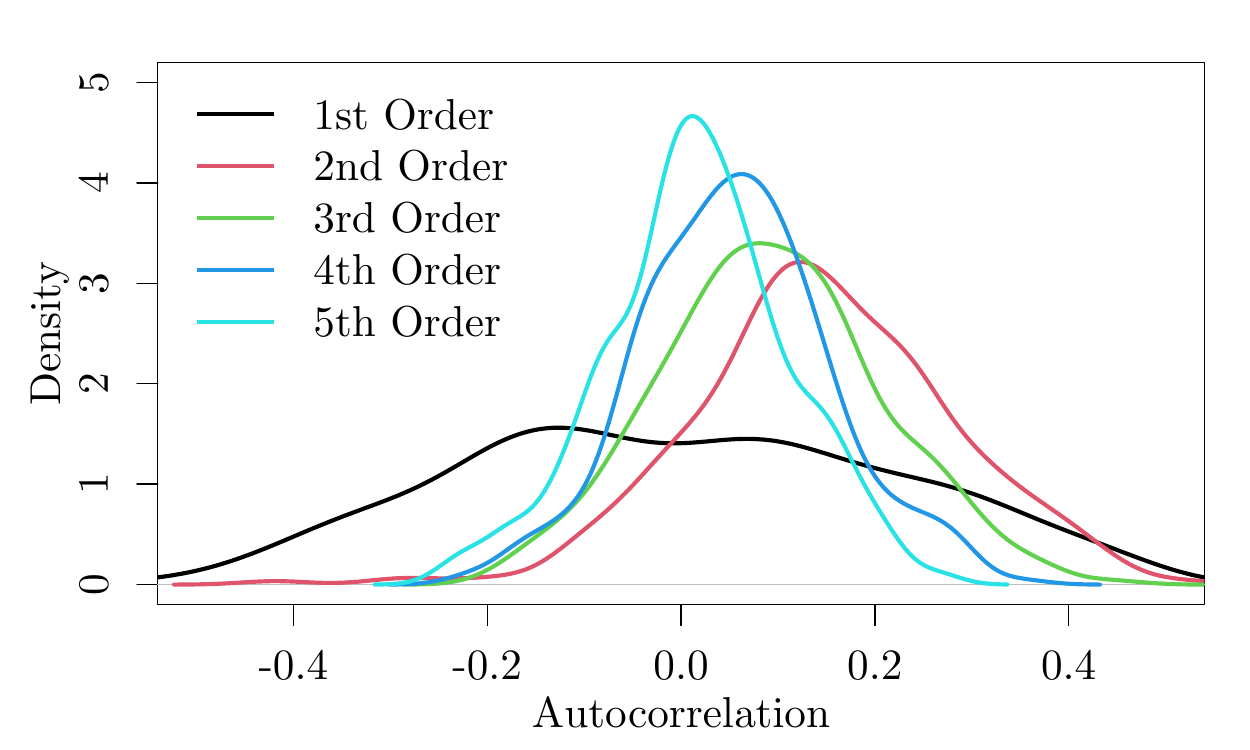} 
\includegraphics[width =0.49\linewidth]{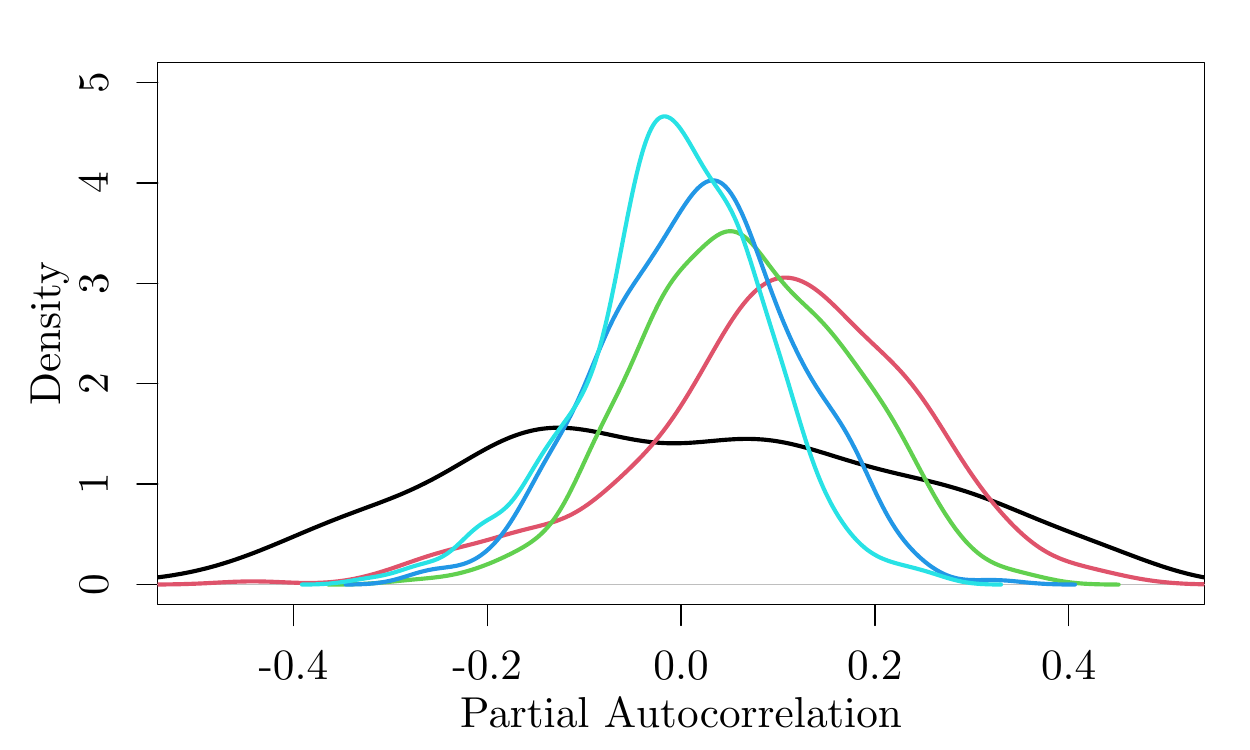}
\caption{Residual autocorrelation functions and partial autocorrelation functions after incorporating the AR1 term for the \COVID data.}
\label{Fig:COVID_ACF}
\end{figure}

\subsection{The network predictors}
\label{ssec:NetworkLinearity_COVID}

A further assumption of our proposed network \GLASSO models, as discussed in Section 2.1, is that there is a linear relation between $\log \mathbb{E}[\rho_{jk}^2 | A]$ and the network entries $a_{jk}^{(q)}$. 
To achieve linearity we defined our two network predictors as 
\begin{align}
    A_1 := 1/\log(Geodist), \quad A_2 := \log(Facebook), \quad A_3 := \log(1+Flights).\nonumber
\end{align}
Figure \ref{fig:lincheck} illustrates that after such transformations, the assumption of linearity is reasonably satisfied.
%displays such graphical checks for the \COVID and stock market data and the two networks considered in each application.
%Although the relation is not perfectly linear, the approximation seems good enough for our practical purposes.

%\david{Li, can you please polish Fig \ref{fig:lincheck}? First, only use equispaced bins, forget about quantile bins. Second, do not show blue lines, just the dots. Third, the figure should show only 4 panels: two for COVID on top and two for stock market on the bottom. Save each of these panels in a separate pdf file, and add them to the latex tabular command that I used (replace the A1-equi-obs.pdf by the corresponding file name)}

\begin{figure}%[hbt!]
\centering
%\begin{tabular}{cc}
%COVID, Facebook network & COVID, Geographical network \\
%\includegraphics[width =0.5\linewidth]{plot/A1-equi-obs.pdf} &
%\includegraphics[width =0.5\linewidth]{plot/A1-equi-obs.pdf} \\
\includegraphics[width =0.49\linewidth]{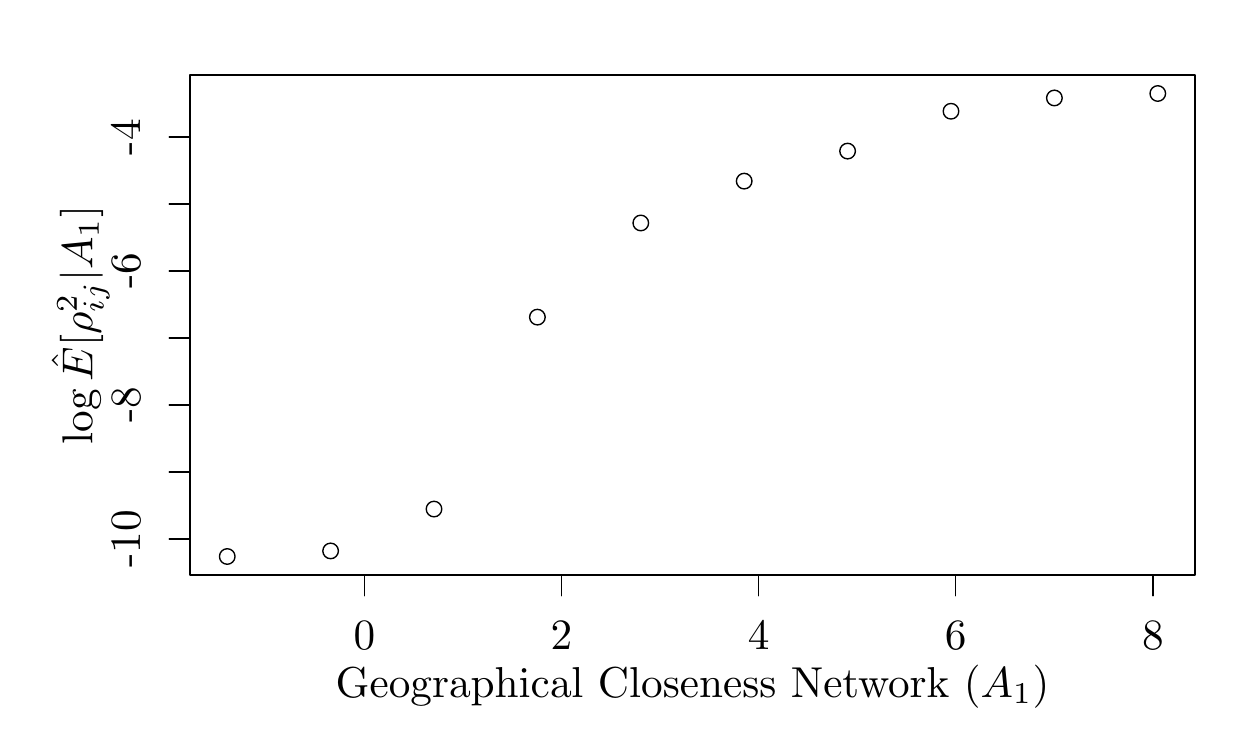}
\includegraphics[width =0.49\linewidth]{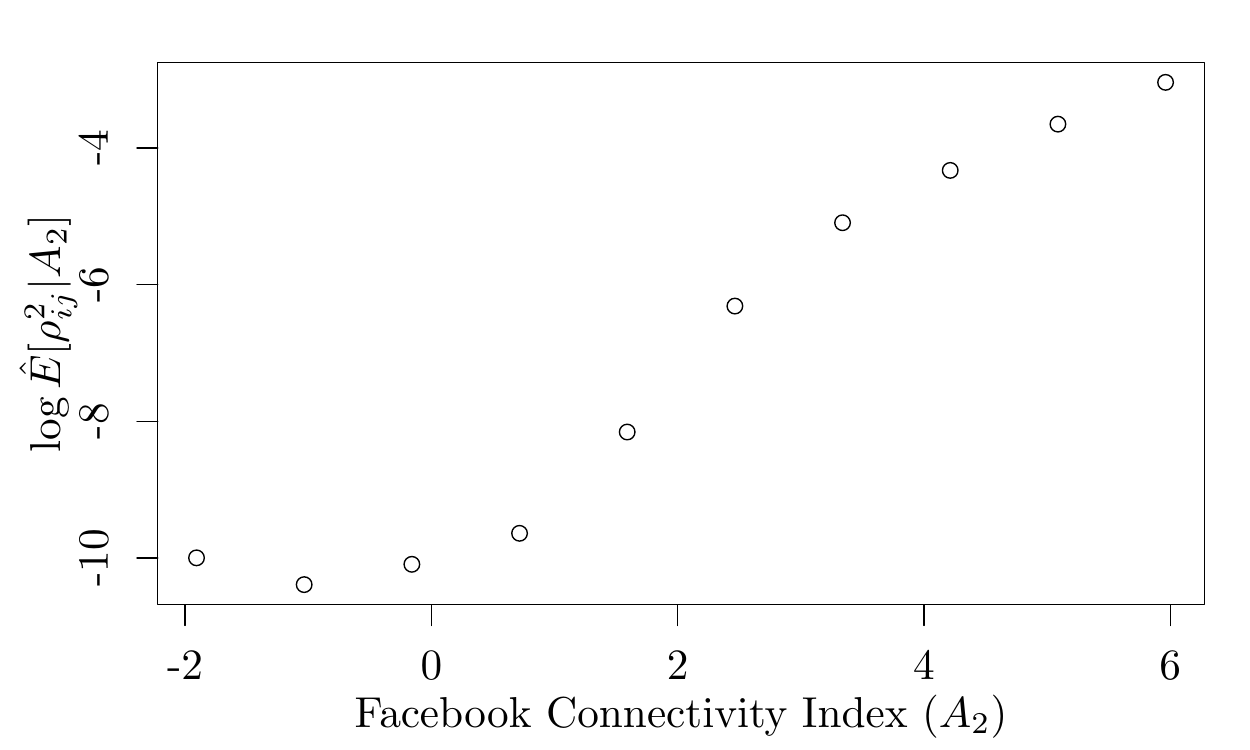}
\includegraphics[width =0.49\linewidth]{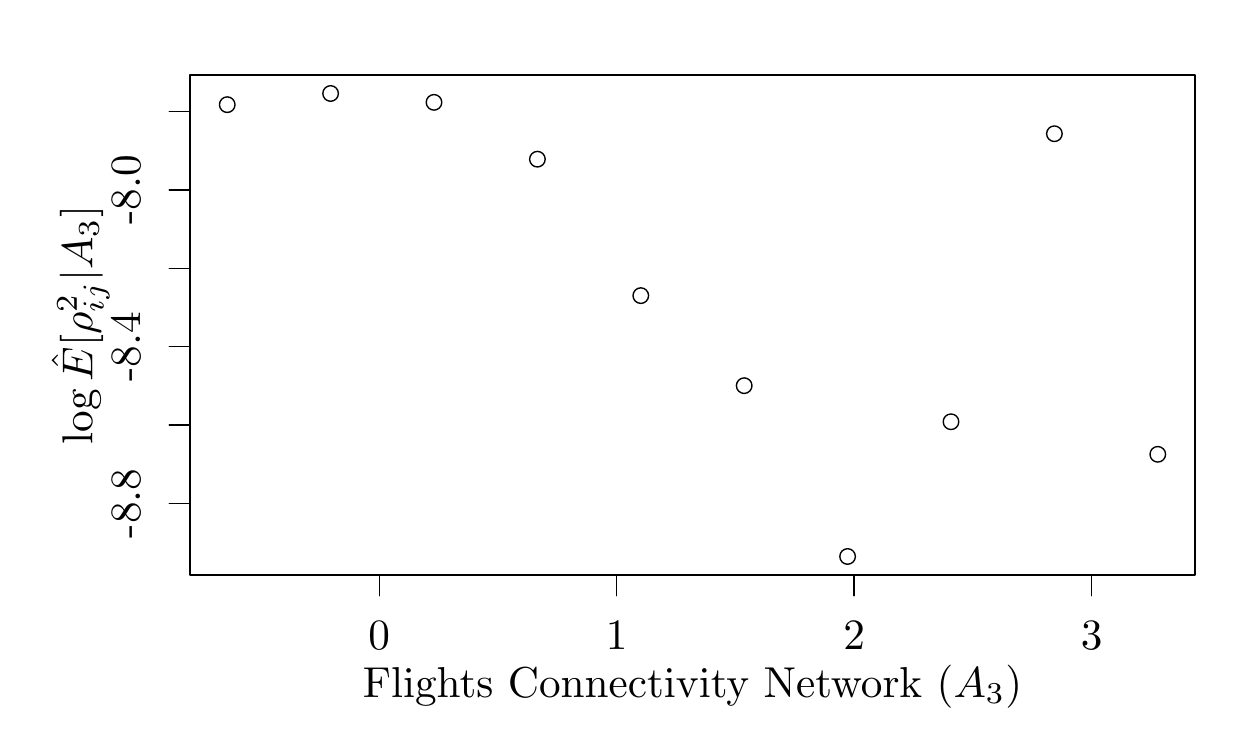}
%\includegraphics[width =0.49\linewidth]{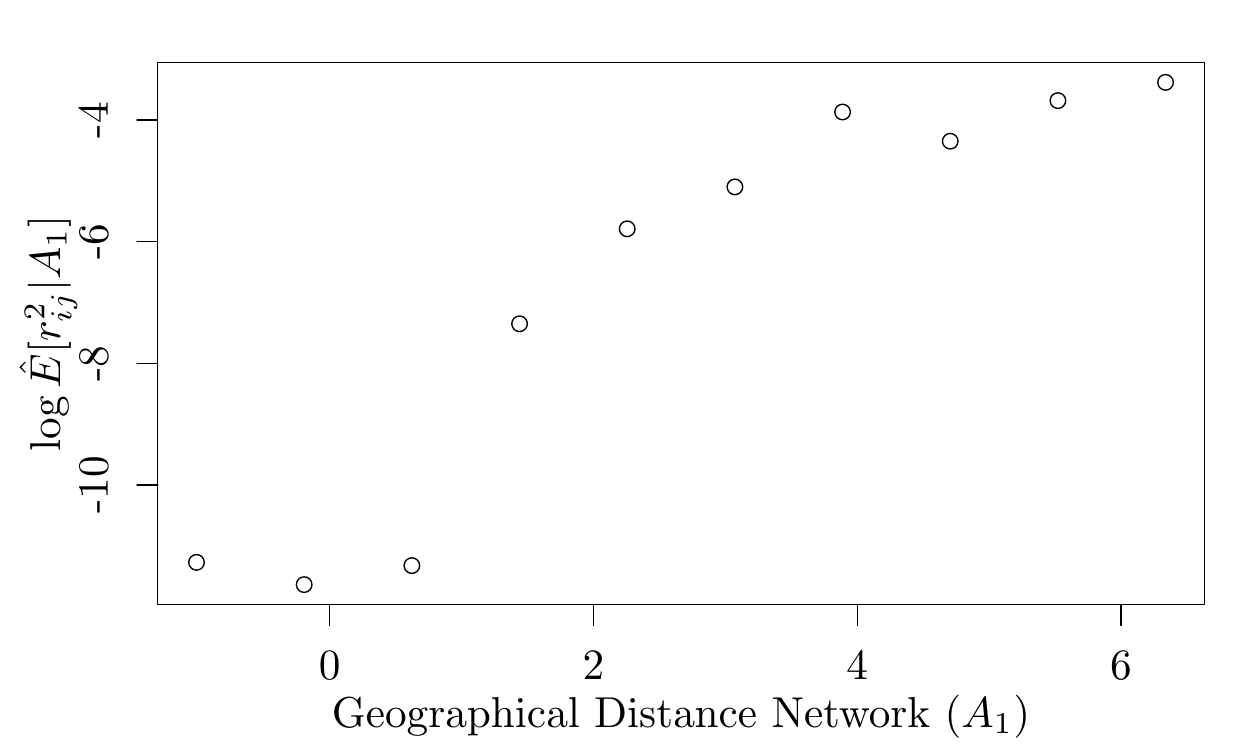}
%\includegraphics[width =0.49\linewidth]{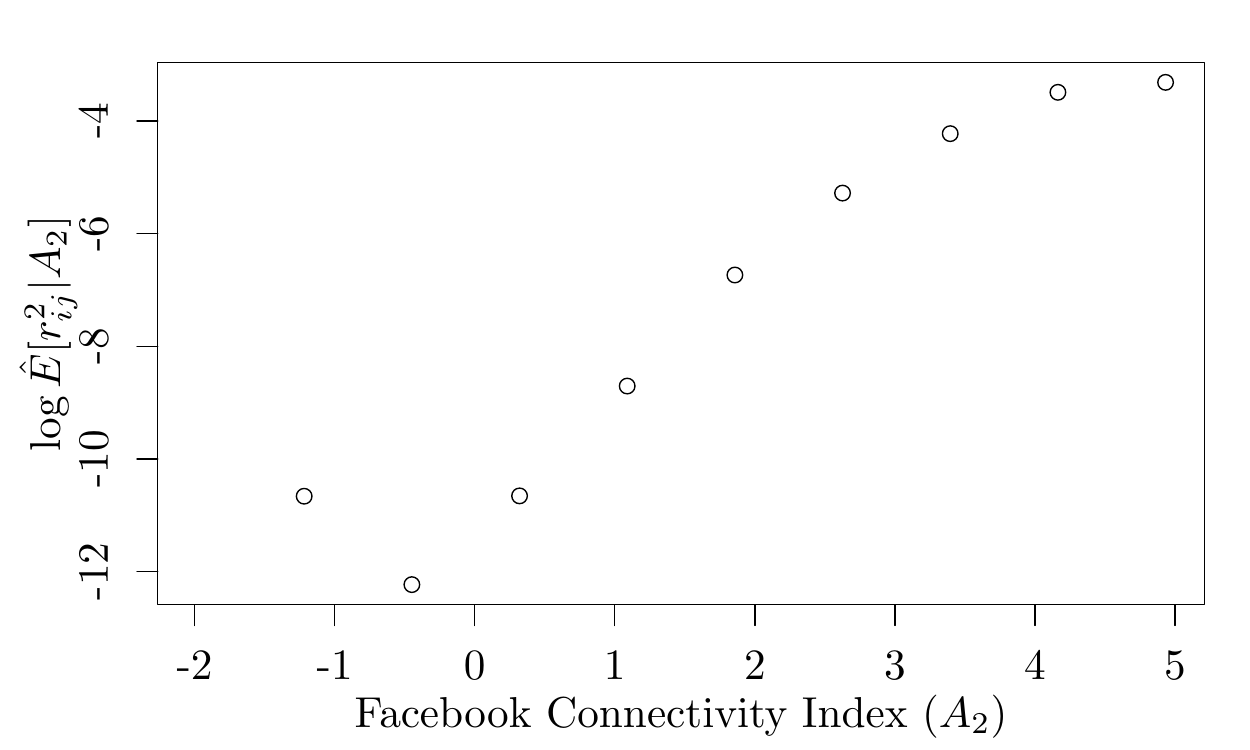}
%Stock market, Economic network & Stock market, Policy network \\
%\includegraphics[width =0.5\linewidth]{plot/A1-equi-obs.pdf}
%\includegraphics[width =0.5\linewidth]{plot/A1-equi-obs.pdf} \\
%\end{tabular}
\caption{Assessing the linear relation between $\log \mathbb{E}[\hat{\rho}_{jk}^2| A]$ and the network matrices, where $\hat{\rho}_{jk}$ is the \GLASSO estimate. The points represent the log-mean values of $\hat{\rho}_{jk}^2$ within 10 equispaced bins defined for each network.
}
\label{fig:lincheck}
\end{figure}

\subsection{Supplementary figures}

The top of Figure \ref{fig:glasso_covid_flight} the flight connectivity network against partial correlations estimated by \GLASSO. It appears that as the flight connectivity goes up, the variance in the partial correlations decreases slightly. However the dependence between the network and the partial correlations is much smaller than was observed between the geographical or Facebook networks and the partial correlations  in Figure 1. The fitted spike-and-slab distributions in the bottom of Figure \ref{fig:glasso_covid_flight} further demonstrate this. 

\begin{figure}[!ht]
\begin{center}
\includegraphics[width =0.49\linewidth]{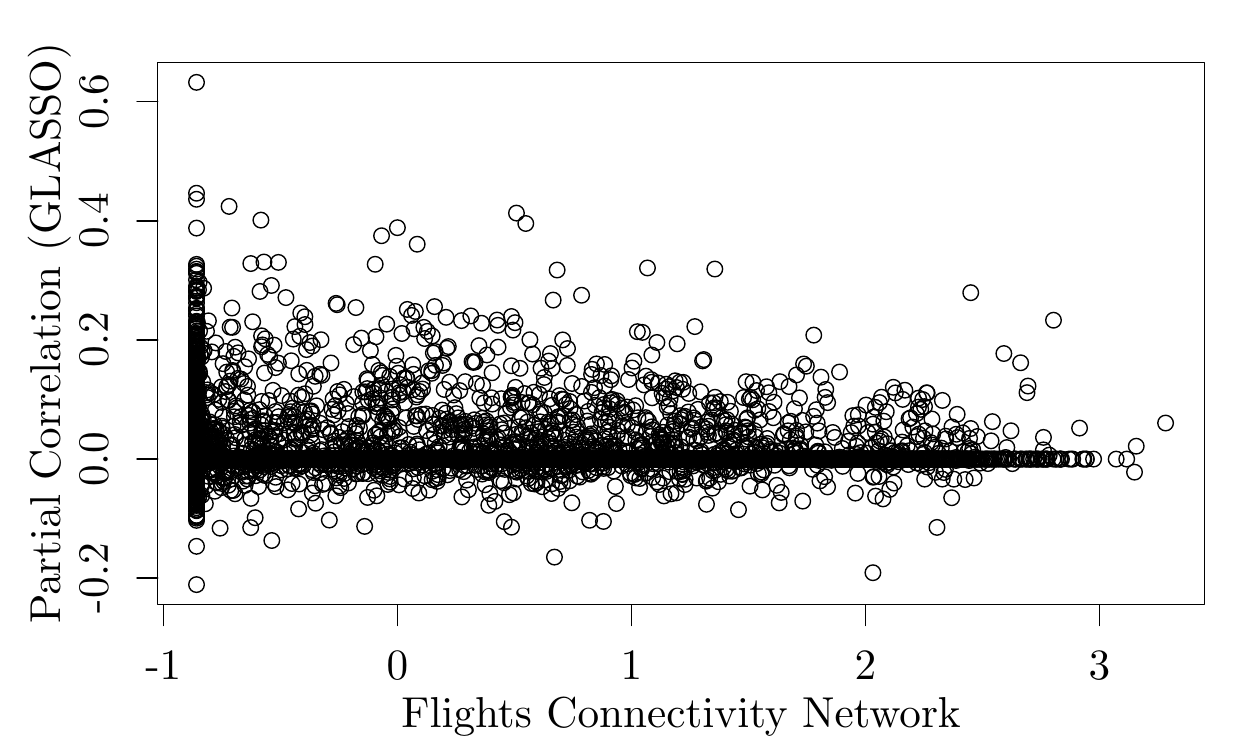}\\
\includegraphics[width =0.49\linewidth]{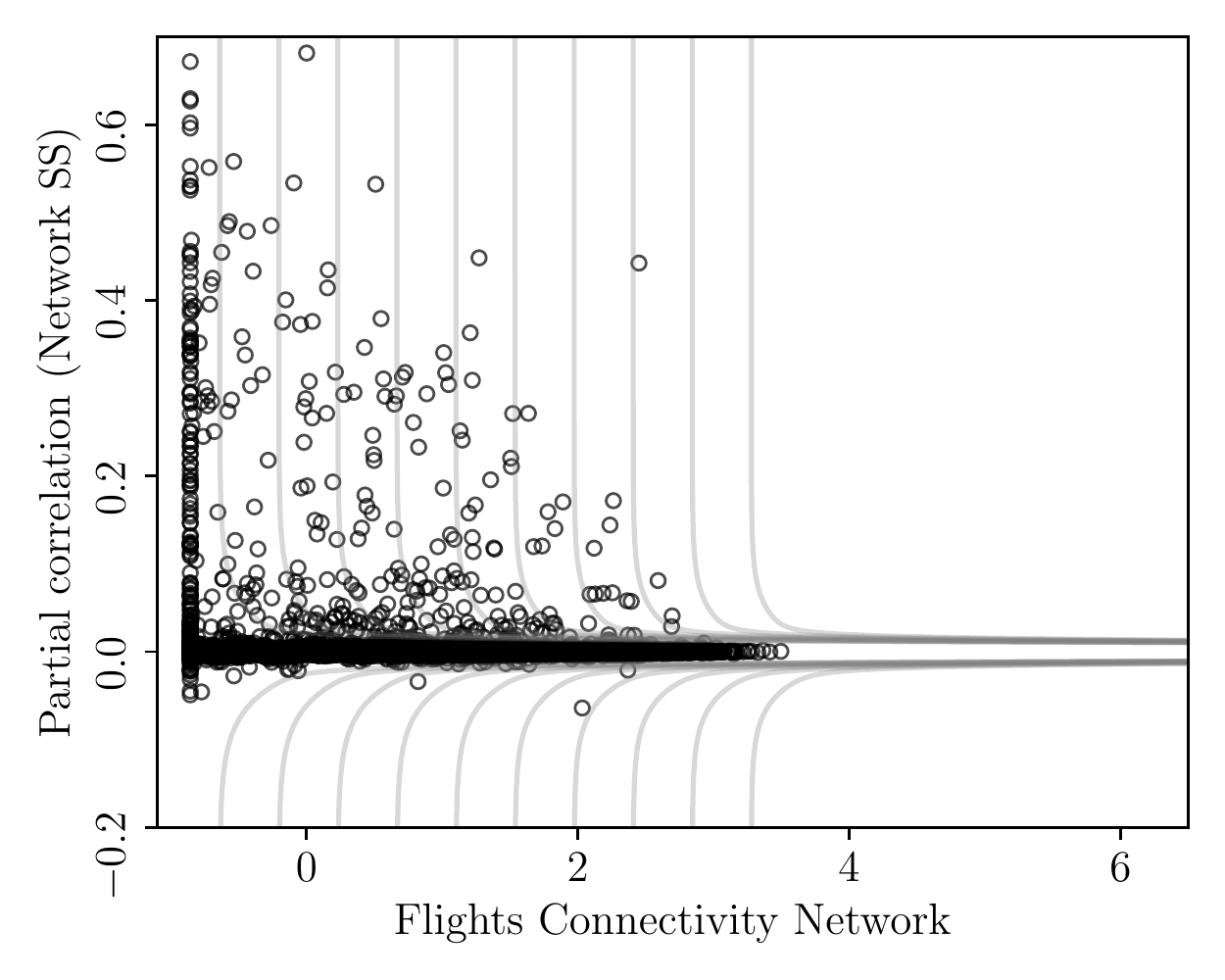}
%\includegraphics[width =0.49\linewidth]{plot/COVID_new/COVID_SS_prob_slab.pdf}
%This plot went in the appendix 
%\includegraphics[width =0.5\linewidth]{plot/COVID_new/partial corr vs facebook.pdf} 
%\includegraphics[width =0.5\linewidth]{plot/COVID_new/partial corr vs geodist.pdf}
%\includegraphics[width =0.49\linewidth]{plot/COVID_new/GLASSO_vs_A1_1_0.5_tikz-1.pdf}
%\includegraphics[width =0.49\linewidth]{plot/COVID_new/GLASSO_vs_A2_1_0.5_tikz-1.pdf}
\caption{Residual partial correlations in \COVID infections (adjusted for covariates) across counties vs flight connectivity network defined as $\log(1 + Flight)$. Top panel: partial correlations estimated with graphical LASSO, with penalization parameter set via \BIC. % \EBIC $\gamma_{\EBIC} = 0.5$. 
Bottom panel: fitted spike-and-slab distributions and fitted partial correlations estimated with network graphical spike-and-slab LASSO. %Bottom panel: prior slab probability as a function of both networks.
}
\label{fig:glasso_covid_flight}
\end{center}
\end{figure}

Table \ref{Tab:COVID_edge_network_SS} summarises the estimated graphical model under the network spike-and-slab model using a posterior slab probability threshold of $> 0.5$ and $> 0.95$. The number of edges estimated under both the 0.5 and 0.95 slab probability threshold is considerably smaller than the number of edges estimated under the network \GLASSO models. Under the 0.95 slab probability threshold, the estimated number of edges is more conservative. 

\begin{table}[]
\centering
\caption{\COVID data: Edge counts of the network spike-and-slab model when declaring an edge for posterior slab probability $> 0.5$ and $> 0.95$}
\label{Tab:COVID_edge_network_SS}
\begin{tabular}{ccccc}
\toprule{}     &     Edges ($> 0.5$) & Non-Edges ($> 0.5$)  &            Edges ($> 0.95$) & Non-Edges ($> 0.95$)   \\
\midrule Network SS  &             249 &                         54697 &                           102 & 54844 \\
\bottomrule
\end{tabular}
\end{table}

% \begin{table}[]
% \centering
% \caption{\COVID data: Edge counts of the network spike-and-slab model when declaring an edge for posterior slab probability $> 0.5$ and $> 0.95$
% \label{Tab:COVID_edge_network_SS}
% \begin{tabular}{ccccc}
% \toprule{}     &     Edges ($> 0.5$) & Non-Edges ($> 0.5$)  &            Edges ($> 0.95$) & Non-Edges ($> 0.95$)   \\
% \midrule Network SS  &             280 &                         4571 &                           68 & 4783 \\
% \bottomrule
% \end{tabular}
% \end{table}

Figure \ref{fig:ss_covid_probslab} shows how the estimated network hyperparameters of Table 3 affect the location of the slab and the probability of being in the slab marginally for each network when fixing the other two networks to their means. We see that while as both the geographical closeness and Facebook networks increase the location of the slab and the probability of being in the slab increases, the Facebook network has the larger effect. 

%This demonstrates the benefits of the more complex spike-and-slab framework, while the network \GLASSO estimated negative coefficients for both networks indicating positive correlation between the networks and the size of the partial correlations. The spike-and-slab disentangles this effect, showing that while the mean of those partial correlations that are non-zero increases with the geographical network, the probability of being non-zero does not. \jack{Move this text to the Stock market example }

\begin{figure}%[hbt!]
\begin{center}
\includegraphics[width =0.49\linewidth]{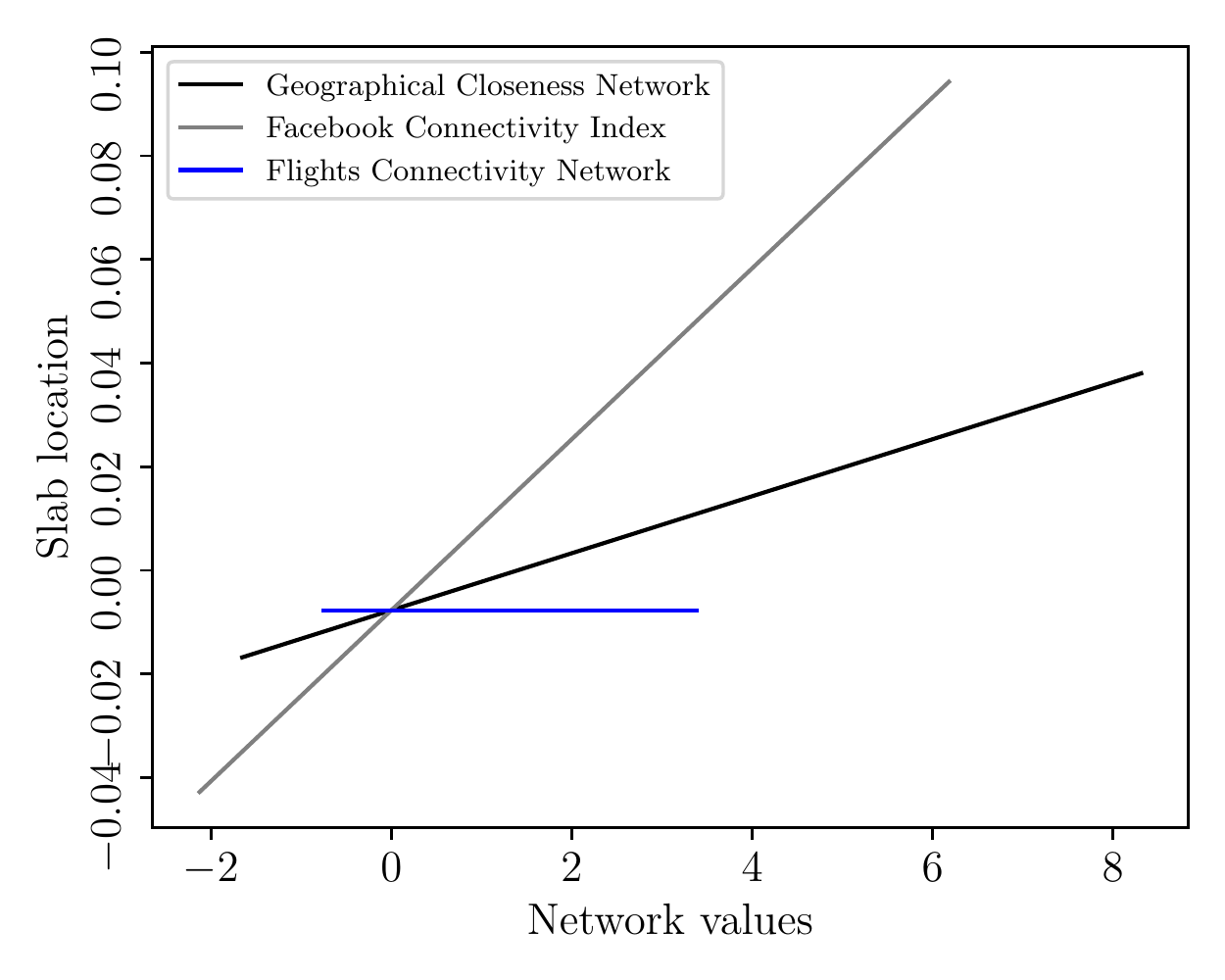}
\includegraphics[width =0.49\linewidth]{plot/COVID_new/COVID_SS_prob_slab.pdf}
\caption{\COVID data: Slab location (\textbf{left}) and slab probability (\textbf{right}) as a function of the three networks estimated by empirical Bayes. %We consider the slab here as the location of the partial correlations, the negation of the $\rho$ model parameters, which is why a negative value for $\eta_{0}$ (Table \ref{Tab:SS_COVID_CIs}) result in a positive effect.
}
\label{fig:ss_covid_probslab}
\end{center}
\end{figure}

\subsubsection{U.S. map plots}{\label{Sec:MapPlots}}

Figure \ref{fig:no_network_vs_both_networks} visualises the network given by non-zero elements of the \GLASSO estimated $\Theta$ with no network information (top) and the network \GLASSO estimate of $\Theta$ obtained when using both $A_1$ and $A_2$ (bottom), the model achieving the smallest \BIC, on top of a U.S. map. 
The network \GLASSO estimates a much sparser network, but we see there are still edges present between counties that are geographically close as well as those that are farther away. %To investigate the marginal effects of each network further, Figure \ref{fig:unique_edges_covid} plots only those edges that were identified uniquely under the Network \GLASSO model using either of the network models. This shows that the $A_1$ encourages counties that are close together to be connected, while $A_2$ allows for counties that are further away physically, but more connected via Facebook to be present 

%Figure \ref{fig:unique_edges_covid} top shows edges that were selected by our network \GLASSO but not by standard \GLASSO, when adding only the geographical network $A_1$. These edges correspond to counties that are close to each other.

%Figure \ref{fig:unique_edges_covid} top shows an analogous plot when only using the Facebook network $A_2$. Interestingly, here some of the edges are between geographically-distant counties in the west, north-east and south-east.

\begin{figure}%[hbt!]
\centering
\begin{subfigure}[b]{\linewidth}
    \includegraphics[trim= {0.0cm 0.5cm 0.0cm 0.5cm}, clip, width = 1\linewidth]{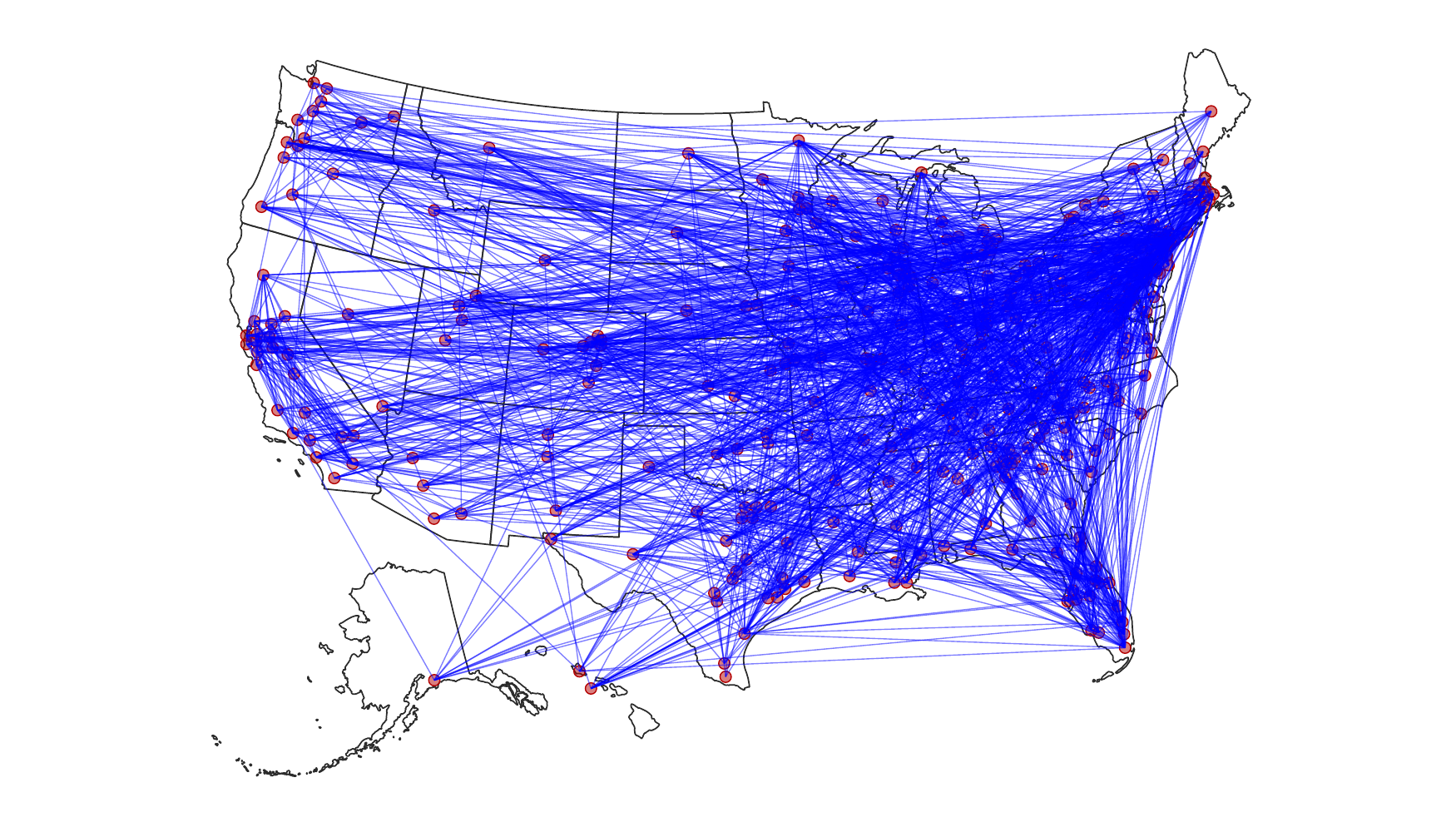}
    \caption{Edges identified in \GLASSO{} with no network}
\end{subfigure}
\vfill
\begin{subfigure}[b]{\linewidth}
    \includegraphics[trim= {0.0cm 0.5cm 0.0cm 0.5cm}, clip, width =\linewidth]{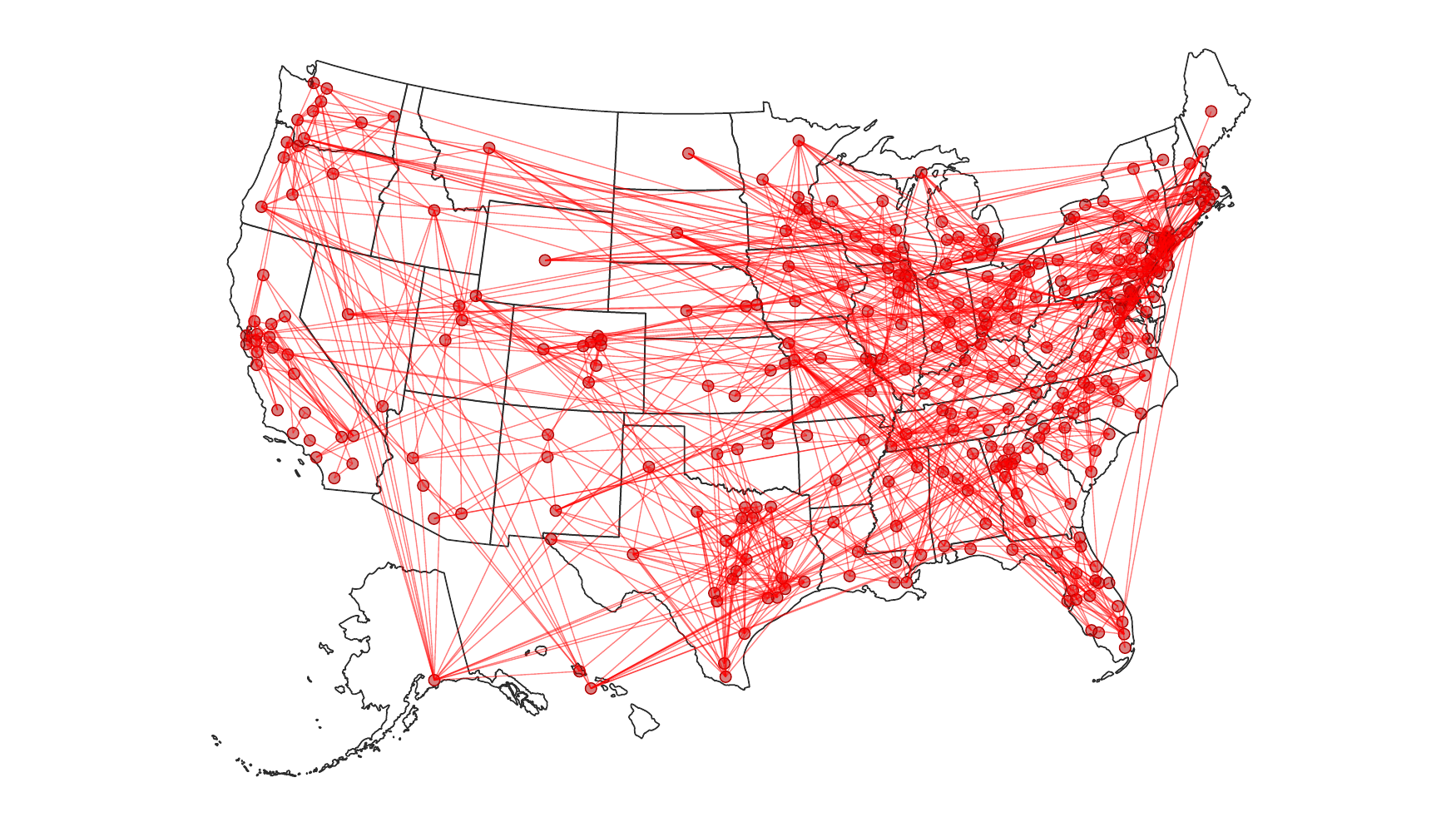}
    \caption{Edges identified in network \GLASSO{} with networks $A_1$ and $A_2$, the model achieveing the smallest \BIC}
\end{subfigure}
\caption{Edges identified by \GLASSO{} and network \GLASSO with the geographical closeness and Facebook networks. 
%\BIC - $\gamma_{\EBIC} = 0$}. 
%The coordinates of the county Honolulu (Hawaii) have been adjusted from (-164.44361, 23.87280) to (-158.2019740, 21.4613654) for presentation.
}
%trim={<left> <lower> <right> <upper>}
\label{fig:no_network_vs_both_networks}
\end{figure}

\subsection{Results using the \EBIC}{\label{Sec:EBIC_COVID}}

Similarly to the simulations, we also investigate the sensitivity of our \COVID data results by considering selecting hyperparameters using the \EBIC with $\gamma_{\EBIC} = 0.5$. Table \ref{tab:results_covid_0.5} presents these results. From the number of edges, we can see that using the \EBIC estimates sparser networks than under the \BIC, but the out-of-sample test set estimate suggests these estimates may be too sparse. Importantly, we see that the improvement of the network \GLASSO methods over standard \GLASSO is still apparent when using the \EBIC selection criteria.

%% OLD result p = 99
%\begin{table}[ht]
%\centering
%\caption{Comparison of four models for the \COVID data when using the \EBIC ($\gamma_{\EBIC} = 0.5$) to learn the network hyperparameters. $A_1$ and $A_2$: networks defined by $ 1/\log(Geodistance)$ and $\log(Facebook)$. \EBIC values account for the extra hyperparameters learned in the network \GLASSO model. Test set score obtained using 10-fold cross-validation}
%\caption{Four models for the \COVID data when using the \EBIC ($\gamma_{\EBIC} = 0.5$) to learn the network hyperparameters. $A_1$ and $A_2$: networks defined by $ 1/\log(Geodist)$ and $\log(Facebook)$. \EBIC values account for the extra hyper-parameters in the network \GLASSO models. 10-fold: 10-fold cross-validated log-likelihood
%\begin{tabular}{ccccccc}
%  \hline
%  Method & \EBIC & $\hat{\beta}_0$ & $\hat{\beta}_1$ & $\hat{\beta}_2$ & Edges & 10-fold\\ 
%   \hline
%   \GLASSO & 9556.655 & -0.789 &  &  &  175 & 59.67  \\ 
%   Network \GLASSO - $A_1$ & 7252.871 & 1.276 & -1.132 &  &  132 & 246.07 \\ 
%   Network \GLASSO - $A_2$ & \textbf{6795.846} & 3.605 &  & -1.947 &   91 & 262.9537 \\ 
%   Network \GLASSO - $A_1$ \& $A_2$ & 6809.573 & 3.556 & 0.278 & -2.278 &   97 & \textbf{263.12} \\ 
%   \hline
%\end{tabular}
%\label{tab:results_covid_0.5}
%\end{table}

\begin{table}[ht]
\centering
\caption{Eight models for the \COVID data when using the \EBIC ($\gamma_{\EBIC} = 0.5$) to learn the network hyperparameters. $A_1$, $A_2$ and $A_3$: networks defined by $ 1/\log(Geodist)$, $\log(Facebook)$ and $\log(1+Flight)$. \EBIC values account for the extra hyper-parameters in the network \GLASSO models. 10-fold: 10-fold cross-validated log-likelihood}
\begin{tabular}{cccccccc}
  \hline
  Method & \EBIC & $\hat{\beta}_0$ & $\hat{\beta}_1$ & $\hat{\beta}_2$ & $\hat{\beta}_3$ & Edges & 10-fold\\ 
   \hline
 \GLASSO & 32204.000 & -0.062 &  &  &  &    0 & 24.317 \\ 
 Network \GLASSO - $A_1$ & 27150.357 & 1.622 & -1.120 &  &  &  766 & 81.846 \\ 
 Network \GLASSO - $A_2$ & 24461.011 & 2.903 &  & -1.147 &  &  617 & 113.533 \\ 
 Network \GLASSO - $A_3$ & 32220.185 & 0.959 &  &  & -0.165 &  0 & 24.317 \\ 
 Network \GLASSO - $A_1$ \& $A_2$ & 24303.227 & 5.200 & -1.038 & -1.162 &  &  730 & 112.771 \\ 
 Network \GLASSO - $A_1$ \& $A_3$ & 26453.480 & 2.407 & -1.414 &  & 1.005 &  766 & 90.710 \\ 
 Network \GLASSO - $A_2$ \& $A_3$ & \textbf{23931.443} & 4.063 &  & -1.457 & -0.255 &  589 & \textbf{114.372} \\ 
 Network \GLASSO - $A_1$, $A_2$ \& $A_3$ & 24927.80 & 2.845 & -0.274 & -1.159 & 0.246 & 796 & 113.764 \\
   \hline
\end{tabular}
\label{tab:results_covid_0.5}
\end{table}

%\newpage

\section{Stock market data preparation}
\label{sec:stock_dataprocessing}

This section provides additional details for the analysis of the stock market excess returns data. 

\subsection{Data sources}

To undertake our analysis, we collected and combined the following datasets.

1. Stock price data\\
We extracted the daily closing stock price for $p = 366$ firms satisfying the following criteria: closing stock prices adjusted for stock splits and dividends were available in the COMPUSTAT  database for every trading day between 2 January 2019 to 31 December 2019 (leaving $n = 252$ time points), the stocks were associated to a member of the S\&P500 at the end of 2019, and we could retrieve their 10-K filings for at least one of the years 2014-2019. The data was downloaded from the Center for Research in Security Prices (CRSP) database accessed via Wharton Research Data Services (WRDS).\\

%2. CIK-TIC crosswalk data\\
%The stock price data use the TIC identifier while the risk data from which we derive our networks uses the CIK identifier.  We manually created a CIK-TIC crosswalk from the Compustat database accessed via WRDS. \\

2. S\&P 500 firms \\
The list of S\&P 500 firms was downloaded from \url{https://web.archive.org/web/20190912150512/https://en.wikipedia.org/wiki/List_of_S%26P_500_companies} which corresponds to the wikipedia page listing the S\&P500 retrieved on 12/09/2019, its last archived data in 2019.\\

3. Fama/French Three-Factor Model\\
We constructed excess returns using the Fama-French three-factor model \citep{FAMA19933}.  The three factors are the 1) overall market return, 2) a measure of firm size, and 3) a measure of book-to-market ratio.  The daily Fama/French factors were downloaded from \url{https://mba.tuck.dartmouth.edu/pages/faculty/ken.french/data_library.html}.  For each stock, we regress 2019 daily returns on the three factors (plus a constant) and extract the residual as the excess return.\\

4. Risk measures\\
Our network data measures the similarity of two companies' risk exposures stratified into Economic and Policy risks. The 10-K risk exposure data counts for each risk category the number of sentences within a company's 10-K filings that contained any member of a dictionary associated with that risk category.  We manually construct these using the dictionary terms listed in \citet{bakerPolicyNewsStock2019}.  From this data, we can construct a $p\times p$ network matrix for firms, where each entry $X_{ij}$ represents the degree of ``closeness" between firm $X_{i}$ and firm $X_{j}$.

%\subsection{Data processing}

%\begin{enumerate}
%    \item The price data from CRSP is arranged by TIC, a unique stock identifier, while the risk measures are arranged by CIK, a unique company identifier. Any two stocks associated with the same compnay had the same risk scores.
%    \item There are some negative values in the stock data. The negative signs are to ``indicate that it is a bid/ask average and not an actual closing price'' when the  ``closing price is not available'' \url{https://faq.library.princeton.edu/econ/faq/11159}. We, therefore, took the absolute values of the returns before the log return calculations. NO LONGER TRUE
%\end{enumerate}

\subsection{Model description}
Our final response variable is the log daily returns for $p = 366$ U.S. firms throughout 2019, resulting in $n = 251$ observations. We are, however, interested in the graphical model, $\mathcal{N}_p(0, \Theta^{-1})$, of the `excess returns', defined as the residuals of a linear model regressing the log-returns on the Fama-French factors.

%We download the firm-level stock returns at a daily frequency for 2019 from CRSP, and then extracted the frequency-specific factors from the Fama-French model. Over the period of interest, linear regressions are run on the Fama-French factors for selected 200 U.S. listed stocks. The residuals from these regressions are the excess returns. \GLASSO and \GOLAZO estimates were then implemented for the excess returns.

The `excess returns' for stock $j$ are therefore estimated, separately for each firm, using the following model
\begin{align}
r_{ij} - Rf_i  = b_{0j} + b_{1j} \times SMB_{i} + b_{2j} \times HML_{i} + b_{3j} \times(Rm-Rf)_{i} + \epsilon_{ij}
\end{align}
where
\begin{itemize}[itemindent=0pt]%,leftmargin=*]
    \item[(1)] $r_{ij}$ is the  log daily return of stock $j$ at time $i$ defined as $r_{ij} = \log p_{ij} - \log p_{i-1j}$, where $p_{ij}$ is the closing price of firm $j$ on day $i$
    \item[(2)] $Rf_i$ is the risk free rate at time $i$.
    \item[(3)] $SMB_{i}$ (Small Minus Big) is the average return on the three small portfolios minus the average return on the three big portfolios at time $i$. %\url{https://mba.tuck.dartmouth.edu/pages/faculty/ken.french/Data_Library/f-f_factors.html} %\jack{we either need more info or a reference here, ideally both} \li{Hi Jack, for covariates SMB, HML and (Rm-Rf), only one index per time point.}
    \item[(4)] $HML_{i}$ (High Minus Low) is the average return on the two value portfolios minus the average return on the two growth portfolios at time $i$. %\url{https://mba.tuck.dartmouth.edu/pages/faculty/ken.french/Data_Library/f-f_factors.html}%\jack{we either need more info or a reference here, ideally both}
    \item[(5)] $(Rm-Rf)_{i}$, the excess return on the market at time $i$, value-weighted return of all CRSP firms incorporated in the U.S. and listed on the NYSE, AMEX, or NASDAQ that have a CRSP share code of 10 or 11 at the beginning of $i$'s month, good shares and price data at the beginning of $i$'s month, and good return data for $i$ minus the one-month Treasury bill rate. %\url{https://mba.tuck.dartmouth.edu/pages/faculty/ken.french/Data_Library/f-f_factors.html} %\li{Yes, the $(Rm-Rf)_{i}$ data is daily. If you look at the data, you will see that the $R_m$ variable is daily data, while the $R_f$ variable is monthly data.}
    \item[(6)] Coefficients $b_{0j}$, $b_{1j}$, $b_{2j}$ and $b_{3j}$ are estimated for firm $j$ using ordinary least squares. 
\end{itemize}

\subsection{Checking model goodness-of-fit}{\label{ssec:gof_STOCK}}

Similarly to Section \ref{ssec:gof_COVID}, we produce diagnostic plots to confirm the validity of the linear-model and the Gaussianity and independence of its residuals. 

Figure \ref{Fig:STOCK_ACF} plots autocorrelation functions and partial autocorrelation functions, demonstrating that the observations can be considered independent and that there is no need to consider auto-regressive terms. Figure \ref{Fig:STOCK_residuals} plots the fitted values $\hat{y}_{ij}$ and each of the predictors against the residuals $\epsilon_{ij}$, demonstrating that the assumption that the covariates are linearly related to the response is satisfactory and that the residuals appear reasonably homoskedastic.

\begin{figure}%[hbt!]
\includegraphics[width =0.49\linewidth]{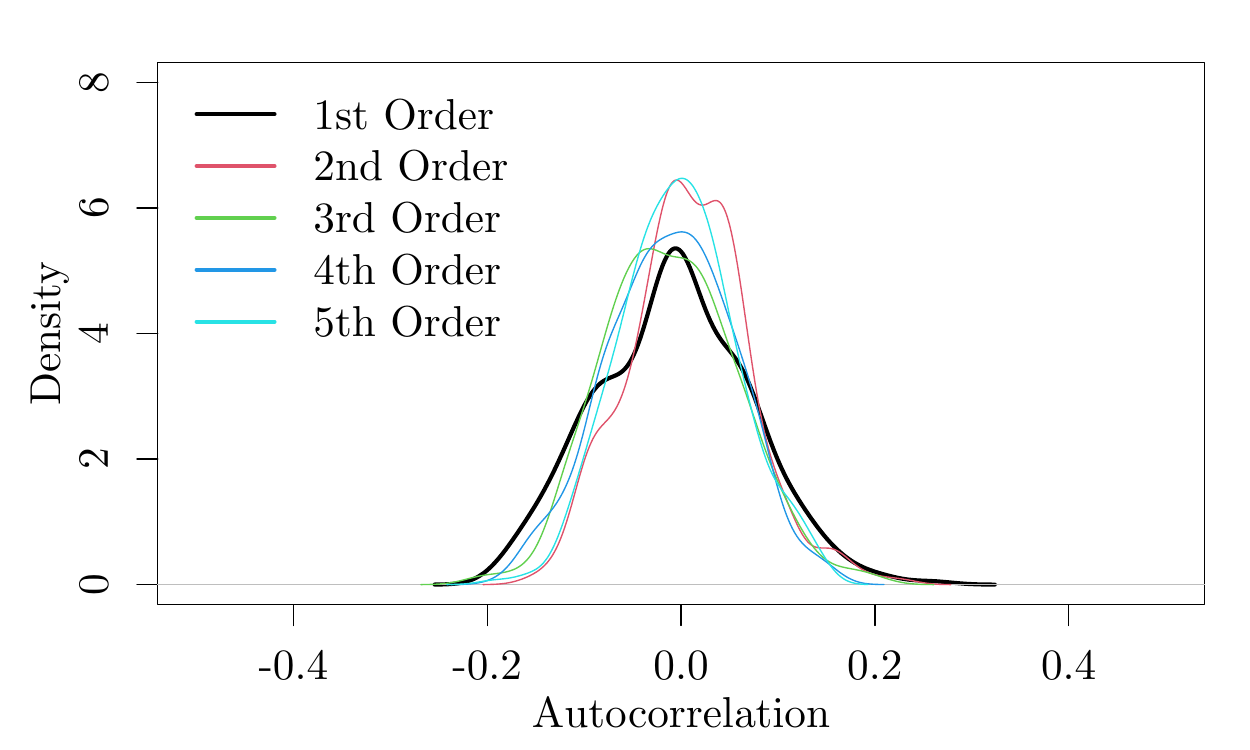} 
\includegraphics[width =0.49\linewidth]{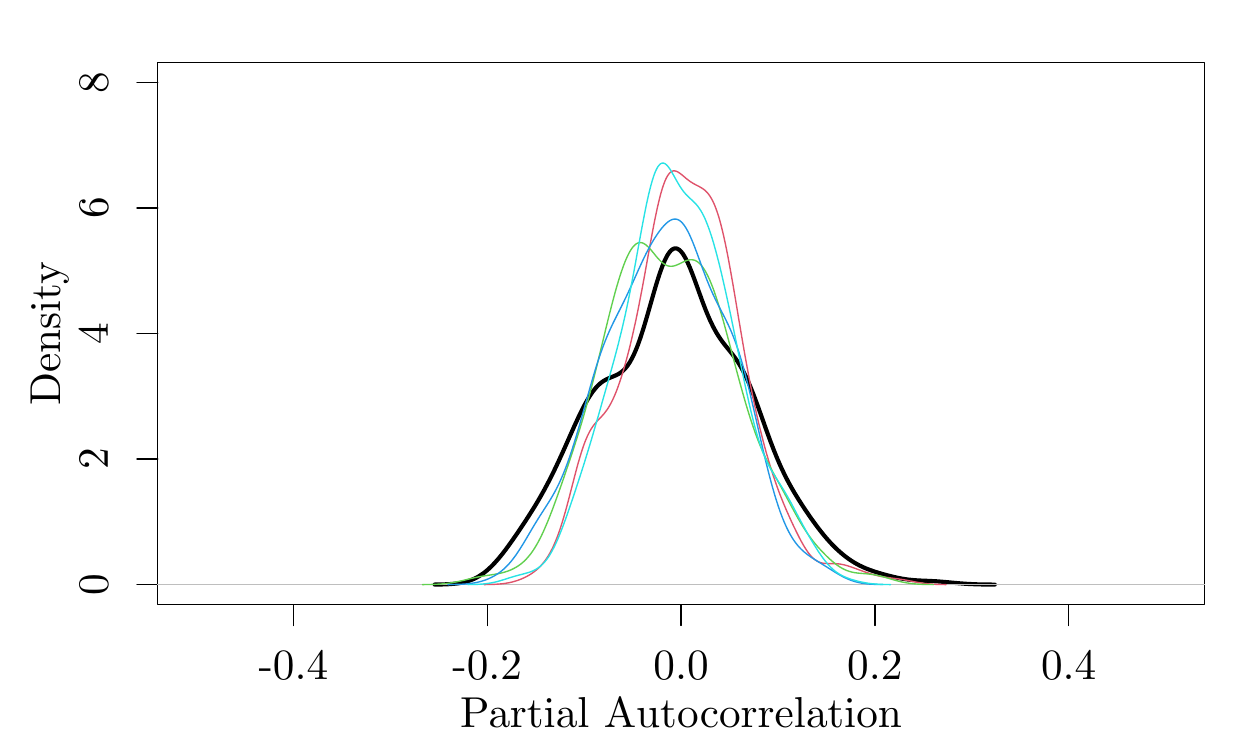}
\caption{Residual autocorrleation function and partial autocorrelation function for the stock market data.}
\label{Fig:STOCK_ACF}
\end{figure}

\begin{figure}%[hbt!]
\centering
\includegraphics[width =0.49\linewidth]{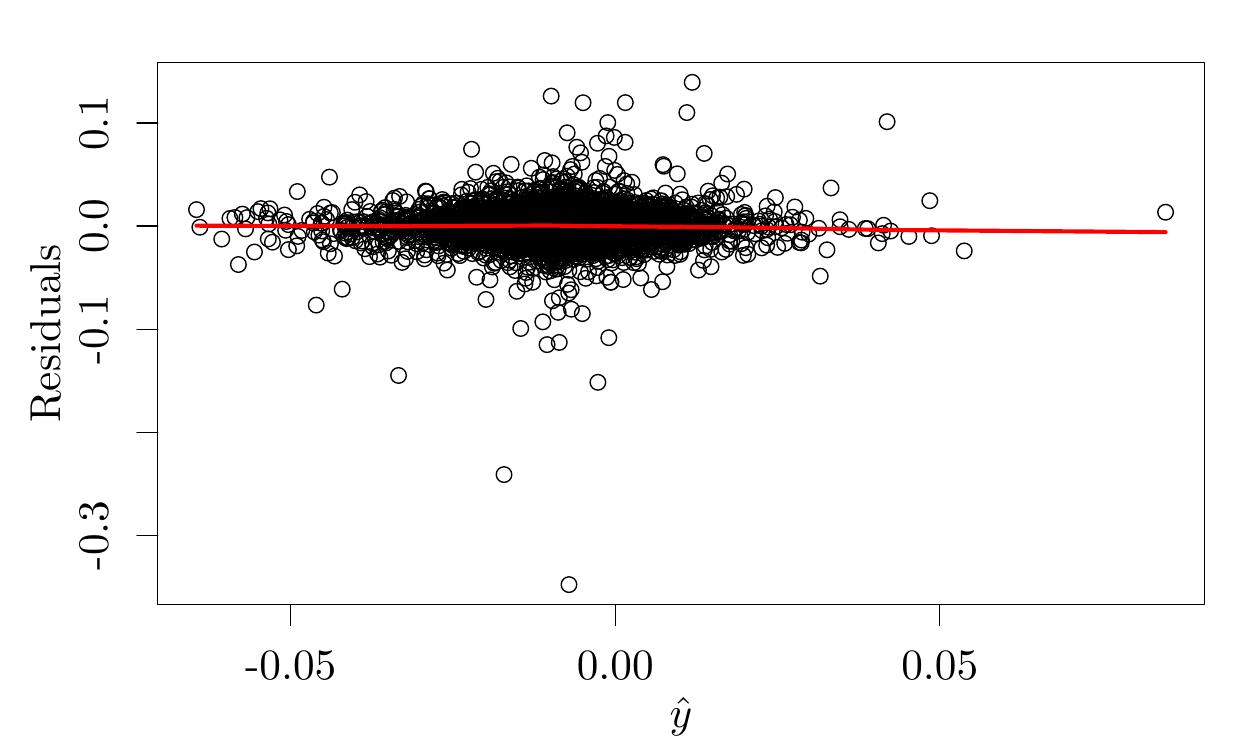} 
\includegraphics[width =0.49\linewidth]{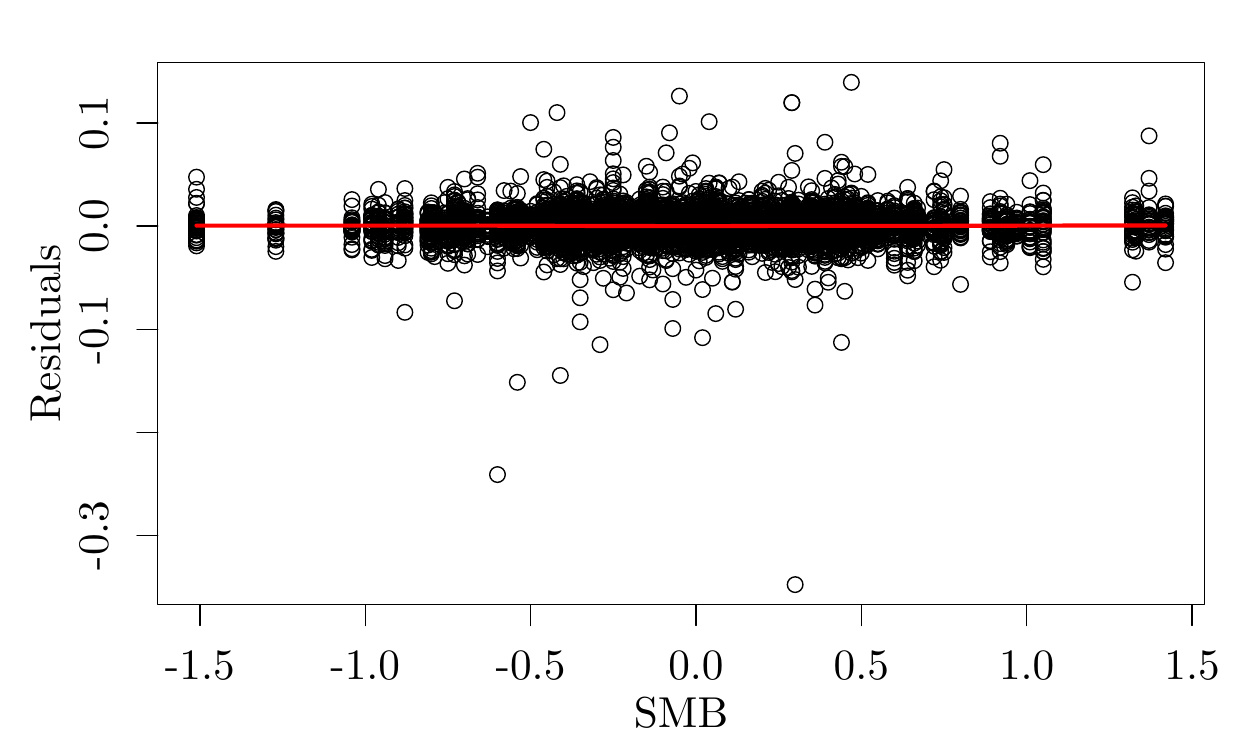}
\includegraphics[width =0.49\linewidth]{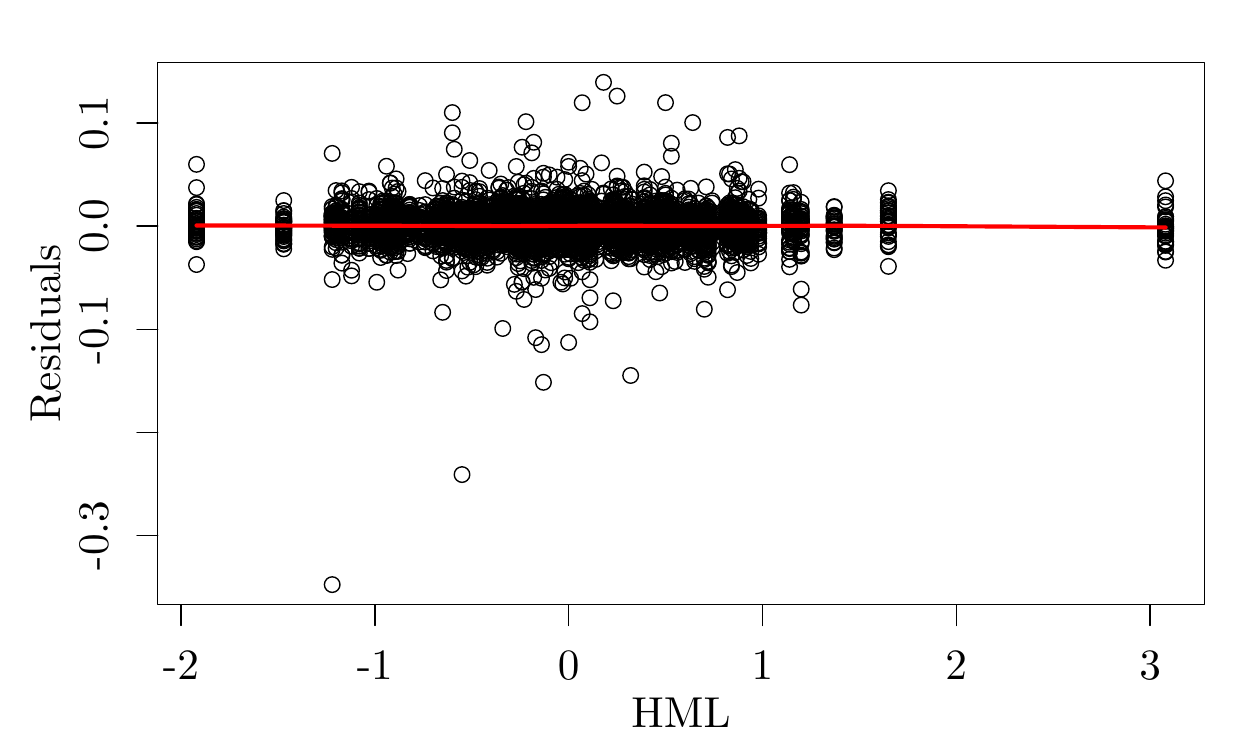}
\includegraphics[width =0.49\linewidth]{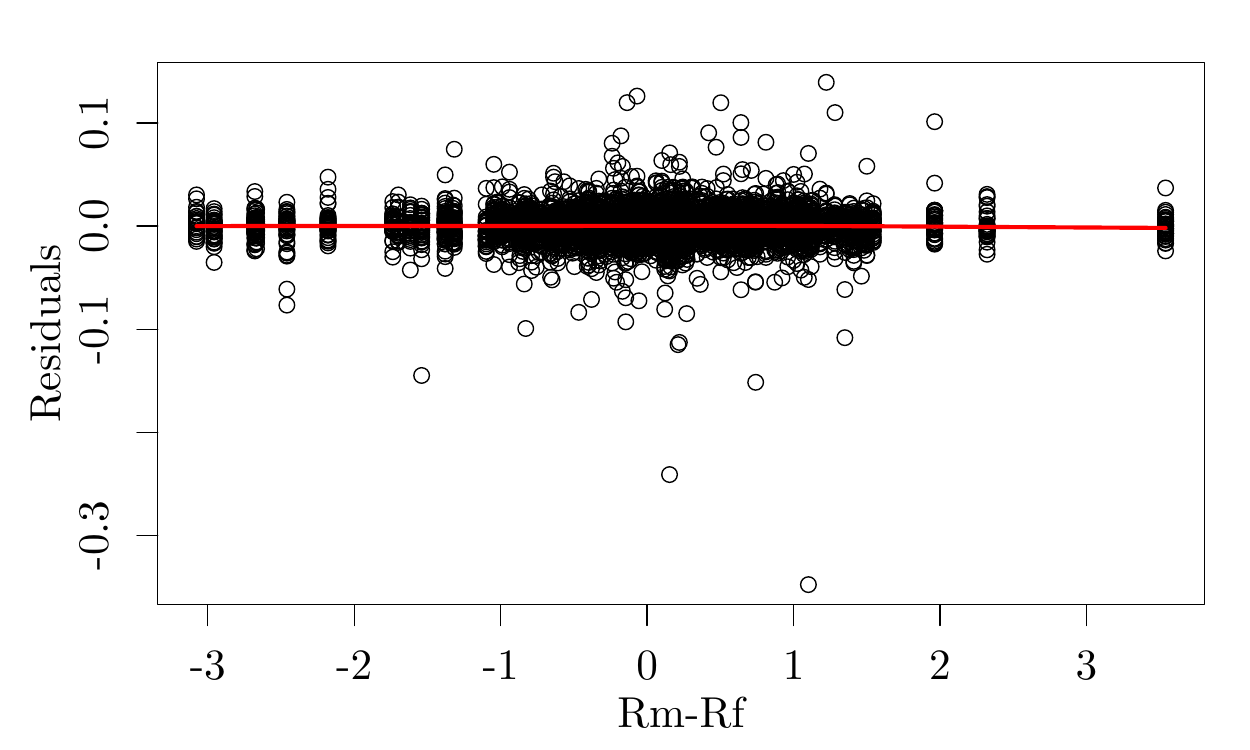}
\caption{Plots of the fitted values and each covariate against the residuals for the stock market data. The {\color{red}{\textbf{red}}} line corresponds to the LOWESS smooth.}
\label{Fig:STOCK_residuals}
\end{figure}

While the Gaussian assumption was tenable for the \COVID data, Figure \ref{Fig:STOCK_qq} shows that this is not the case for the stock market data. There is evidence of considerably heavier tails than Gaussianity. To address this issue we fit a non-paranormal model based on transforming the data into $f(\epsilon_i):=(f_1(\epsilon_{i1}),\ldots,f_p(\epsilon_{ip}))$, where $\hat{f}$ was estimated using the \textit{R} package \texttt{huge} \citep{zhao2012huge}. Figure \ref{Fig:STOCK_qq_transformed} shows a histogram and Q-Q-normal plot for $f(\epsilon_i)$, where the Gaussian assumption is more tenable.

\begin{figure}%[hbt!]
\centering
\includegraphics[width =0.49\linewidth]{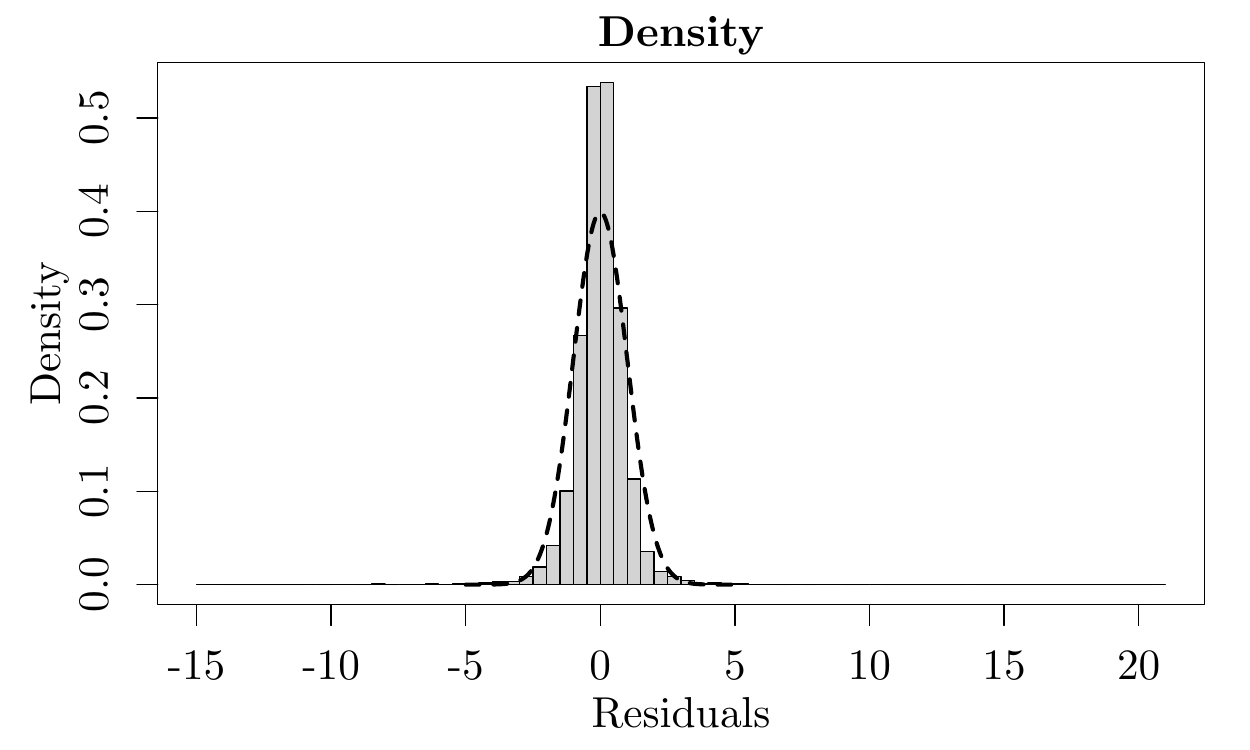} 
\includegraphics[width =0.49\linewidth]{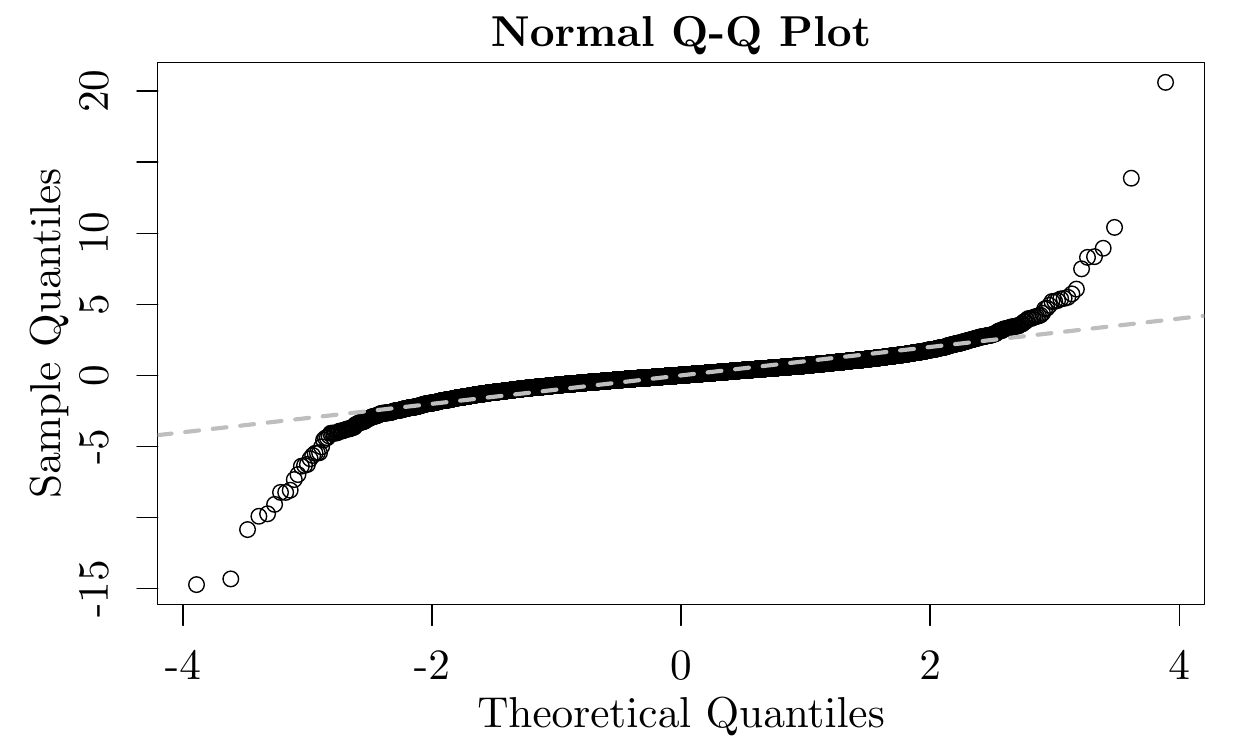}
\caption{Stock market data. \textbf{Left} Histogram of the standardised residuals compared with the standard Gaussian density. \textbf{Right} Q-Q Normal plot of the standardised residuals.}
\label{Fig:STOCK_qq}
\end{figure}

\begin{figure}%[hbt!]
\centering
\includegraphics[width =0.49\linewidth]{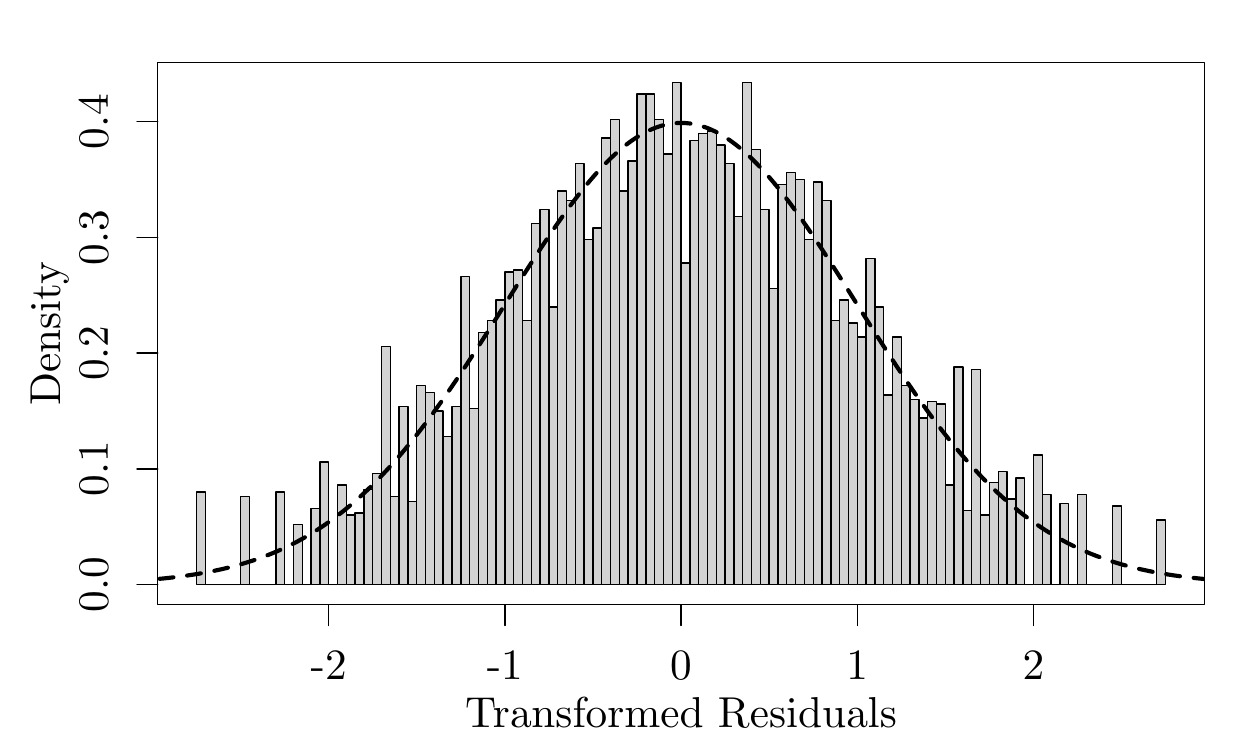} 
\includegraphics[width =0.49\linewidth]{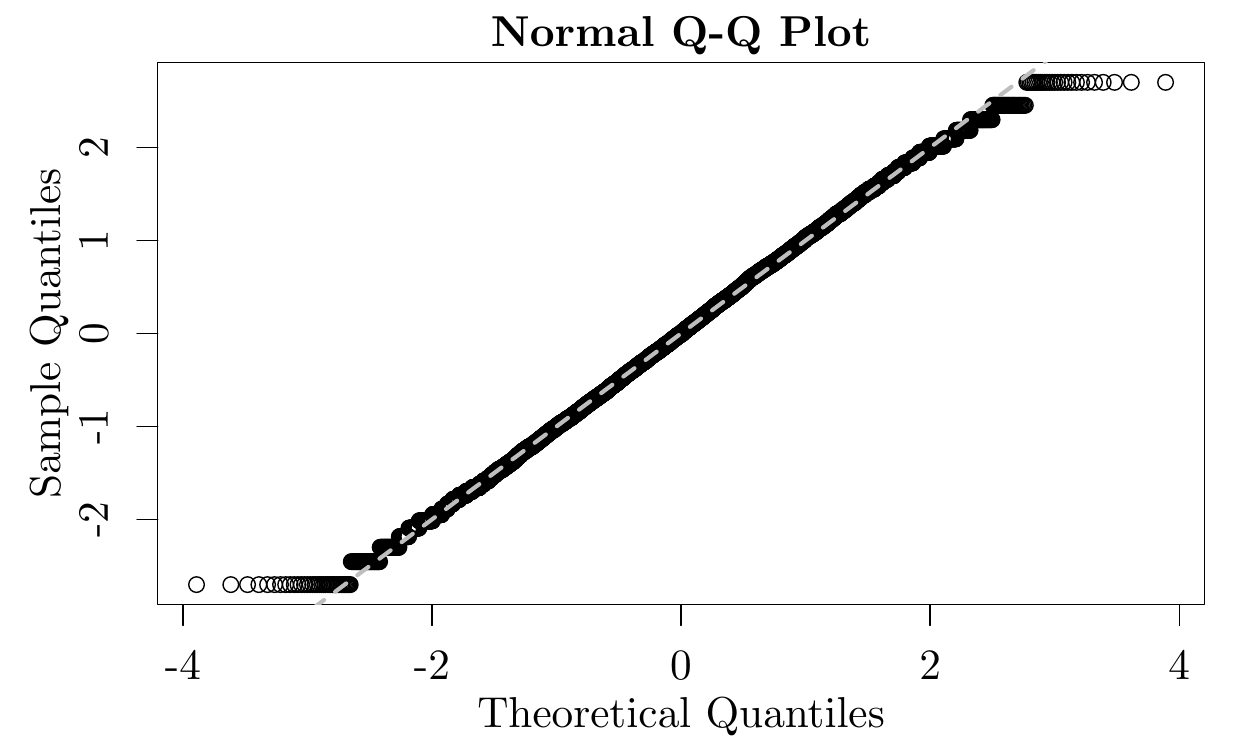}
\caption{Stock market data. \textbf{Left} Histogram of the transformed residuals compared with the standard Gaussian density. \textbf{Right} Q-Q Normal plot of the transformed residuals.}
\label{Fig:STOCK_qq_transformed}
\end{figure}

\subsection{The network predictors}
\label{ssec:NetworkLinearity_STOCK}

The price data from CRSP is arranged by TIC, a unique stock identifier, while the risk measures are arranged by CIK, a unique company identifier. Any two stocks associated with the same compnay had the same risk scores.

Based on the construction of \citet{bakerPolicyNewsStock2019}, we divided the 37 risk factors into two categories: the economic risks (containing 17 risks) and the policy risks (containing 20 risks) and standardised the sentence coutsn by the total number of sentences in the 10-K fillings. Then, for each risk type, we centered the $\log(1 + counts)$ and evaluated the Pearson's correlation between all pairs of companies to obtain two network matrices $E_{pears}$ and $P_{pears}$.

Figure \ref{Fig:GLASSO_vs_Stock2} demonstrates that for both networks there appears to be an increased chance of having positive partial correlation if the two firms have highly correlated risk factors. Figure \ref{fig:lincheck_STOCK} demonstrates that no further transformation of the networks is required to satisfy the network \GLASSO assumption of linearity.

%\jack{Edit the text here for the change in the plots - This plot is now in the paper}
%The fitted spike-and-slab priors illustrated at the bottom of Figure \ref{Fig:GLASSO_vs_Stock} show how the relationship between the networks and the graphical model is captured. The estimates parameterising these spike-and-slab distributions are available in Table \ref{Tab:SS_Stock_CIs}. We see that as the economics network increases, the slab location increases, the slab scale increases, and also the probability of being in the slab increases. Alternatively, when accounting for the economic network, an increase in the policy network corresponds to an increased probability of being in the slab, but not an increase in the slab location or scale.

\begin{figure}%[hbt!]
\centering
\includegraphics[width =0.49\linewidth]{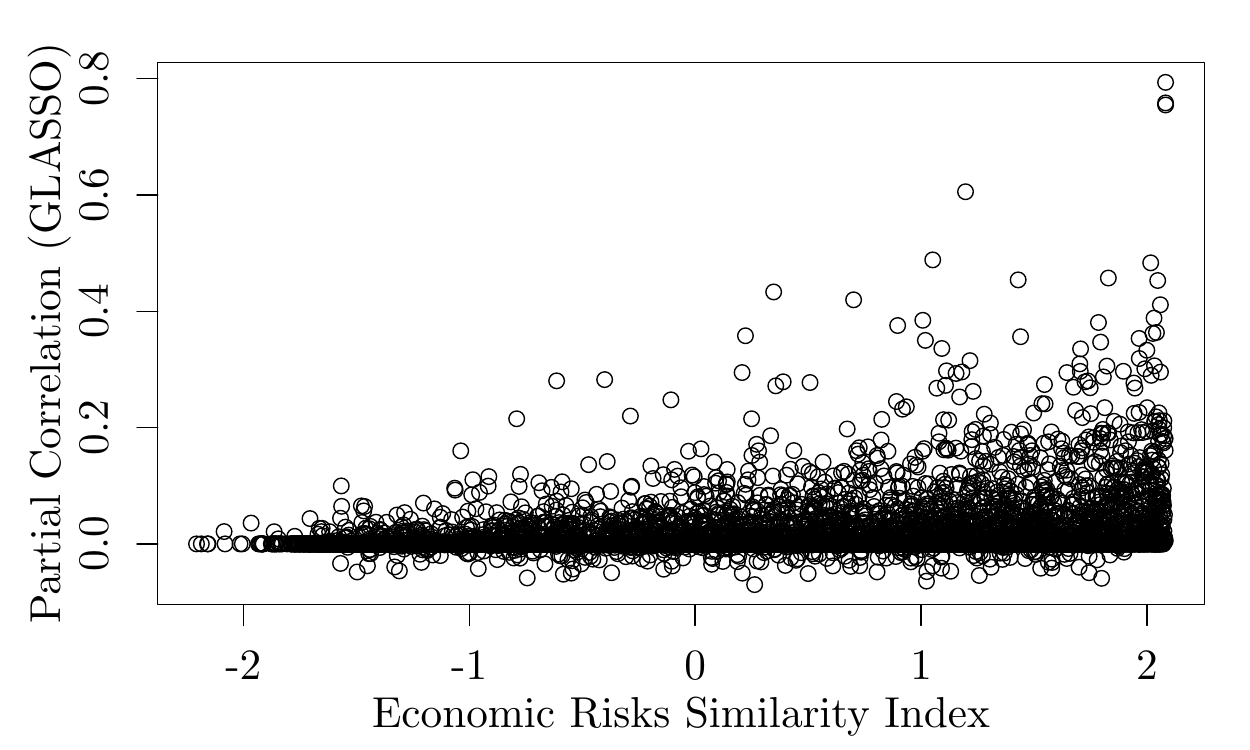}
\includegraphics[width =0.49\linewidth]{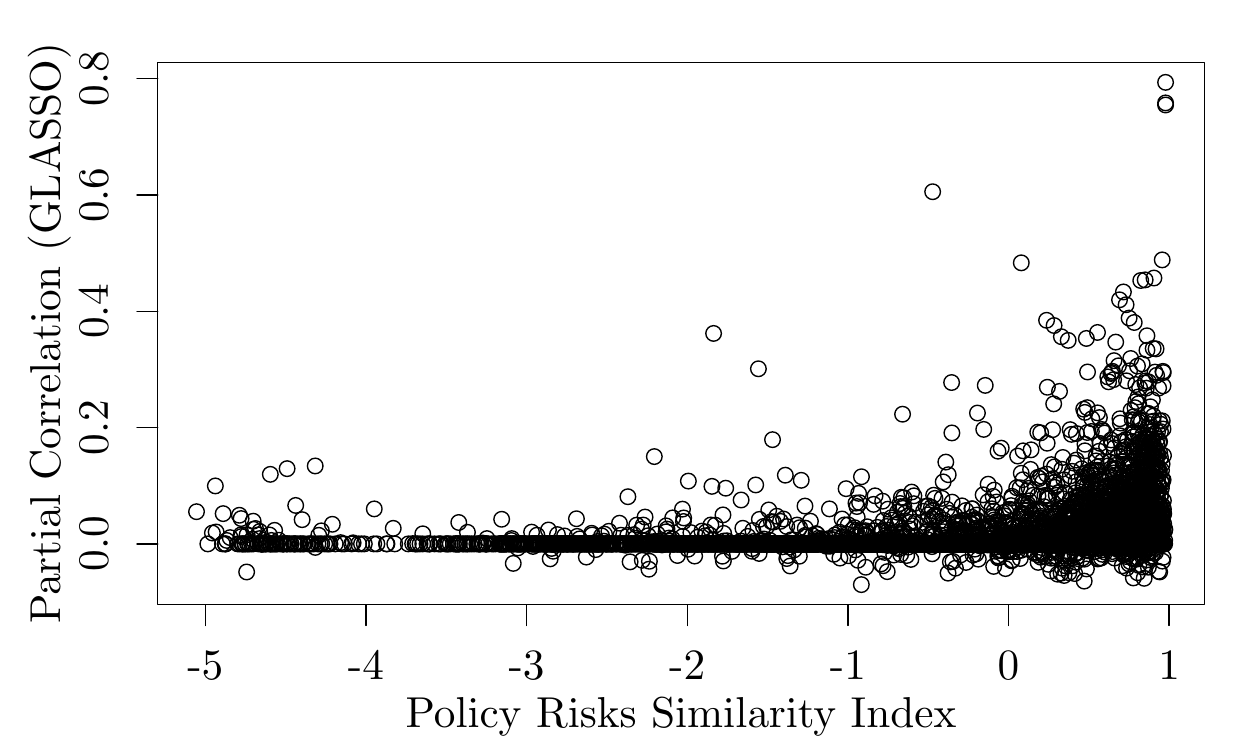}
\caption{Residual partial correlations of the stock market excess returns across firms vs Economy risk (left) and Policy risk (right). Partial correlations were estimated with \GLASSO, with penalization parameter set via \BIC.  % \EBIC $\gamma_{\EBIC} = 0.5$. 
%Bottom panel: fitted spike-and-slab distributions and fitted partial correlations estimated with network spike-and-slab model. %Bottom panel: prior slab probability as a function of both networks.
}
\label{Fig:GLASSO_vs_Stock2}
\end{figure}

\begin{figure}%[hbt!]
\centering
\includegraphics[width =0.49\linewidth]{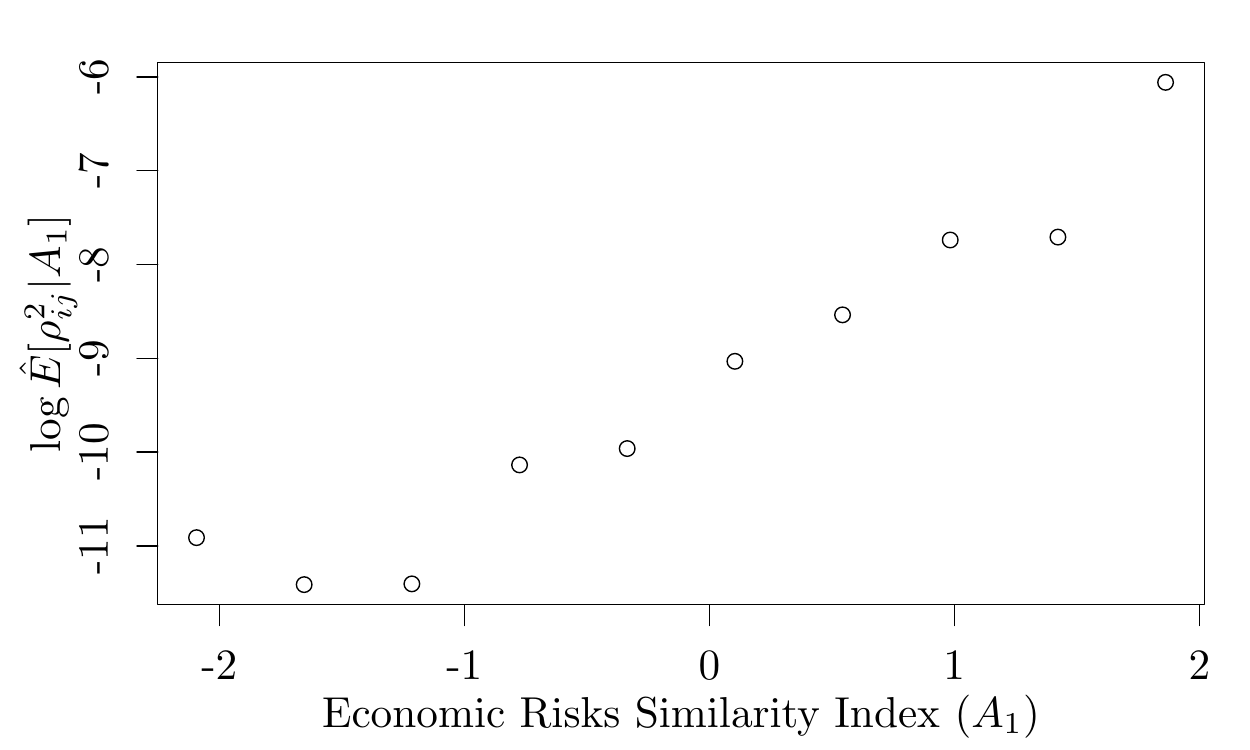}
\includegraphics[width =0.49\linewidth]{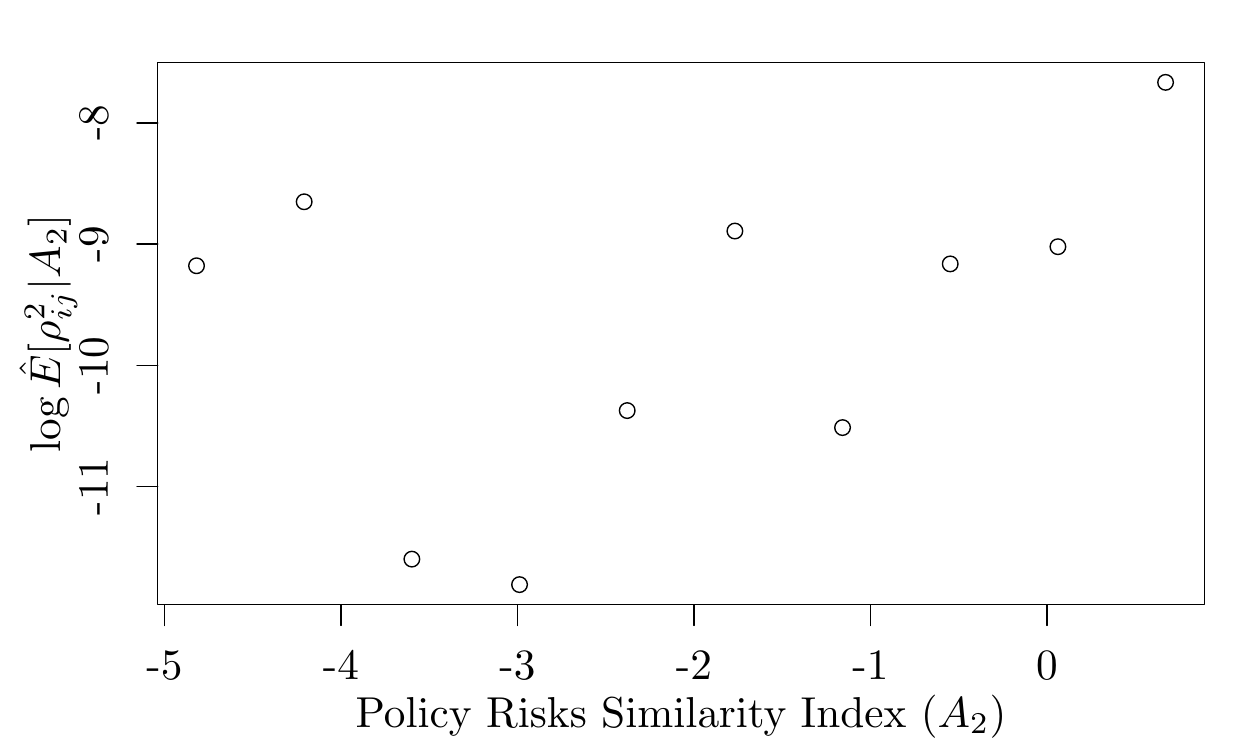}
%\includegraphics[width =0.49\linewidth]{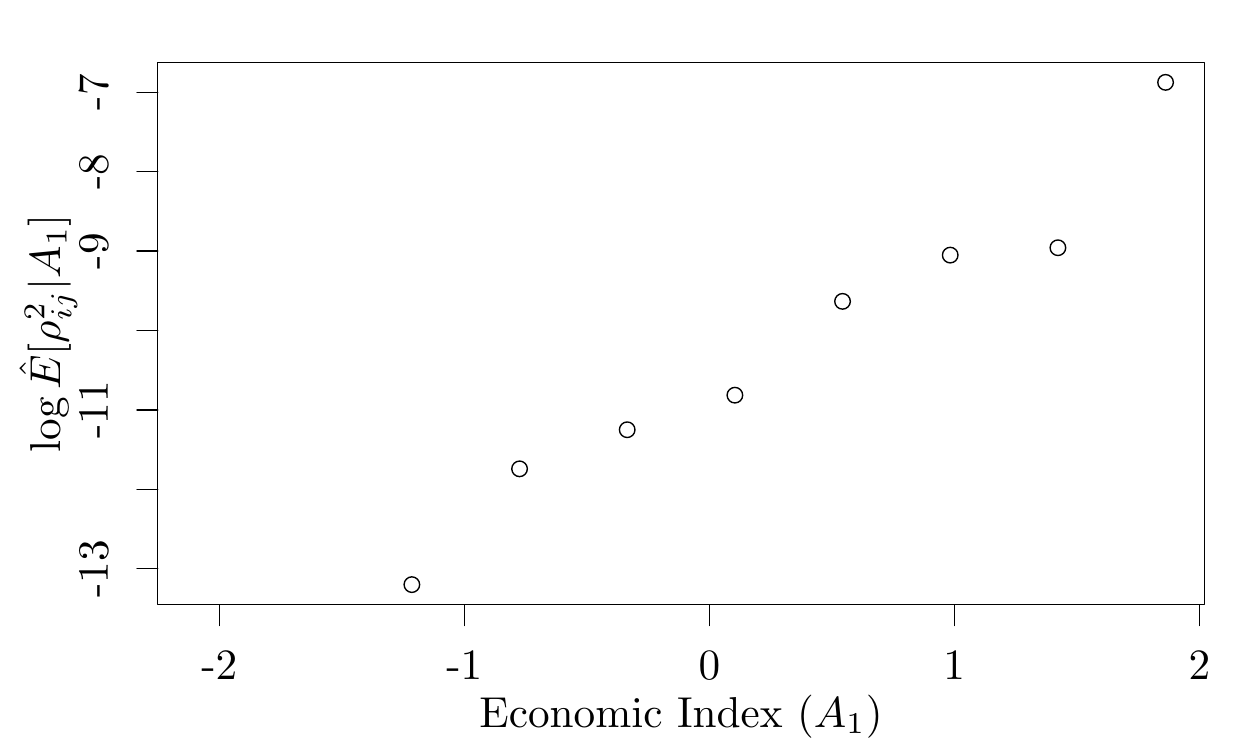}
%\includegraphics[width =0.49\linewidth]{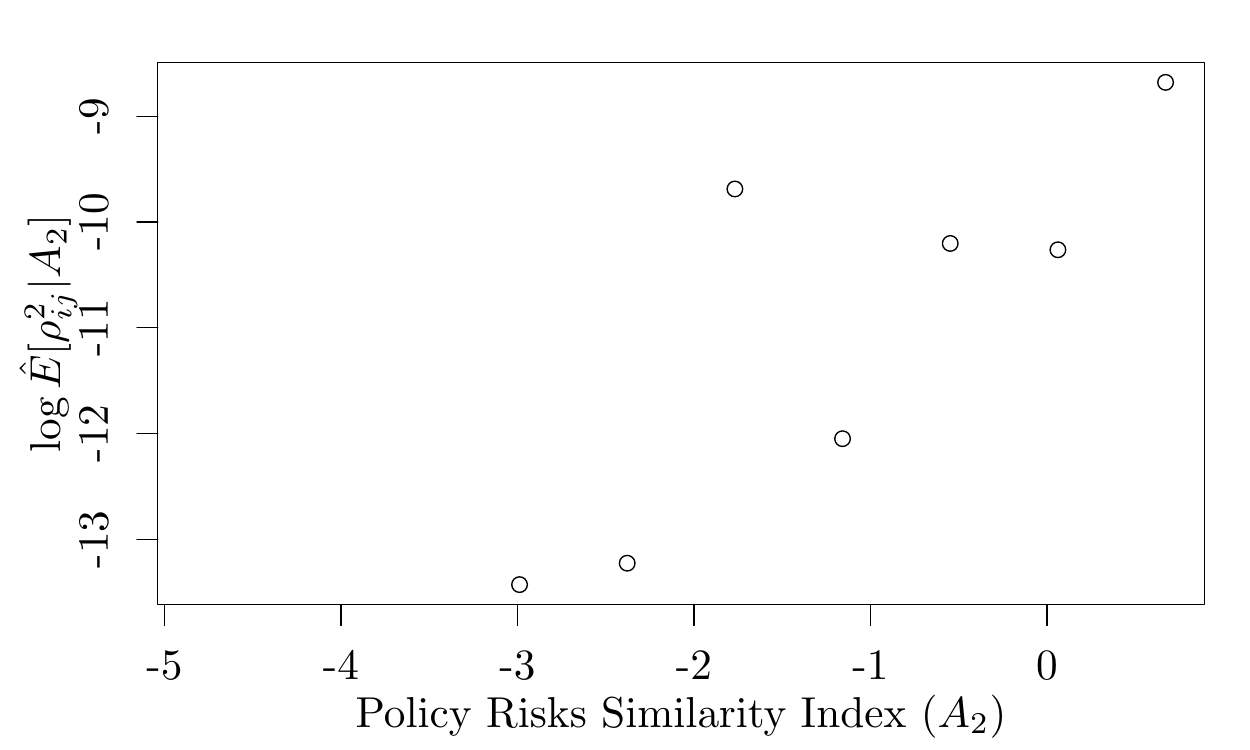}
%\caption{Assessing the linear relation between $\log E(\hat{\rho}_{jk}^2)$ the network matrices , where $\hat{\rho}_{jk}$ is the \GLASSO estimate. The scatters represent the mean $\log E(\rho_{jk}^2)$ for each of 10 equispaced bins defined by each network}
\caption{Assessing the linear relation between $\log \mathbb{E}[\hat{\rho}_{jk}^2| A]$ and the network matrices, where $\hat{\rho}_{jk}$ is the \GLASSO estimate. The points represent the log-mean values of $\hat{\rho}_{jk}^2$ within 10 equispaced bins defined for each network.
}
\label{fig:lincheck_STOCK}
\end{figure}

\subsection{Supplementary figures}

Table \ref{Tab:stock_edge_network_SS} summarises the estimated graphical model under the network spike-and-slab model using a posterior slab probability threshold of $> 0.5$ and $> 0.95$. The number of edges estimated under both slab probability threshold is smaller than the number of edges estimated under the network \GLASSO models. Under the 0.95 slab probability threshold, the estimated number of edges is more conservative.

\begin{table}[]
\centering
\caption{Stock market data: Edge counts of the network spike-and-slab model when declaring an edge for posterior slab probability $> 0.5$ and $> 0.95$}
\label{Tab:stock_edge_network_SS}
\begin{tabular}{ccccc}
\toprule{}     &     Edges ($> 0.5$) & Non-Edges ($> 0.5$)  &            Edges ($> 0.95$) & Non-Edges ($> 0.95$)   \\
\midrule Network SS  &             377 &                         66418 &                           189 & 66606 \\
\bottomrule
\end{tabular}
\end{table}

\subsection{Results using the \EBIC}{\label{Sec:EBIC_STOCK}}

Table \ref{tab:results_covid_0.5} presents results investigating the stability of our stock market data analysis to selecting hyperparameters using the \EBIC with $\gamma_{\EBIC} = 0.5$ rather than the \BIC. The \EBIC continues to estimate sparser networks than the \BIC, but to the detriment of the out-of-sample test set score. Importantly, we see that the improvement of the network \GLASSO methods over standard \GLASSO is still apparent when using the \EBIC selection criteria.

%OLD p = 200
%\begin{table}[ht]
%\centering
%%\caption{Comparison of four models for the stock market data when using the \EBIC ($\gamma_{\EBIC} = 0.5$) to learn the network hyperparameters. $A_1$ and $A_2$: networks defined by the Pearson's correlation between vectors of centered $\log(1 + count)$ for $ Economic$ and $Policy$ risk factor term counts. \EBIC values account for the extra hyperparameters learned in the network \GLASSO model. Test set score obtained using 10-fold cross-validation}
%\caption{Four models for the stock market data when using the \EBIC ($\gamma_{\EBIC} = 0.5$) to learn the network hyperparameters. $A_1$ is the Economic network, $A_2$ the Policy network. \EBIC values account for the extra hyper-parameters in the network \GLASSO models. 
%10-fold is the 10-fold cross-validation log-likelihood.}
%\begin{tabular}{ccccccc}
%  \hline
%  Method & \EBIC & $\hat{\beta}_0$ & $\hat{\beta}_1$ & $\hat{\beta}_2$ & Edges & 10-fold\\ 
%   \hline
%   \GLASSO & 50008.018 & -0.868 &  &  &   93 & -7020.84 \\ 
%   Network \GLASSO - $A_1$ & 49632.192 & -0.237 & -0.605 &  &  109 & \textbf{-6993.37} \\ 
%   Network \GLASSO - $A_2$ & 49723.101 & 1.632 &  & -1.868 &  100 & -7022.70 \\ 
%   Network \GLASSO - $A_1$ \& $A_2$ & \textbf{49587.239} & 0.889 & -0.361 & -1.028 &   78 & -7025.53 \\ 
%   \hline
%\end{tabular}
%\label{tab:results_stock_0.5}
%\end{table}

\begin{table}[ht]
\centering
\caption{Four models for the stock market data when using the \EBIC ($\gamma_{\EBIC} = 0.5$) to learn the network hyperparameters. $A_1$ is the Economic network, $A_2$ the Policy network. \EBIC values account for the extra hyper-parameters in the network \GLASSO models. 
10-fold is the 10-fold cross-validation log-likelihood. }
\begin{tabular}{ccccccc}
  \hline
  Method & \EBIC & $\hat{\beta}_0$ & $\hat{\beta}_1$ & $\hat{\beta}_2$ & Edges & 10-fold\\ 
   \hline
   \GLASSO & 88588.75 & -0.9106 &  &  &  616 & -494.781 \\ 
   Network \GLASSO - $A_1$ & 86140.75 & 3.350 & -2.604 &  &  572 & -493.508 \\ 
   Network \GLASSO - $A_2$ & 87675.87 & 0.531 &  & -2.081 &  732 & -494.090 \\ 
   Network \GLASSO - $A_1$ \& $A_2$ & \textbf{84150} & 10.289 & -4.020 & -5.718 &  468 & \textbf{-492.872} \\ 
   \hline
\end{tabular}
\label{tab:results_stock_0.5}
\end{table}

%\newpage
\section{\texttt{Stan} vs \texttt{NumPyro}}
\label{sec:stan_vs_numpyro}

%The large number of parameters to be estimated in Gaussian graphical models often makes Bayesian inference difficult. 
We estimated our network spike-and-slab models using the No-U-Turn Sampler (NUTS) \citep{hoffman2014no}, an extension of Hamiltonian Monte Carlo (HMC, \citealt{duane1987hybrid}) that automates the setting of the step-size in the Hamiltonian discretisation. Two probabilistic programming implementations of NUTS are \texttt{Stan} \citep{carpenter:2017} and \texttt{NumPyro} \citep{bingham:2019,phan:2019}. We provide implementations of our algorithm in both languages, but for our experiments, we found \texttt{NumPyro}'s ability to take advantage of parallel computing for automatic differentiation provided a considerable speed up. 

We illustrate this using one of our simulated examples from Section 4. We consider network matrix $A_{0.85}$, $n = 100$ and $p = 10$ and $p = 50$. We ran both \texttt{Stan} and \texttt{NumPyro} for 2000 warm-up iterations and 2000 sampling iterations. Table \ref{tab:stan_v_numpyro} compares the time taken to sample and the effective sample size (\ESS) of the resulting sample averaged across 10 repeat datasets. We present the \ESS as separately averaged across the $\rho$ model parameters and the network hyperparameter $\eta$. We see that both methods produce similar \ESS but that \texttt{NumPyro} does so over six times faster. 

We also take this opportunity to demonstrate how efficient the network \GLASSO is when implemented as a special case of the \GOLAZO algorithm \citep{lauritzen:2020}. For the same datasets considered above, we implement the network \GLASSO using $50\times 50$ grid search to estimate the network hyperparameters. We see that the \GOLAZO algorithm takes a fraction of the time to run as the Bayesian implementation even when using a rudimentary grid-search optimisation scheme.

%\begin{table}[ht]
%\centering
%\caption{Comparison of time taken for network \GLASSO implemented using the \GOLAZO algorithm and the network spike-and-slab sampling algorithms in \texttt{Stan} and \texttt{NumPyro}. \jack{This was before the sufficient statistics}}
%\begin{tabular}{cccc}
%  \hline
%   & Time (s) & ESS $\rho$'s & ESS $\eta$'s \\ 
%   \hline
%   \GOLAZO & 8.58 & - & - \\
%   \texttt{Stan} & 537.05 & 1089 & 555 \\ 
%   \texttt{NumPyro} & 47.15 & 1056 & 601 \\ 
%   \hline
%\end{tabular}
%\label{tab:stan_v_numpyro}
%\end{table}

\begin{table}[ht]
\centering
\caption{Comparison of time taken for network \GLASSO implemented using the \GOLAZO algorithm and the network spike-and-slab sampling algorithms in \texttt{Stan} and \texttt{NumPyro}.}
\begin{tabular}{cccc}
  \hline
   $p = 10$& Time (s) & \ESS $\rho$'s & \ESS $\eta$'s \\ 
   \hline
   \GOLAZO & 14.94 & - & - \\
   \texttt{Stan} & 184.93 & 855 & 373 \\ 
   \texttt{NumPyro} & 28.99 & 977 & 522 \\ 
   \hline
\end{tabular}
\vspace{2cm}
\begin{tabular}{cccc}
  \hline
   $p = 50$& Time (s) & \ESS $\rho$'s & \ESS $\eta$'s \\ 
   \hline
   \GOLAZO & 312.78 & - & - \\
   \texttt{Stan} & 7162.11 & 1663 & 268 \\ 
   \texttt{NumPyro} & 1133.057 & 1604 & 293 \\ 
   \hline
\end{tabular}
\label{tab:stan_v_numpyro}
\end{table}

Lastly, above we limited \texttt{NumPyro}'s access to only 6 cores on one machine. Using more cores, for example on a GPU, provides the potential for \texttt{NumPyro} to achieve even greater speed-ups for higher dimensional problems beyond the simple one considered here.

\end{document}